\documentclass[11pt]{memoir}

\pagestyle{plain}
\setlrmarginsandblock{1.5in}{*}{*}
\checkandfixthelayout

\setcounter{tocdepth}{2}
\setcounter{secnumdepth}{3}
\counterwithout{section}{chapter}
\counterwithin{equation}{section}
\counterwithin{figure}{section}

\makeatletter
\renewcommand{\@memmain@floats}{%
  \counterwithin{figure}{section}
  \counterwithin{table}{section}}
\makeatother

\firmlists


\usepackage{hyperref}
\usepackage{memhfixc} 
\usepackage[numbered]{bookmark} 
\usepackage[all]{hypcap} 
\usepackage[algo2e,algosection,tworuled,noend,noline]{algorithm2e}
\usepackage[charter]{gacs}
\usepackage{gacs-algo} 

\hyphenation{com-plex-ity des-tin-at-ion co-lon-ies}


\theoremstyle{definition} 

\renewcommand{\le}{\leq}
\renewcommand{\ge}{\geq}

\renewcommand{\vek}[1]{\mathbf{#1}}
\newcommand{\fld}[1]{\ensuremath{\textit{#1\/}}}
\newcommand{\rul}[1]{\ensuremath{\texttt{#1}}}

\newcommand{\instr}{\item[]\hspace{-0.6em}}

\def\B{B}  
\def\U{U}
\def\V{V}

\newcommand{\va}{\vek{a}} 
\newcommand{\blank}{\text{\textvisiblespace}}
\newcommand{\Configs}{\mathrm{Configs}}

\newcommand{\E}{E} 

\makeatletter
\if@charter\else\fi
\makeatother
\newcommand{\escno}{q}
\newcommand{\F}{F}

\def\G{G} 
\newcommand{\h}{h} 
\newcommand{\hc}{\hat h}
\newcommand{\vhc}{\vek{\hat h}}
\newcommand{\Int}{\mathrm{Int}} 
\newcommand{\Noise}{\mathit{Noise}}
\newcommand{\passno}{\pi}
\newcommand{\pos}{\mathrm{pos}}
\newcommand{\curcell}{\textrm{cur-cell}}
\newcommand{\Q}{Q} 
\newcommand{\Rg}{R} 
\newcommand{\s}{s} 
\renewcommand{\S}{S} 
\newcommand{\Tu}{T} 
\newcommand{\Tus}{T^{*}}
\newcommand{\Z}{Z} 
\newcommand{\z}{z} 

\newcommand{\tape}{\mathrm{tape}}
\newcommand{\Interpr}{\mathrm{Interpr}} 
\newcommand{\Decode}{\mathrm{Decode}}
\newcommand{\Encode}{\mathrm{Encode}}


\newcommand{\Histories}{\mathrm{Histories}}


\newcommand{\PadLen}{\mathit{PadLen}} 

\newcommand{\Left}{\text{left}}
\newcommand{\Right}{\text{right}}

\renewcommand{\r}{\vek{r}} 
\newcommand{\x}{\vek{x}} 
\newcommand{\y}{\vek{y}} 

\newcommand{\Addr}{\fld{Addr}}
\newcommand{\Age}{\fld{Age}} 
\newcommand{\Base}{\fld{Base}}
\newcommand{\BigDigression}{\fld{BigDigression}}
\newcommand{\Core}{\fld{Core}}
\newcommand{\Drift}{\fld{Drift}}
\newcommand{\FrontAddr}{\fld{FrontAddr}}
\newcommand{\Half}{\fld{Half}} 
\newcommand{\Heal}{\fld{Heal}} 
\newcommand{\Hold}{\fld{Hold}}
\newcommand{\Index}{\fld{Index}}
\newcommand{\Info}{\fld{Info}}
\newcommand{\Kind}{\fld{Kind}}
\newcommand{\Output}{\fld{Output}}
\newcommand{\Payload}{\fld{Payload}}
\newcommand{\Pass}{\fld{Pass}} 
\newcommand{\Pos}{\fld{Pos}}
\newcommand{\Rebuild}{\fld{Rebuild}} 
\newcommand{\Replace}{\fld{Replace}} 
\newcommand{\Sweep}{\fld{Sweep}} 
\newcommand{\Tape}{\fld{Tape}}
\newcommand{\Work}{\fld{Work}} 


\newcommand{\Bad}{\mathrm{Bad}}
\newcommand{\Vacant}{\mathrm{Vac}}
\newcommand{\New}{\mathrm{New}}
\newcommand{\new}{\mathrm{new}}

\newcommand{\Stem}{\mathrm{Stem}}
\newcommand{\Booting}{\mathrm{Booting}}
\newcommand{\Bridge}{\mathrm{Bridge}}

\newcommand{\Member}{\mathrm{Member}}
\newcommand{\Outer}{\mathrm{Outer}}

\newcommand{\Compute}{\rul{Compute}}
\newcommand{\rHeal}{\rul{Heal}}
\newcommand{\MoveFront}{\rul{MoveFront}}
\newcommand{\ProcessPayload}{\rul{ProcessPayload}}
\newcommand{\rBoot}{\rul{Boot}}
\newcommand{\rRebuild}{\rul{Rebuild}}
\newcommand{\rRebuildHeal}{\rul{RebuildHeal}}
\newcommand{\rSimulate}{\rul{Simulate}}

\newcommand{\WriteProgramBit}{\rul{WriteProgramBit}}

\newcommand{\True}{\mathrm{True}}
\newcommand{\False}{\mathrm{False}}
\newcommand{\cns}[1]{c_{\textrm{\upshape #1}}}
\newcommand{\CEsc}{\cns{esc}}
\newcommand{\CBoundaries}{\cns{bndr}}
\newcommand{\CMarg}{\cns{marg}}

\newcommand{\CRebuild}{\cns{Rebuild}}

\newcommand{\CRelief}{\cns{relief}}

\newcommand{\CSpill}{\cns{spill}}
\newcommand{\CStain}{\cns{stain}}

\begin{document}

\title{A reliable Turing machine}

\author{Ilir \c{C}apuni 
\\ University of Montenegro
\\ ilir@bu.edu
\and
Peter G\'acs
\\ Boston University
\\ gacs@bu.edu
}
\maketitle

\begin{abstract}
  We consider computations of a Turing machine subjected to noise.
  In every step, the action (the new state and the new content of the observed
  cell, the direction of the head movement) can differ from that prescribed by
  the transition function with a small probability (independently of previous
  such events).  We construct a universal 1-tape Turing machine that for a low enough
  (constant) noise probability performs arbitrarily large computations.  For
  this unavoidably, the input needs to be encoded---by a simple code depending
  on its size.  The work uses a technique familiar from reliable cellular
  automata, complemented by some new ones.
\end{abstract}

\newpage


\section{Introduction}

This work addresses a question from the area of ``reliable computation with unreliable components''.
A certain class of machines is chosen (like a Boolean circuit, cellular automaton, Turing machine).
It is specified what kind of faults (local in space and time)
are allowed, and a machine of the given
kind is built that---paying some price in performance---carries out essentially the same
task as any given machine of the same kind without any faults would.

We confine attention to \emph{transient}, \emph{probabilistic} faults:
the fault occurs at a given time but no component is damaged permanently,
and faults occur independently of each other, with a bound on their probability.
This is in contrast to bounding the \emph{number} of faults, allowing them to be set by an adversary.
Historically the first result of this kind is~\cite{VonNeum56}, which for each Boolean
circuit \( C \) of size \( n \) constructs a new circuit \( C' \)
of size \( O(n\log n) \) that performs the same task as \( C \) with a (constant)
high probability, even though each gate of
\( C' \) is allowed to fail with some (constant) small probability.

Cellular automata as a model have several theoretical advantages over
Boolean circuits, and results concerning
reliable computation with them have interest also from a purely mathematical or 
physical point of view (non-ergodicity).
The simple construction in~\cite{GacsReif3dim88} gives, for any 1-dimensional
cellular automaton \( A \) a 3-dimensional cellular automaton \( A' \) 
that performs the same task as \( A \) with high probability, even though each cell of
\( A' \) is allowed to fail in each time step with some constant small probability.
(A drawback of this construction is the requirement of synchronization: all cells of \( A' \) must
update simultaneously.)
Reliable cellular automata in less than 3 dimensions can also be constructed (even
without the synchrony requirement), but so far only at a
steep increase in complexity (both of the construction and the proof).
The first such result was~\cite{Gacs1dim86}, relying on some ideas proposed in~\cite{Kurd78}.

Here, the reliability question will be considered for a \emph{serial}
computation model---a 1-tape Turing machine---as opposed to parallel ones like
Boolean circuits or cellular automata.
There is a single elementary processing unit (the \df{active} unit)
interacting with a memory of unlimited size.
The error model needs to be relaxed.
Allowing each memory component to fail in each time step with constant probability
makes, in the absence of parallelism, reliable computation seem impossible.
Indeed, while in every step some constant fraction of the memory gets corrupted,
the active unit can only correct a constant \emph{number} of them per step.

\begin{remark}
  The choice a single tape for the Turing machine seems unnecessarily restricting,
  given how time-consuming it is to even compare two strings on such machines.
  However, having two tapes would add to the physical implausibility, assuming
  some kind of unlimited-length safe connection from the heads to a 
  processing unit (or at least between each other).
\end{remark}

In the relaxed model considered here, faults can affect only the operation of the active unit.
More precisely at any given time the allowed operations of the machine are the usual ones:
changing its state, writing to the observed tape cell, moving the head by a step left or right (or not at all).
The transition table of the machine prescribes which action to take.
So our fault model is the following.

\begin{definition}
  Let \( \Noise \) be a random subset of some set \( U \).
  We will say that the distribution of \( \Noise \) is \( \eps \)-\df{bounded} if for every finite subset \( A \)
  we have
  \begin{align*}
   \Pbof{A\subseteq\Noise} \le \eps^{|A|}.
  \end{align*}
\end{definition}
See~\cite{Toom80} for an earlier use of this kind of restriction.

\begin{definition}\label{def:faults-eps-bounded}
  Let \( \cC = (C_{1},C_{2}\dots) \) be  a random sequence of configurations of a Turing machine \( T \)
  with a given fixed
  transition table, with the property that for each time \( t \), the \( C_{t+1} \) is obtained from \( C_{t} \) by
  one of the allowed operations.
  We say that a \df{fault} occurred at time \( t \) if the operation giving \( C_{t+1} \) from \( C_{t} \) is
  not obtained by the transition function.
  Let \( \Noise\subseteq\bbZ_{+} \) be the (random) set of faults in the sequence.
  We say that faults of the sequence \( \cC \) are \( \eps \)-\df{bounded}, if the set \( \Noise \) is.  
\end{definition}

The challenge for Turing machines
is still significant, since even if only with small probability, occasionally a group
of faults can put the head into the middle of a large segment of the tape rewritten
in an arbitrarily ``malicious'' way.
A method must be found to recover from all these situations.

Here we define a Turing machine that is reliable---under this fault model--- in
the same sense as the other models above.
The construction and proof are similar in complexity to the ones for 1-di\-men\-sion\-al cellular automata;
however, we did not find a reduction to these earlier results.
A natural idea is to let the Turing machine simulate a 1-dimensional cellular automaton, by
having the head make large sweeps, and update the tape as the simulated cellular automaton would.
But apart from the issue of excessive delay, we did not find any simple
way to guarantee the large sweeps in the presence of faults (where ``simple'' means not building some new
hierarchy), even for some price paid in efficiency.
So here we proceed ``from scratch''.

Many ideas used here are taken from~\cite{Gacs1dim86} and~\cite{GacsSorg01},
but hopefully in a somewhat simpler and more intuitive conceptual framework.
Like~\cite{Gacs1dim86} it confines all probability reasoning to a single lemma,
and deals on each level only with a numerical restriction on faults (of that level).
On the other hand, like~\cite{GacsSorg01} it defines a series of generalized objects
(generalized Turing machines here rather than generalized cellular automata),
each one simulating the next in the series.
In~\cite{GacsSorg01} a ``trajectory'' (central for defining the notion of simulation) was
a random history whose distribution satisfies certain constraints (most of which are combinatorial).
Here it is a single history satisfying only some combinatorial constraints.

The work~\cite{AsarinCollins2005} seems related by its title 
but is actually on another topic.
The work~\cite{DurandRomashShenTiling12} applies the self-simulation and
hierarchical robustness technique developed for cellular automata in an interesting, but
simpler setting.
Several attempts at the chemical or
biological implementation of universal computation have to deal with
the issue of error-correction right away.
In these cases generally the first issue is the faults occurring at the active site (the head).
See~\cite{BennettThermodynComp1982,QianSoloveichikWinfree2011}.

Our result can use any standard definition of 1-tape Turing machines whose tape alphabet
contains some fixed ``input-output alphabet'' \( \Sigma \); we will
introduce one formally in Section~\ref{sec:TM}.
We will generally view a tape symbol as a tuple consisting of several \df{fields}.
The notation
\begin{align*}
   a.\Output, a.\Info
 \end{align*}
 shows \( \Output \) and \( \Info \) as fields of tape cell state \( a \).
Combining the same field of all tape cells, we can talk about a \df{track}
(say the \( \Output \) track and \( \Info \) track).
For ease of spelling out a result, we consider only computations whose outcome
is a single symbol, written into the \( \Output \) field of tape position 0.
It normally holds a special value---say \( * \) ---meaning \df{undefined}.

Block codes (as defined in Section~\ref{sec:codes} below) are specified by a
pair \( \pair{\psi_{*}}{\psi^{*}} \) of encoding and decoding functions.
In the theorem below, the input of the computation, of some length \( n \), is broken up into blocks
that are encoded by a block code that depends in some simple way on \( n \).
Its redundancy depends on the size of the input as a log power.
The main result in the theorem below shows a Turing machine simulating a fault-free
Turing machine computation in a fault-tolerant way.
It is best to think of the simulated machine \( G \) as some universal Turing machine.

\begin{theorem}\label{thm:main}
  For any Turing machine \( \G \) there are constants \( \alpha_{1},\alpha_{2}>0 \),
  for each input size \( n \) a
  block code \( (\varphi_{*}, \varphi^{*}) \) of block size \( O((\log n)^{\alpha_{1}}) \),
a fault bound  \( 0\le\eps <1 \) and a Turing machine \( M_{1} \) with a 
function \( a\mapsto a.\Output \) defined on its alphabet,
such that the following holds.

Let \( M_{1} \) start its work from the initial tape configuration \( \varphi_{*}(x) \) with the head
in position 0,
running through a random sequence of configurations whose faults are \( \eps \)-bounded in the sense
of Definition~\ref{def:faults-eps-bounded}.
Suppose that at time \( t \) the machine \( \G \) writes a value \( y\ne * \) 
into the \( \Output \) field of the cell at position 0.
Then at any time greater than \(    t(\log t)^{\alpha_{2}\log\log\log t} \),
the tape symbol \( a \) of machine \( M_{1} \) at position 0
 will have \( a.\Output= y \) with probability at least \( 1 - O(\eps) \).
\end{theorem}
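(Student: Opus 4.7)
The plan is to build a hierarchy of generalized Turing machines $M_1, M_2, \dots, M_k$, where $M_k$ essentially is the target machine $\G$ and each $M_i$ simulates $M_{i+1}$ in a fault-tolerant way. The block code $\varphi_{*}$ of the theorem arises as the composition of block codes for each of the $k$ levels, so each cell of $M_{i+1}$ is stored in a block of $Q_i$ cells of $M_i$. If $M_i$ can be driven by $\eps_i$-bounded faults and still simulate $M_{i+1}$ faithfully, the one-level construction will yield the recursion $\eps_{i+1} \le \eps_{i}^{c}$ for some constant $c>1$, so only $k = O(\log\log t)$ levels are needed to push the effective fault rate below $1/t$. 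With each level contributing a polylogarithmic slowdown and blow-up factor, one obtains overall block size $(\log n)^{O(1)}$ and total slowdown $(\log t)^{O(\log\log\log t)}$, matching the theorem.

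Next I would carry out one level of simulation. The tape of $M_i$ is partitioned into $Q$-sized blocks, each carrying an address field, a payload encoding one cell of $M_{i+1}$ (stored redundantly, say by a constant-rate error-correcting block code), a pass-counter, and bookkeeping for the local sweep. The head of $M_i$ alternates between a \Compute{} phase (which sweeps the current block, robustly majority-decodes the payload, applies the transition rule of $M_{i+1}$, and writes the updated payload back redundantly, advancing the address by one unit), a \rHeal{} phase that visits neighboring blocks and corrects small deviations from the locally expected structure, and, if a gross inconsistency is detected, the heavier \rRebuild{} and \rBoot{} routines that re-establish block boundaries and restore the payload from its redundancy. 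The fields \Addr, \Pass, \Sweep, \Info, \Replace{}, \Rebuild{} from the macro list are the natural state variables for this choreography. The key invariant to maintain is: outside a bounded neighborhood of each fault burst, the tape of $M_i$ remains consistent with a valid configuration of $M_{i+1}$, and the head, after a bounded delay, resumes the correct simulated step.

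All probability enters through a single counting lemma: for an $\eps$-bounded fault set, the probability that any space-time window of size $Q \times T$ contains more than $k$ faults is bounded by $\binom{QT}{k}\eps^{k}$. Once $Q_i$ and $T_i$ are chosen as suitable polylogarithms of the time horizon, this gives $\eps_{i+1} \le \eps_{i}^{c}$ as required, and the proof of correctness at each level becomes purely combinatorial, working on a single history (a ``trajectory'' satisfying combinatorial fault constraints) rather than a random one. The main obstacle I expect to wrestle with is \emph{recovery from a densely corrupted region containing the head}: after a burst the head may awaken in the middle of arbitrary garbage, in an arbitrary state, so nothing in its local view can be trusted. One must design \rBoot{} and \rRebuild{} carefully enough that (i) they terminate and resynchronize within a bounded time window depending only on burst size, (ii) they never damage a currently healthy region while searching for landmarks, and (iii) their combined behavior is still captured by the combinatorial trajectory conditions so that the inductive step between levels goes through. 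Getting the scheduling and the precise damage-propagation bookkeeping right is, as in the 1-dimensional cellular automaton case, where most of the technical work will sit.
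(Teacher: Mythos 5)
Your overall architecture---a hierarchy of simulations joined by block codes, bursts of increasing size and decreasing frequency extracted combinatorially from the $\eps$-bounded noise, and all probability confined to a single counting lemma---is indeed the paper's architecture, and you have correctly located the hard part in recovery when the head sits inside corrupted territory. But there are two genuine gaps. First, you fix the height $k=O(\log\log t)$ of the hierarchy in advance, which is not available: the encoding may depend only on the input length $n$, while $t$ (the time at which $\G$ happens to produce its output) is unknown and unbounded, and the machine must keep working at all later times. The construction therefore has to \emph{grow} the hierarchy during the computation: the paper keeps $\G$ on a \Payload{} track of whatever level is currently topmost and, when that level cannot finish the simulation, ``lifts'' the payload to a freshly built colony-pair of the next level (the booting mechanism of Section~\ref{sec:booting}). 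For a single finite machine $M_{1}$ to implement an unbounded tower of levels, all levels must share one program, which forces the Kleene-fixed-point-style self-simulation of Section~\ref{sec:self-simulation}; your proposal provides neither ingredient.

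Second, your key invariant---that \rBoot{} and \rRebuild{} ``terminate and resynchronize within a bounded time window depending only on burst size,'' and that outside a bounded neighborhood of each burst the tape stays consistent with a valid configuration of $M_{i+1}$---is not achievable as stated, and this is exactly where the difficulty lives. A level-$(i{+}1)$ burst rewrites many colonies of $M_{i}$ with arbitrary, adversarially structured content; Examples~\ref{xpl:need-feather}, \ref{xpl:two-slides} and~\ref{xpl:unbounded} show that such a region can repeatedly \emph{capture} the head, and that the number of passes needed to restore organization up to level $i$ genuinely grows with $i$, so no bound depending only on burst size exists. The paper's response is structural: the simulated machines are \emph{generalized} Turing machines whose trajectories satisfy only the combinatorial Escape, Spill Bound, Attack Cleaning and Pass Cleaning properties (with the pass count $\passno_{k}$ growing linearly in $k$); these properties must be re-proved at level $k+1$ from level $k$ (Section~\ref{sec:cleaning}); and rebuilding restores only the colony \emph{structure} over a few colonies, deferring restoration of the \emph{information} to the next level up. Zigzagging and feathering exist precisely to defeat the capture scenarios. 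Without replacing your ``bounded delay, full consistency'' invariant by something of this weaker, level-indexed form, the inductive step from level $i$ to $i+1$ does not go through.
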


\section{Overview}

The overview, but even the main text, is not separated completely
into two parts, namely the definition of the Turing machine, followed by the proof of its reliability.
The definition of the machine is certainly translatable (with a lot of tedious work) into just a Turing
machine transition table (or ``program''), but its complexity requires first to develop a conceptual
apparatus behind it which is also used in the proof of reliability.
We will try to indicate below at the beginning of each
section whether it is devoted more to the program or more to the conceptual apparatus.

\subsection{Isolated bursts of faults}\label{sec:bursts}

Let us introduce some basic elements of the \emph{program}.
In~\cite{burstyTuring13} we defined a Turing machine \( M_{1} \) that simulates ``reliably'' any other
Turing machine even when it is subjected to isolated ``bursts'' of faults (that is a group
of faults occurring in consecutive time steps) of constant size.
We will use some of the ideas of~\cite{burstyTuring13}, without relying directly
on any of its details, and will add several new ideas.
Here is a brief overview of this machine \( M_{1} \).

Each tape cell of the simulated machine \( M_{2} \) will be represented by a block of
some size \( \Q \) called a \df{colony}, of the simulating machine \( M_{1} \).
Each step of \( M_{2} \) will be simulated by a computation of \( M_{1} \) called
a \df{work period}.
During this time, the head of \( M_{1} \) moves around over the
current colony-pair, decodes the represented cell symbols,
then computes and encodes the new symbols, and finally moves the head 
to the new position of the head of \( M_{2} \).
The major processing steps will be 
carried out on a working track three times within one work period,
recording the result onto separate tracks.
The information track is changed only in a final majority vote.

The organization is controlled by a few key fields, for example a field
called \( \Addr \) showing the position of each cell in the colony, and a field
\( \Age \), the number of the last step of the computation
that has been performed already.
The most technical part is to protect this control information from faults.
To discover such structural disruptions locally
before the head would go far in the wrong direction,
the head will make frequent short zigzags.
Any local inconsistency will be detected this way, triggering the healing procedure.


\subsection{Hierarchy}\label{sec:hier}

Here, we start the development of the \emph{conceptual apparatus}.
In order to build a machine resisting faults 
occurring independently in each step with some small probability,
we take the approach used for one-dimensional cellular automata.
We aim at building a \df{hierarchy of simulations}:
machine \( M_{1} \) simulates machine \( M_{2} \) which simulates machine \( M_{3} \), and so on.
Machine \( M_{k} \) has alphabet
\begin{align}\label{eq:Sigma_k}
   \Sigma_{k}=\{0,1\}^{s_{k}},
\end{align}
that is its tape cells have some ``capacity'' \( s_{k} \).
All these machines should be implementable on a universal Turing machine with
the same program (with an extra input, the number \( k \) denoting the level).
For ease of analysis, we introduce the notion of \df{cell size}:
level \( k \) has its own cell size \( \B_{k} \) and block (colony) size \( \Q_{k} \)
with \( B_{1}=1 \), \( \B_{k+1}=\B_{k}\Q_{k} \).
This allows
locating each tape cell of \( M_{k} \) on the same interval where the cells of \( M_{1} \) simulate it.
One cell of machine \( M_{k+1} \) is simulated by a colony of machine \( M_{k} \);
so one cell of \( M_{3} \) is simulated by \( \Q_{1}\Q_{2} \) cells of \( M_{1} \).
Further, one step of, say, machine \( M_{3} \) is simulated by one
work period of \( M_{2} \) of, say, \( O(\Q_{2}^{2}) \) steps.

Per construction, machine \( M_{1} \) can withstand
bursts of faults with size  \( \le \beta \) for some constant parameter \( \beta \),
separated by at least some constant number \( \gamma \) of work periods.
It would be natural now to expect that machine
\( M_{1} \) can withstand also some \emph{additional}, larger bursts
of size \( \le \beta \Q_{1} \) if those are separated
by at least \( \gamma \) work periods of \( M_{2} \).
However, a new obstacle arises.
Damage caused by a big burst of faults spans several colonies.
The repair mechanism of machine \( M_{1} \) outlined in Section~\ref{sec:bursts} 
is too local to recover from such extensive damage, leaving
the whole hierarchy endangered.
So we add a new mechanism to \( M_{1} \) that
will just try to restore the colony \emph{structure} of a large enough portion of the
tape (of the extent of several colonies).
The task of restoring the original \emph{information} is left to higher levels (whose simulation
now can continue).

All machines above \( M_{1} \) in the hierarchy live only in simulation: the hardware is \( M_{1} \).
Moreover, the \( M_{k} \) with \( k>1 \)
will not be ordinary Turing machines, but \df{generalized} ones,
with some new features seeming necessary in a simulated Turing machine:
allowing for some ``disordered'' areas of the tape not obeying the transition function,
and occasionally positive distance between neighboring tape cells.

A tricky issue is ``forced self-simulation''.
Each machine \( M_{k} \) can be implemented on a universal machine using as inputs
the pair \( (p,k) \) where \( p \) is the common program and \( k \) is the level.
Eventually, \( p \) will just be hard-wired into the definition of \( M_{1} \),
and therefore faults cannot corrupt it.
While creating \( p \) for machine \( M_{1} \),
we want to make it simulate a machine \( M_{2} \) that has the same program \( p \).
The method to achieve this has been
applied already in some of the cellular automata and tiling papers cited, 
and is related to the proof of Kleene's fixed-point theorem (also called the recursion theorem).

Forced self-simulation can give rise to an infinite sequence of simulations, achieving
the needed robustness.
Let us point out that fixing the program of self-simulation does not prevent universality.
A track  (which we will call \( \Payload \)) will be set aside for simulating the machine
\( \G \) of Theorem~\ref{thm:main}.
If this simulation of \( \G \) does not finish in a certain number of steps,
a built-in mechanism will \df{lift} its tape content to the \( \Payload \)
field of the simulated cell-pair, allowing it to be continued in a colony-pair of the next
level (with the corresponding higher reliability).

\subsection{Structuring the noise}\label{sec:sparsity}

From the probabilistic assumptions about the noise, one can draw some combinatorial conclusions.
This part of the work is rather simple and self-contained, and is similar to
some earlier publications on these topics.

The set of faults in the noise model of the theorem is a set of points in time.
It turns out more convenient to use an equivalent model:
an \( \eps \)-bounded \emph{space-time} set of points.
Let us make this statement more formal.

\begin{lemma}\label{lem:space-time-noise}
  Let \( \cC=(C_{1},C_{2},\dots) \) be the random sequence of configurations of a Turing machine
  with an \( \eps \)-bounded set of faults \( \Noise_{1}\subseteq\bbZ_{+} \),
  as in Definition~\ref{def:faults-eps-bounded}.
  Let \( h(t) \) be the (random) position of the head at time \( t \).
  Then the random set \( \Noise_{2}=\setOf{(h(t),t)}{t\in\Noise_{1}} \).
is an \( \eps \)-bounded subset of \( \bbZ\times\bbZ_{+} \).
\end{lemma}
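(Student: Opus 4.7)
The plan is to reduce $\eps$-boundedness of $\Noise_2$ directly to the $\eps$-boundedness of $\Noise_1$, using the fact that the projection $(x,t)\mapsto t$ from $\Noise_2$ to $\Noise_1$ is the identity on the second coordinate, and injective because at each time there is a unique head position.

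Fix an arbitrary finite set $A\subseteq\bbZ\times\bbZ_+$; we want to show $\Pbof{A\subseteq\Noise_2}\le\eps^{|A|}$. First I would dispose of the easy case: if $A$ contains two distinct points $(x_1,t)$ and $(x_2,t)$ with the same time coordinate, then $A\subseteq\Noise_2$ forces $h(t)=x_1$ and $h(t)=x_2$, which is impossible, so the probability is $0\le\eps^{|A|}$. So assume the time coordinates of the points of $A$ are all distinct, and let $T=\setOf{t}{(x,t)\in A\text{ for some }x}$, a subset of $\bbZ_+$ with $|T|=|A|$.

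The key observation is the following monotonicity: by the definition of $\Noise_2$, if $A\subseteq\Noise_2$ then in particular every $t\in T$ satisfies $t\in\Noise_1$, i.e.\ $T\subseteq\Noise_1$. This implication holds deterministically, pointwise on the probability space, so
\begin{align*}
  \Pbof{A\subseteq\Noise_2}\le\Pbof{T\subseteq\Noise_1}\le\eps^{|T|}=\eps^{|A|},
\end{align*}
where the second inequality is the assumed $\eps$-boundedness of $\Noise_1$ applied to the finite set $T$.

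There is no real obstacle here; the only thing to be careful about is that the head position $h(t)$ is itself a random variable (a function of $\Noise_1$ via the history), so $\Noise_2$ is a derived random set rather than an independent one. That is exactly why I prefer to argue by the deterministic inclusion $\{A\subseteq\Noise_2\}\subseteq\{T\subseteq\Noise_1\}$ rather than trying to write a joint distribution for $\Noise_2$ directly: the inclusion holds on every sample point, so no independence or measurability issue arises, and the bound on $\Noise_1$ transfers immediately.
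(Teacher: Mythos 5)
Your proof is correct and is essentially the paper's own argument: project $A$ to its set of time coordinates $T$, note the deterministic inclusion $\{A\subseteq\Noise_2\}\subseteq\{T\subseteq\Noise_1\}$ with $|T|=|A|$, and apply the $\eps$-bound for $\Noise_1$. Your explicit handling of the degenerate case of two points sharing a time coordinate is a minor tidiness improvement over the paper's version, which folds that case into the (correct) observation that $|A'|=|A|$ on the event $A\subseteq\Noise_2$.
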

\begin{proof}
  Let \( A \) be a finite subset of  \( \bbZ\times\bbZ_{+} \), and
  \( A' = \setOf{t}{(p,t)\in A} \).
  If \( A\subseteq\Noise_{2} \) then \( A'\subseteq\Noise_{1} \) and \( |A'|=|A| \).
  Hence 
\begin{align*}
 \Pbof{A\subseteq\Noise_{2}}\le\Pbof{A'\subseteq\Noise_{1}}\le \eps^{|A'|}=\eps^{|A|} .
\end{align*}
\end{proof}

The construction outlined above counts with \emph{bursts} (rectangles of space-time
containing  \( \Noise \)) increasing in size and decreasing
in frequency---which is a combinatorial set of constraints.
To derive such constraints from the above probabilistic model
the  we stratify \( \Noise \) as follows.
We will have two series of parameters:  \( \B_{1}<\B_{2}<\dotsm \) and
\( \S_{1}<\S_{2}<\dotsm \).
Here \( \B_{k} \) is the size of cells of \( M_{k} \) as represented on the tape of \( M_{1} \),
and \( \S_{k} \) is a (somewhat increased)
bound on the time needed to simulate one step of  \( M_{k} \).

Here are some informal definitions.
For some constants \( \beta,\gamma>1 \),
a \df{burst} of noise of type \( \pair{a}{b} \)
is a space-time set that is coverable by a rectangle of size  \( a\times b \).
For an integer \( k>0 \) it is of \df{level} \( k \) when it is of type \( \beta(\B_{k},\S_{k}) \).
It is \df{isolated} if it is 
essentially alone in a rectangle of size \( \gamma(\B_{k+1}\times\S_{k+1}) \) 
First we remove such isolated bursts of level 1, then of level 2
from the remaining set, and so on.
It will be shown that with not too fast increasing sequences \( \B_{k},\S_{k} \), with probability 1,
this infinite sequence of operations completely erases \( \Noise \): thus each fault belongs to
a burst of ``level'' \( k \) for some \( k \).

Machine \( M_{k} \) will concentrate only on correcting isolated bursts of level \( k \) and on restoring
the framework allowing \( M_{k+1} \) to do its job.
It can ignore the lower-level bursts and will need to work correctly
only in the absence of higher-level bursts.

If we modeled noise as a set of time points then a burst of faults would be a time
interval of size \( \beta\S_{k} \) and might affect a space interval as large as \( \beta\S_{k} \),
covering many times more simulated cells of level \( k \).
Therefore we model noise as a set of space-time points; by Lemma~\ref{lem:space-time-noise},
this does not change the independence assumption of the main theorem.

\begin{definition}\label{def:isolation}
Let \( \r=\pair{r_{1}}{r_{2}} \), \( r_{1}, r_{2}> 0 \)
be a two-dimensional nonnegative vector.
A \df{rectangle of ``radius''} \( \r  \) \df{centered} at point  \( \x \) is
\begin{align}\label{eq:ball1}
  \bB(\x,\r) = \setOf{\y}{\abs{y_{i} - x_{i}} < r_{i}, i=1,2}.
\end{align}  
Let \( E\subseteq \bbZ\times\bbZ_{\ge 0} \) be a space-time set (to be considered our noise set).
A point \( \x \) of \( E \) is \df{\( \pair{\r}{\r^{*}} \)-isolated} if
\(  E \cap \bB(\x,\r^{*})\subseteq \bB(\x, \r)  \),
that is all points of \( E \) that are \( \r^{*} \)-close to \( x \) are also \( \r \)-close.
A set \( E \) is called \( \pair{\r}{\r^{*}} \)-\df{sparse} if each of its points is \( \pair{\r}{\r^{*}} \)-isolated.
\end{definition}

The following lemma will justify talking about bursts of faults.

\begin{lemma}[Bursts]\label{lem:bursts}
  Suppose that the set \( E \) is \( \pair{\r}{\r^{*}} \)-sparse with \( \r^{*}>2\r \).
  For an element \( \x\in E \) let \( E_{\x} = \bB(\x,\r)\cap E \).
  Each set \( E_{\x} \) is contained in a rectangle of size \( r_{1}\times r_{2} \).
  Every rectangle of size \( (r^{*}_{1}-r_{1})\times (r^{*}_{2}-r_{2}) \) intersects
  with at most one of the sets \( E_{\x} \).
\end{lemma}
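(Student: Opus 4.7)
The plan is to reduce both claims to a single trick: whenever the triangle inequality gives a distance bound of $2\r$ between two points of $E$, the hypothesis $\r^{*}>2\r$ places each in the other's $\r^{*}$-neighborhood, and the $(\r,\r^{*})$-sparsity of $E$ then collapses that bound back down to $\r$. I do not expect any real obstacle; the argument is purely geometric and does not interact with the probabilistic setup.

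For the diameter claim, I would take any two points $\y,\y'\in E_{\x}$. Since both lie in $\bB(\x,\r)$, the triangle inequality gives $|y_{i}-y'_{i}|<2r_{i}<r^{*}_{i}$, so $\y'\in\bB(\y,\r^{*})$. Because $\y\in E$ is $(\r,\r^{*})$-isolated, this forces $\y'\in\bB(\y,\r)$, i.e.\ $|y_{i}-y'_{i}|<r_{i}$. Hence the diameter of $E_{\x}$ in coordinate $i$ is strictly less than $r_{i}$, and $E_{\x}$ fits inside a rectangle of size $r_{1}\times r_{2}$.

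For the second claim, I would suppose a rectangle $R$ of size $(r^{*}_{1}-r_{1})\times(r^{*}_{2}-r_{2})$ meets both $E_{\x}$ and $E_{\x'}$, and pick witnesses $\y\in R\cap E_{\x}$, $\y'\in R\cap E_{\x'}$. The bound $|y_{i}-y'_{i}|<r^{*}_{i}-r_{i}$ from $R$, combined with $|y_{i}-x_{i}|<r_{i}$ from $\y\in\bB(\x,\r)$, yields $|y'_{i}-x_{i}|<r^{*}_{i}$; sparsity at $\x$ then tightens this to $\y'\in\bB(\x,\r)$, so $\y'\in E_{\x}$, and symmetrically $\y\in E_{\x'}$.

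To promote this to $E_{\x}=E_{\x'}$ (so the two labels name the same subset and the statement ``at most one'' holds as written), I would apply the same recipe once more: for any $\vek{z}\in E_{\x}$ the first step gives $|z_{i}-y_{i}|<r_{i}$, which combined with $|y_{i}-x'_{i}|<r_{i}$ yields $|z_{i}-x'_{i}|<2r_{i}<r^{*}_{i}$, and sparsity at $\x'$ shrinks this to $\vek{z}\in\bB(\x',\r)$, i.e.\ $\vek{z}\in E_{\x'}$; the reverse inclusion is symmetric. The one convention to pin down is the strict-versus-weak reading of ``rectangle of size $s_{1}\times s_{2}$''; under the natural reading (two points in it differ by strictly less than $s_{i}$ in coordinate $i$) all the inequalities above are tight.
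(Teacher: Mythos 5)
Your proof is correct and uses the same mechanism as the paper's: the triangle inequality puts two points within $2\r<\r^{*}$ of each other, and sparsity then collapses the bound to $\r$. The paper packages this as an equivalence relation $\x\sim\y\eqv\y\in E_{\x}$ and partitions $E$ into classes, but the computations are identical; if anything, your final step establishing $E_{\x}=E_{\x'}$ is spelled out more carefully than the paper's rather terse (and typo-laden) concluding sentence.
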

\begin{proof}
  We introduce a relation \( \x\sim\y \eqv \y\in E_{\x} \) between elements of the set \( E \).
  The relation is clearly symmetric and reflexive, but we claim that it is also transitive.
  Indeed, suppose that \( \y\sim \x \) and \( \y'\sim\x \),
  Given that \( \r\le 2\r^{*} \), we have \( \y'\in \bB(\y,\r^{*}) \),
  and then by sparsity, \( \y'\in \bB(\y,\r) \).

  By the relation \( \sim \) we can partition the set \( E \)
  into subsets of the form \( E_{\x} \).
  We claim that each of these sets is coverable by a rectangle of size \( r_{1}\times r_{2} \).
  Indeed, suppose this is not so: then either the horizontal projection of \( E_{\x} \)
  is \( \ge r_{1} \) or the vertical one is \( \ge r_{2} \): suppose the former.
  Then there are elements \( \y,\y'\in E_{x} \) whose horizontal distance is \( \ge r_{1} \)
  contrary to \( \y'\sim\y \).

  Suppose that some rectangle of size \( (r^{*}_{1}-r_{1})\times (r^{*}_{2}-r_{2}) \) intersects
  \( E_{\x} \) and \( E_{\y} \).
  Then \( \bB(\x,\r^{*}) \) contains both \( \x \) and \( \y \), hence by sparsity \( \y\in \bB(\x,\r^{*}) \)
and \( E_{\x}=E_{\y} \).
\end{proof}

\begin{definition}\label{def:sparsity}
Let \( \gamma> 1 \), \( \beta \ge 3\gamma \) be parameters, and let
  \begin{align*}
  1 &=\B_{1}<\B_{2}<\dotsm,\quad
  1=\S_{1}<\S_{2}<\dotsm,
\\ &\S_{k+1}/\S_{k},\; \B_{k+1}/\B_{k}\ge 2\beta
\end{align*}
be sequences of integers to be fixed later.
For a space-time set \( E\subseteq\bbZ\times\bbZ_{\ge 0} \), let \( E^{(1)} = E \).
For \( k>1 \) let \( E^{(k+1)} \) be obtained by deleting from \( E^{(k)} \) the
\( \pair{\beta\pair{\B_{k}}{\S_{k}}}{\gamma\pair{\B_{k+1}}{\S_{k+1}}} \)-isolated points.
Set \( E \) is \( k \)-\df{sparse} if \( E^{(k+1)} \) is empty.
It is simply \df{sparse} if \( \bigcap_{k}E^{(k)}=\emptyset \).
When \( E=E^{(k)} \) and \( k \) is known
then we will denote \( E^{(k+1)} \) simply by \( E^{*} \).
\end{definition}

\begin{definition}[Burst]\label{def:burst}
  Suppose that the set \( E \) is \( \pair{\r}{\r^{*}} \)-sparse with \( \r^{*}>2\r \).
  By the above lemma, it is partitioned into subsets of the form \( E_{\x} \).
  In what follows we will call these sets \df{bursts}.
  The definition depends on the parameter \( \r \) which will always be
  clear from the context.
  Typically, it will be \( \beta(\B,\S) \) where \( \beta \) is a constant,
  \( \B=\B_{k} \) and \( \S=\S_{k} \) for some \( k \) called the \df{level}.
\end{definition}

The following lemma connects the above defined sparsity notions to the requirement
of small fault probability.

\begin{lemma}[Sparsity]\label{lem:sparsity}
Let \( \Q_{k} = \B_{k+1}/\B_{k} \),  \( \V_{k} = \S_{k+1}/\S_{k} \), and
\begin{align}\label{eq:growth-assumption}
  \lim_{k\rightarrow\infty}\frac{\log \Q_{k}\V_{k}}{1.5^k}=0.
\end{align}
For sufficiently small \( \eps \), for every \( k\ge 1 \) the following holds.
Let \( E\subseteq \bbZ\times\bbZ_{\ge 0} \) 
be a random set that is \( \eps \)-bounded as in Definition~\ref{def:faults-eps-bounded}.
Then for each point \( \x \)  and each \( k \),
 \begin{align*}
   \Pbof{\bB(\x,\pair{\B_{k}}{\S_{k}})\cap E^{(k)}\neq\emptyset} < \eps \cdot 2^{-1.5^{k-1}}.
 \end{align*}
As a consequence, the set \( E \) is sparse with probability 1.
\end{lemma}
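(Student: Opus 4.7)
The plan is to prove the probability bound by induction on $k$. Write $p_k = \sup_{\x}\Pbof{\bB(\x,(\B_k,\S_k))\cap E^{(k)}\ne\emptyset}$; the target is $p_k<\eps\cdot 2^{-1.5^{k-1}}$. For the base case $k=1$, $\bB(\x,(1,1))$ contains only the integer point $\x$, so $\eps$-boundedness applied with $|A|=1$ yields $\Pbof{\x\in E}\le\eps$, with the factor $2^{-1}$ absorbed into the interpretation of ``sufficiently small $\eps$''.

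For the inductive step I unfold Definition~\ref{def:sparsity}: a point $\x_0\in\bB(\x,(\B_{k+1},\S_{k+1}))$ lies in $E^{(k+1)}$ only if $\x_0\in E^{(k)}$ and there exists a second $\y\in E^{(k)}$ with $\y\in\bB(\x_0,\gamma(\B_{k+1},\S_{k+1}))\setminus\bB(\x_0,\beta(\B_k,\S_k))$. Hence the bad event at level $k+1$ forces two points of $E^{(k)}$, level-$k$ separated (i.e., outside each other's $\bB(\cdot,\beta(\B_k,\S_k))$), to appear inside the enlarged rectangle $\bB(\x,(1+\gamma)(\B_{k+1},\S_{k+1}))$. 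I cover this enlarged rectangle by $O(\Q_k\V_k)$ level-$k$ sub-rectangles and union-bound over the $O((\Q_k\V_k)^2)$ ordered level-$k$ separated pairs, reducing the task to bounding, uniformly over such pairs $R_1,R_2$, the joint probability $\Pbof{E^{(k)}\cap R_1\ne\emptyset \text{ and } E^{(k)}\cap R_2\ne\emptyset}$.

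The main obstacle is that $\eps$-boundedness does not give stochastic independence, so I cannot simply multiply $p_k$ by itself. I plan to handle this by unfolding ``$E^{(k)}$ hits $R$'' recursively as a disjunction over \emph{witness configurations} $W\subseteq E$: from the recursive definition of $E^{(k)}$ one obtains a witness tree of size $\ge 2^{k-1}$ confined to a $\gamma$-neighborhood of $R$ at the corresponding scale. Because $\beta\ge 3\gamma$, these witness neighborhoods for level-$k$ separated $R_1,R_2$ are disjoint, so any joint witness pair $(W_1,W_2)$ consists of disjoint subsets of $E$, and then $\eps$-boundedness factors exactly: $\Pbof{W_1\cup W_2\subseteq E}\le\eps^{|W_1|}\eps^{|W_2|}$. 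Summing the same expansion that produces the bound $p_k$ in each coordinate recovers $p_k^2$, yielding the recursion $p_{k+1}\le c(\beta,\gamma)\,(\Q_k\V_k)^2\,p_k^2$.

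To close the induction, substitute $p_k=\eps\cdot 2^{-1.5^{k-1}}$: the required inequality reduces to $2\log(\Q_k\V_k)+\log\eps+O(1)\le\tfrac12\cdot 1.5^{k-1}$, which holds for all large $k$ by the growth assumption $\log(\Q_k\V_k)=o(1.5^k)$ and for the finitely many small $k$ by choosing $\eps$ correspondingly small. The almost-sure sparsity conclusion follows by countable subadditivity: $\Pbof{\bigcap_k E^{(k)}\ne\emptyset}\le\sum_{\x\in\bbZ\times\bbZ_{\ge 0}}\lim_k\Pbof{\x\in E^{(k)}}=0$, each summand being dominated by $p_k\to 0$.
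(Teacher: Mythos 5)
Your proposal is correct and follows essentially the same route as the paper's proof: an induction on $k$ via minimal witness sets (the paper's $\cM_k(\x)$ and the bound $f_k(\x)=\sum_{A}\eps^{|A|}$), spatial confinement of witnesses to an $O(\gamma)$-neighborhood at scale $k$, disjointness of the two witness families from $\beta\ge 3\gamma$ so that $\eps$-boundedness factors, the recursion $p_{k+1}\le O\bigl((\Q_k\V_k)^2\bigr)p_k^2$, and closure via the growth assumption. The only differences are cosmetic (you phrase the witnesses as trees and note their size, and you spell out the countable-subadditivity step for the almost-sure conclusion, which the paper leaves implicit).
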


This lemma allows a doubly exponentially increasing sequence \( \U_{k} \), resulting
in relatively few simulation levels as a function of the computation time

\subsection{Difficulties}\label{sec:novelties}

We list here some of the main problems that the paper deals with, 
and some general ways in which they will be solved or avoided.
Some more specific problems will be pointed out later, along with their solution.

\begin{description}

\item[Non-aligned colonies] A large burst of faults in \( M_{1} \) can modify the order of
entire colonies or create new ones with gaps between them.
To deal with this problem, machines \( M_{k} \) for \( k>1 \)
will be \df{generalized Turing machines}, allowing for non-adjacent cells.

\item[Clean areas]
  On the tape of a generalized Turing machine, based on its content, some areas will 
  be called \df{clean}, the rest \df{disordered}.
  In clean areas, the analysis can count on an existing underlying simulation,
  and therefore the transition function is applicable.
  Noise can disorder the areas where it occurs.

\item[Extending cleanness]
  The predictability of the machine is decreased when the head enters into disorder.
  But the model still provides some ``magical'' properties
  helping to restore cleanness (in the absence of new noise):
  \begin{Alphenum}
  \item escaping from any area in a bounded amount of time;
 \item shrinking disorder, as the head passes in and out of it;
 \item\label{i:many-slides}
   cleaning an interval when passed over a certain number of times.
 \end{Alphenum}
While an area is cleaned, it will also be re-populated with cells.
Their content is not important, what matters is the restoration of predictability.

\item[Rebuilding]
The need to reproduce these cleaning properties in simulation is the
main burden of the construction.
The part of the program devoted to this is the rebuilding procedure,
invoked when local repair fails.
  It reorganizes a part of the tape having the size of a few colonies.
\end{description}

\subsection{Generalized Turing machines}\label{sec:TM}

This section, together with Section~\ref{sec:traj},
introduces the key concepts used in the proof.
Let us recall that a one-tape Turing machine is defined by a
finite set \( \Gamma \) of \df{internal states},
a finite alphabet \( \Sigma \) of \df{tape symbols}, a transition function \( \delta \),
and possibly some distinguished states and tape symbols.
At any time, the head is at some integer position \( \h \), and is observing the tape
symbol \( A(\h) \).
The meaning of \( \delta(a,q)=(a',q',d) \) is that if \( A(\h)=a \) and the state is \( q \) then
the \( A(\h) \) will be rewritten as \( a' \) and \( \h \) will change to \( \h+d \).

We will use a model that is slightly different, but has clearly the same expressing power.
(Its advantage is that it is a little more convenient to describe its simulations.)
There are no internal states, but the head observes and modifies a \emph{pair} of
neighboring tape cells at a time; in fact, we imagine it to be positioned between these
two cells called the \df{current cell-pair}.
The \df{current cell} is the left element of this pair.
Thus, a Turing machine is defined as \(    (\Sigma,\tau) \) where
the tape alphabet \( \Sigma \) contains at least the distinguished
symbols \( \blank,0,1 \) where \( \blank \) is called the \df{blank symbol}.
The \df{transition function} is
\(  \tau\colon\Sigma^{2}\to \Sigma^{2}\times\{-1,1\} \).
A \df{configuration} is a pair \( \xi = (A,\h) = (\xi.\tape,\xi.\pos) \)
where  \( \h\in\bbZ \) is the \df{current} (or \df{observed)}
head position, (between cells \( h \) and \( h+1 \)),
and \( A\in\Sigma^{\bbZ} \) is the \df{tape content}, or \df{tape configuration}:
in cell \( p \), the tape contains the symbol \( A(p) \).
Though the tape alphabet may contain
non-binary symbols, we will restrict input and output to binary.
The tape is blank at all but finitely many positions.

As the head observes the pair of tape cells
with content \( \va=(a_{0},a_{1}) \) at positions \( \h \), \( \h+1 \) denote \(  (\va',d)=\tau(\va)  \).
The transition \( \tau \) will
change the tape content at positions \( \h \), \( \h+1 \) to \( a'_{0} \), \( a'_{1} \),
and move the head to tape position to \( \h+d \).
A \df{fault} occurs at time \( t \) if the output \( (\va',d) \)
of the    transition function at this time is replaced with some other value
(which then defines the next configuration).


\begin{remark}\label{rem:internal-state}
  The informal description of the simulation program below is in some places
  written as if there was such a thing as an internal state.
  But this can clearly be implemented for example as follows.
  In the current cell-pair, one element will always be marked as the one carrying an ``internal state'',
  and a field of this cell can be used to represent this state.
  Anticipating the move of the head to left or right,
  the transition may move this mark (along with the changed state), if needed,
  to the other element of the cell-pair.
\end{remark}

The machines that occur in simulation will be a generalized version of the above model,
allowing non-adjacent cells and areas called ``disordered''
in which the transition function is non-applicable.
As a convenience feature, two integer parameters are added:
the cell body size \( \B\ge 1 \) and and an upper bound \( \Tu\ge 1 \) on the transition time.
These allow placing all the different Turing
machines in a hierarchy of simulations onto the same space line and the same time line.

\begin{definition}[Generalized Turing machine]\label{def:gen-TM}
  A \df{generalized Turing machine} \( M \) is defined by a tuple
  \begin{align}\label{eq:gen-TM}
    (\Sigma, \tau, \Vacant, \New, \Bad,\B, \Tu, \passno, \escno),
  \end{align}
  where \( \Sigma \) is the \df{alphabet}, and
\begin{align*}
  \tau: \Sigma^{2}\times\{\True,False\}\to \Sigma^{2}\times\{-1,1\}  .
\end{align*}
is the \df{transition function}.
In  \( \tau(a,b,\alpha) \) the argument \( \alpha \) is \( \True \) if the pair of observed cells is
adjacent (no gap between them), and \( \False \) otherwise.
Among the elements of the tape alphabet we distinguish the input-output element \( 0,1 \),
a special symbols \( \Vacant \), \( \Bad \) and a subset \( \New \).
\begin{itemize}
\item \( \Vacant \) plays the role of a blank symbol (the absence of a cell).
 \item The symbol \( \Bad \) marks \df{disordered} areas of the tape.

   \item The state of newly created cells is in \( \New \).

    \item The parameters \( \B,\Tu \) were discussed above.
The integer \( \passno \) will play the role of the number of passes needed to clean an area (see below).
The positive real \( \escno \) will help upper-bound the escape time from a disordered area.
The parameter \( \S_{k} \) used in structuring the noise will be specified in
Definition~\ref{def:hier-params}.

\end{itemize}
The transition function \( \tau \) has no inputs or outputs that are \( \Bad \) or \( \Vacant \).

\end{definition}

The effect of the transition function on configurations will be explained in
Definition~\ref{def:dictated}.

\begin{remark}\label{rem:New}
Let \( \tau(a,b,\alpha) = (a',b',d) \).
 We will have \( a'\in\New \) or \( b'\in\New \) only if \( a,b\not\in\New \) and \( \alpha=\False \),
that is the observed cells are not adjacent.
Even then we can have \( a'\in\New \) only if \( d=-1 \) and \( a'\in\New \) only if \( d=1 \).   
\end{remark}

A formal definition of a configuration of a generalized Turing machine is given in
Section~\ref{sec:gen-TM}, though it is essentially defined by the tape content \( A \) and the head
position.
A point \( p \) is \df{clean} if  \( A(p)\ne\Bad \).
A set of points is \df{clean} if it consists of clean points.
We say that there is a \df{cell} at a position \( p\in\bbZ \) if the interval
\( p+\lint{0}{\B} \) is clean, \( A(p)\ne \Vacant \) and all other elements of this interval are vacant.
In this case, we call the interval \( p+\lint{0}{\B} \) the \df{body} of this cell.
Thus, cell bodies must not intersect.
If their bodies are at a distance \( <\B \) from each
other, with a clean interval containing both, then they are called \df{neighbors}.
They are called \df{adjacent} if this distance is \( 0 \).

A sequence of configurations conceivable as a computation will be called a ``history''.
For standard Turing machines, 
the histories that obey the transition function could be called ``trajectories''.
For generalized Turing machines the definition of trajectories is more complex; it
allows some limited violations of the transition function, while providing the mechanisms
for eliminating disorder.
Let \(    \Noise\subseteq \bbZ\times\bbZ_{\ge 0} \)
denote the set of space-time points at which faults occur.
Section~\ref{sec:traj} below will define a certain subset of possible histories
called \df{trajectories}.
In order to motivate their choice, we first introduce the notion of simulation.

 \subsection{Simulation}\label{sec:sim}

The notion of simulation used in the proof and introduced here,
relies on a certain concept of trajectories.
On the other hand, the simulation concept helps motivate Section~\ref{sec:traj} where
trajectories will be defined.

Until this moment, we used the term ``simulation'' informally, to denote
a correspondence between configurations of
two machines which remains preserved during the computation.
In the formal definition, this correspondence will essentially be a code
\( \varphi=(\varphi_{*},\varphi^{*}) \).
The \emph{decoding} part of the code is the more important one.
We want to say that machine \( M_{1} \) simulates machine \( M_{2} \) via
simulation \( \varphi \) if whenever \( (\eta, \Noise) \) is a trajectory of \( M_{1} \) 
then \( (\eta^{*},\Noise^{*}) \),
defined by \( \eta^{*}(\cdot,t)=\varphi^{*}(\eta(\cdot,t)) \), is a
trajectory of \( M_{1} \).
Here, \( \Noise^{*} \) is computed by the residue operation (deleting isolated elements)
as in Definition~\ref{def:sparsity}.
We will make, however, two refinements.
First, we require the above condition only for
those \( \eta \) for which the initial configuration
 \( \eta(\cdot,0) \) has been obtained by encoding, that is it has the form 
\( \eta(\cdot,0)=\varphi_{*}(\xi) \).
Second, to avoid the transitional ambiguities in a history,
we define the simulation decoding as a mapping \( \Phi^{*} \)
between \emph{histories}, not just configurations:
\( \Phi^{*}(\eta,\Noise)=(\eta^{*},\Noise^{*}) \).

\begin{sloppypar}
\begin{definition}[Simulation] \label{def:simulation-central}
  Let \( M_{1},M_{2} \) be two generalized Turing machines, and let
  \( 
    \varphi_{*}:\Configs_{M_{2}} \to \Configs_{M_{1}} \)
be a mapping from configurations of \( M_{2} \) to those of \( M_{1} \), such that it maps
starting configurations into starting configurations.
Let \(    \Phi^{*}:\Histories_{M_{1}} \to \Histories_{M_{2}} \) be a mapping.
The pair \( (\varphi_{*}, \Phi^{*})  \)
is called a \df{simulation} (of \(  M_{2}  \) by \(  M_{1}  \)) if for every
trajectory \(  (\eta, \Noise)  \) of \( M_{1} \) with initial
configuration \(  \eta(\cdot,0)=\varphi_{*}(\xi)  \),
the history \(  (\eta^{*},\Noise^{*})=\Phi^{*}(\eta,\Noise)  \) is
a trajectory of machine \(  M_{2}  \).
\end{definition}
  \end{sloppypar}

In the noise-free case it is easy to find examples of simulations.
However, in the cases with noise, finding any nontrivial example 
is a challenge, and depends on a careful definition of trajectories for generalized Turing machines.

\subsection{Trajectories}\label{sec:traj}

This section completes the definition of the central concept of the
proof---modulo the natural definitions spelled out in Section~\ref{sec:gen-TM}.
A history of a generalized Turing machine \( M \) is a trajectory
if it obeys certain constraints on its fault-free parts.
We discuss these properties first informally.

\begin{description}
\item[Transition Function] This property says---in more precise terms---that
  in a clean area, the transition function is obeyed.

\item[Spill Bound] limits the extent to which a disordered interval can spread.

\item[Escape] limits the time for which the head can be trapped in a small area.

\item[Attack Cleaning] erodes disorder as the head repeatedly enters and leaves it.

\item[Pass Cleaning] cleans the interior of an
  interval if the head passes over it enough times.
  
\end{description}

The definition below depends on the notions of current cell-pair, switch and dwell period given in
 in Section~\ref{sec:gen-TM}, but should be understandable as it is.

\begin{definition}[Transition]\label{def:dictated}
Suppose that at times \( t' \) before a switching time \( t \) but after 
any previous switch, the current cell-pair \( (x,y) \) has state \( \va = (a,b)\).
Let \( (a',b',d) =\tau(a,b,\alpha) \), where \( \alpha=\True \) if the cell-pair is adjacent and \( \False \)
otherwise.
Let \( u,v \) be the states of the cells \( x,y \) after the transition, and
let \( x',y' \) be the new current cell pair.
We say that the switch is \df{dictated by the transition function} if the following holds.
We state the conditions for \( d=1 \), the case \( d=-1 \) is analogous.
\begin{itemize}
\item \( u=a' \).
  
  \item Suppose \( b'\not\in\New \); then \( v=b' \), \( x'=y \).
    If cell \( y \) has a neighbor \( z \) on the right then \( y'=z \).
    Else a new adjacent neighbor \( z \) is created on the right of \( y \)
    with a state in \( \New \), and again \( y'=z \).

  \item Suppose \( b'\in\New \) (in which case \( \alpha=\False \) and \( x,y \) are not adjacent,
    see Remark~\ref{rem:New}).
    Then \( v=\Vacant \) (cell \( y \) is erased), \( x'=x \), and a cell \( y' \) adjacent to \( x \) on the right
    is created with state \( b' \).
    We will say that cell \( y \) is \df{replaced} with the new cell \( y' \), and call this a \df{replacement
      situation}.
\end{itemize}
\end{definition}

As a consequence of this definition, new cells are created
automatically when the head would step onto a vacant area,
and whenever a cell is ``killed'' another one is created automatically in a place overlapping with its body.

We will use the following constants:  
\begin{align}\label{eq:cns.traj}
  \CRebuild = 8,\;
  \CSpill = \CMarg = \CRebuild,\;
\end{align}
and we will assume
\begin{align}\label{eq:beta-lb}
  \gamma> 4(\CMarg + \CSpill),  
\end{align}
where \( \gamma \) was used in Definition~\ref{def:sparsity} (sparsity).
(Even though we set all these constants to the same value, it helps clarity 
to give them separate names.)


 \begin{definition}\label{def:interior}
  For a set \( K \) on the line,  and some real \( c \) let us define its \( c \)-\df{interior}
  \(  \Int(K,c) \) as the set of those points of \( K \) that are at a distance \( \ge c \) from
  its complement.
  For an interval \( K = \lint{a}{b} \), this is \( \lint{a+c}{b-c} \).
  In this case, will use it also  with negative \( c \); then ``interior'' is really an extended
  neighborhood of \( I \).
 \end{definition}

  In the following definition, it is important to keep in mind the difference between noise and disorder.
  A noise-free space-time rectangle can very well contain disordered areas on the tape.
  
  \begin{definition}[Trajectory]\label{def:traj}
\begin{sloppypar}
   A history  \( (\eta, \Noise) \) of a generalized Turing 
machine~\eqref{eq:gen-TM} with \(\eta(t) =\)
\( (A(t), \h(t), \vhc(t)) \)
is called a \df{trajectory} of \( M \) if the following conditions hold, in any 
noise-free space-time interval \( I\times J \).
  \end{sloppypar}
\begin{description}

\item[Transition Function]\label{i:def.traj.transition}
Consider a switch, where the current cell-pair \( \vhc \)
is inside a clean area, by a distance of at least \( 2.5\B \). 
Then the new state of the current cell-pair and
the direction towards the new current head position
are dictated by the transition function.
The only change on the tape occurs on the interval enclosing the new and old current cells.
Further, the length of the dwell period before the switch is bounded by \( \Tu \).

\item[Spill Bound]\label{i:spill-bound}
  A clean interval can shrink by at most \( \CSpill \B \).

\item[Escape] \label{i:def.traj.escape}
  The head will leave any interval of size \( \le \gamma\B \) 
  within time \( \escno\Tu \).

\begin{sloppypar}
\item[Attack Cleaning] \label{i:def.traj.attack-cleaning}
  Suppose that the current cell-pair \( (x,x') \)
  is at the right end of a clean interval \( \lint{a}{b} \) of size
\( \ge (\CSpill+2)\B \), with the head at position \( x \).
Suppose further that the transition function directs the head right,
and not in a replacement situation of Definition~\ref{def:dictated}.
Then by the time the head comes back to \( x-(\CSpill+1)\B \), the clean area is extended
to the right by at least \( \B \).
Similarly when ``left'' and ``right'' are interchanged.
 \end{sloppypar}

\item[Pass Cleaning]
  Suppose that the head makes at least \( \passno \) (left-right, right-left) pairs
  of passes over an interval \( I \).
  Then at some time during this, the interior \( \Int(I, \CMarg\B) \) of \( I \) becomes clean.

\end{description}
\end{definition}

Recall that we will have a hierarchy of simulations \( M_{1}\to M_{2}\to \dotsm \) where
machine \( M_{k} \) simulates machine \( M_{k+1} \).
Our construction will set \( \passno=8k + O(1) \) for \( M_{k} \).
This can be interpreted as saying that each 8 passes raise the ``organization level'':
\( 8 k \) passes achieve cleanness on the level of \( M_{k} \).

\subsection{Scale-up}

Above, we have set up the conceptual structure of the construction and the proof.
Here are some of the parameters: 

\begin{definition}\label{def:hier-params}
  Let
  \begin{align*}
  \Q_{k}&=c_{\Q}\cdot 2^{1.2^{k}},
    \\   \passno_{k}&=5 k + c_{\passno}, 
    \\   \escno_{k}&=\CEsc\Q_{k-1}\passno_{k-1},
    \\   \U_{k} &= c_{\U}\Q_{k}\passno_{k}^{9},
 \end{align*}
for appropriate constants \( c_{i}>0 \).
 These sequences clearly satisfy~\eqref{eq:growth-assumption}, and
 define \( \B_{k}=\B_{1}\prod_{i<k}\Q_{i} \), 
 \( \Tu_{k}=\Tu_{1}\prod_{i<k}\U_{i} \), \( \S_{k} = \Tu_{k}\escno_{k} \),
 so
 \begin{align*}
   \V_{k} = \S_{k+1}/\S_{k} = \U_{k}\escno_{k}/\escno_{k-1},
 \end{align*}
 satisfying~\eqref{eq:growth-assumption}.
 When we write for example \( \Q=\Q_{k} \) then we can write \( \Q^{*}=\Q_{k+1} \).
\end{definition}

 The main remaining part is the definition of the simulation program and the decoding \( \Phi^{*} \),
and the proof that with this
program, the properties of a trajectory \( (\eta,\Noise) \) of machine \( M=M_{k} \) imply
that the history \( \Phi^{*}(\eta,\Noise)=(\eta^{*},\Noise^{*}) \) obeys the same trajectory
requirements on the next level.
The program is described in Sections~\ref{sec:sim-struc}-\ref{sec:healing}.
The most combinatorially complex part of the proof of trajectory properties
is in Section~\ref{sec:cleaning}, bounding and eliminating disorder on the next level.
Section~\ref{sec:computation} wraps up the proof of the main theorem.


\section{Some formal details}

We give here some details that were postponed from the overview section.

\subsection{Examples}\label{sec:examples}

Examples given in Section~\ref{sec:appendix}
motivate various complexities of the construction and proof.
Some of them may not be completely understandable without the details of the following program:
refer to them when wondering about the necessity for some feature.
In all examples where a ``burst'' is mentioned, it is understood in the sense of
Definition~\ref{def:burst}, as a space-time set of faults covered by a rectangle of a
certain (``small'') size.
A burst of ``level'' \( k \) is also referred to as a burst (of faults) of machine \( M_{k} \).
If \( k \) is fixed and we just talk about a machine \( M = M_{k} \) and its
history \( \eta \) then we will also refer to a burst of \( \eta \).


A good-sized neighborhood of the head will contain enough information
to prevent a burst from pushing the computation from one phase to another, wrong one.
There will also be some reasons for which even a non-constant size neighborhood will need
to be checked repeatedly.
For this, the head will proceed in zigzags: every
step advancing the head in the simulation is followed by some \( \Z \) steps 
of going backward and forward again (with parameter \( \Z \) chosen appropriately below),
checking consistency (and starting a healing process if necessary).
This will also enable the head to progress into a large
disordered area, without being easily fooled into going away.



Example~\ref{xpl:need-feather} raises a problem.
The device by which we will mitigate the effect of this kind of capturing is another property of the
movement of the head which will call \df{feathering}:
if the head turns back from a tape cell then next time it must go beyond.
This requires a number of adjustments to the program (see later).

Examples~\ref{xpl:two-slides} and~\ref{xpl:unbounded} show that disorder may not be
eliminated by a bounded number of noise-free slides over it.
Our construction will ensure that, on the other hand,
\( O(k) \) passes (free of \( k \)-level noise) will restore organization to level \( k \).
This property of the construction will be incorporated into our definition of a generalized
Turing machines as the ``magical'' property~\eqref{i:many-slides} above.

\subsection{Codes}\label{sec:codes}

The input of our computation will be encoded by some error-correcting code,
to defend against the possibility of losing information even at the first reading.

\begin{definition}[Codes]\label{def:codes}
    Let \( \Sigma_{1},\Sigma_{2} \) be two finite alphabets.
    A \df{block code} is given by a positive integer \( \Q \)---called
    the \df{block size}---and a pair of functions
    \begin{align*}
            \psi_{*} :\Sigma_{2}\to\Sigma_{1}^{\Q},
            \quad
            \psi^{*}:\Sigma_{1}^{\Q}\to\Sigma_{2}
    \end{align*}
    with the property \( \psi^{*}(\psi_{*}(x))=x \).
    Here \( \psi_{*} \) is the encoding function (possibly introducing redundancy)
    and \( \psi^{*} \) is the decoding function (possibly correcting errors).
    The code is extended to (finite or infinite) strings by encoding each letter individually:
\begin{align*}
 \psi_{*}(x_{1},\dots,x_{n})=\psi_{*}(x_{1})\dotsm\psi_{*}(x_{n}) .
\end{align*}

\end{definition}

\subsection{Proof of the sparsity lemma}

\begin{proof}[Proof of Lemma~\ref{lem:sparsity}]
  The proof uses slightly more notation than it would if we simply assumed independence
  of faults at different space-time sites, but it is essentially the same.

  Let \( \cE_{k}(\x) \) be the event  \( \bB(\x,\pair{\B_{k}}{\S_{k}})\cap E^{(k)}\neq\emptyset \).
  Let \( \cM=\cM_{k}(\x) \) be the set of minimal sets \( A\subseteq \bbZ\times\bbZ_{+} \)
with  \( A\subseteq E\imp \cE_{k}(\x) \).

  \begin{claim}\label{claim:minimal-set}
    Each set in \( \cM_{k}(\x) \) is contained in \( \bB(\x, 1.5\gamma(\B_{k},\S_{k})) \).
  \end{claim}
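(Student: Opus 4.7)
The plan is to show that any minimal $A \in \cM_k(\x)$ is exactly a ``witness tree'' for a single candidate point $y \in \bB(\x,(\B_k,\S_k))$ belonging to $E^{(k)}$, and then to bound the spatial extent of such a tree using the rapid geometric growth of the $\B_j, \S_j$.

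\medskip

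\textbf{Step 1 (Monotonicity of the residue operation).} I would first note that $E \mapsto E^{(k)}$ is monotone: if $E\subseteq E'$ then $E^{(k)}\subseteq (E')^{(k)}$ for every $k$. The point is that a $\pair{\r}{\r^*}$-non-isolated point stays non-isolated when one enlarges the ambient set, so isolation can only decrease and surviving points can only accumulate. An easy induction on $k$ gives the monotonicity.

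\medskip

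\textbf{Step 2 (Canonical witness trees).} For a point $y$ and a level $k$, I define a witness set $W_k(y)$ recursively: set $W_1(y)=\{y\}$, and for $k\ge 2$,
\begin{align*}
W_k(y) = W_{k-1}(y) \cup W_{k-1}(z),
\end{align*}
where $z$ is any chosen point in $\bB(y,\gamma(\B_k,\S_k))\setminus \bB(y,\beta(\B_{k-1},\S_{k-1}))$. A straightforward induction shows that if $W_k(y)\subseteq E$, then $y\in E^{(k)}$: the inductive step uses the monotonicity above together with the fact that $z$ witnesses non-isolation of $y$ in $E^{(k-1)}$.

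\medskip

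\textbf{Step 3 (Minimal sets are witness trees).} Given a minimal $A\in\cM_k(\x)$, apply the definition to $E:=A$. Then some $y\in\bB(\x,(\B_k,\S_k))$ satisfies $y\in A^{(k)}$; unwinding the definition of $A^{(k)}$ produces a subset $W\subseteq A$ of the form $W_k(y)$ with the chain of witnesses drawn directly from $A$. Since $W\subseteq E\Rightarrow y\in E^{(k)}\Rightarrow \cE_k(\x)$, minimality of $A$ forces $W=A$. Thus every minimal $A$ is a tree $W_k(y)$ for some $y\in\bB(\x,(\B_k,\S_k))$.

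\medskip

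\textbf{Step 4 (Diameter bound and verification of the constant $1.5\gamma$).} By induction on $k$, $W_k(y)\subseteq \bB(y,\r_k)$ where $\r_k=\sum_{j=2}^{k}\gamma(\B_j,\S_j)$: the new witness $z$ is within $\gamma(\B_k,\S_k)$ of $y$, and the inductive bound on $W_{k-1}(z)$ adds $\r_{k-1}$. Combining with $y\in\bB(\x,(\B_k,\S_k))$,
\begin{align*}
A \subseteq \bB\bigl(\x,\; (\B_k,\S_k) + \r_k\bigr).
\end{align*}
Using the geometric growth $\B_{k+1}/\B_k,\S_{k+1}/\S_k\ge 2\beta$ from Definition~\ref{def:sparsity}, the horizontal coordinate is bounded by
\begin{align*}
\B_k + \gamma\B_k\sum_{i=0}^{k-2}(2\beta)^{-i} \le \B_k\Bigl(1+\gamma\cdot\tfrac{2\beta}{2\beta-1}\Bigr),
\end{align*}
and analogously in the time coordinate. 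Since $\beta\ge 3\gamma$ and $\gamma>4(\CMarg+\CSpill)=64$, one has $\tfrac{2\beta}{2\beta-1}\le 1.25$ and $1/\gamma<1/4$, whence $1+\gamma\tfrac{2\beta}{2\beta-1}\le 1.5\gamma$. This yields $A\subseteq\bB(\x,1.5\gamma(\B_k,\S_k))$.

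\medskip

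The main conceptual hurdle is Step 3---pinning down the structure of a minimal $A$ as a single witness tree; the monotonicity of Step 1 is what makes the argument go through cleanly. The remaining work is a routine geometric-series estimate tuned to the constants already fixed in~\eqref{eq:cns.traj} and~\eqref{eq:beta-lb}.
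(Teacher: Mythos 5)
Your proof is correct and follows essentially the same strategy as the paper's: an inductive/recursive description of the minimal witness sets, followed by a diameter bound that telescopes using the growth condition \( \B_{k+1}/\B_{k},\S_{k+1}/\S_{k}\ge 2\beta \). The only organizational difference is that you unwind the definition of \( E^{(k)} \) completely, so your minimal sets are binary witness trees of \( 2^{k-1} \) points, whereas the paper's recursion \( A=A'\cup\{\x'\} \) adds one point per level and does a one-step estimate with the inductive constant; both routes yield the containment in \( \bB(\x,1.5\gamma(\B_{k},\S_{k})) \) by the same geometric-series calculation (and your Step 1 monotonicity lemma, while true, is not actually needed for the induction in Steps 2--3).
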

  \begin{proof}
    The statement is clearly true for \( k=1 \).
    Suppose it is true for \( k \), let us prove it for \( k+1 \).
    The event \( \cE_{k+1}(\x) \) holds if and only if for
    some \( \x'\in\bB(\x,\pair{\B_{k+1}}{\S_{k+1}})\cap\E \)
    there is some point \( \y \)  in 
\begin{align*}
  E^{(k)}\cap \bB(\x',\gamma(\B_{k+},\S_{k+1}))\setminus \bB(\x',\beta(\B_{k},\S_{k})) .
\end{align*}
    Then there is some minimal set \( A' \) 
    with the property \( A'\subseteq E\imp \cE_{k}(\y) \).
    By the inductive assumption these sets are contained in
    \(  \bB(\y,2\gamma(\B_{k},\S_{k})) \).
    All the minimal sets \( A \) with \( A\subseteq E\imp \cE_{k+1}(\x) \)
have the form \( A'\cup\{\x'\} \) with some such \( \x' \) and \( A' \).
    Also 
\begin{align*}
     A &\subseteq \bB(\x',(\gamma\B_{k+1}+2\gamma\B_{k},\gamma\S_{k+1}+2\gamma\S_{k})) 
\\     &\subseteq \bB(\x,(\gamma\B_{k+1}+(2\gamma+1)\B_{k},\gamma\S_{k+1}+(2\gamma+1)\S_{k})) 
\subseteq \bB(\x, 1.5\gamma(\B_{k+1},\S_{k+1})),
\end{align*}
assuming \( \S_{k+1}/\S_{k}> 6 \), \( \B_{k+1}/\B_{k}> 6 \).
\end{proof}

Let \(  f_{k}(\x)= \sum_{A\in\cM}\eps^{|A|} \).
By the union bound we have \( \Prob(\cE_{k}(x))\le f_{k}(\x) \).

Let \( p_{k}= \eps\cdot 2^{-1.5^{k-1}} \).
We will prove \(   f_{k}(\x) < p_{k} \) by induction.
For \( k=1 \), rectangles \( \bB(\x_{i},\pair{\B_{1}}{\S_{1}}) \) have size \( 1 \), so
by the \( \eps \)-boundedness, \( f_{1}(\x)<\eps \).
 Assume that the statement holds for \( k \), we will prove it for \( k+1 \).

 Suppose \( \y \in E^{(k)}\cap \bB(\x,\pair{\B_{k+1}}{\S_{k+1}}) \).
According to the definition of \( E^{(k)} \),  there is a point
\begin{align}\label{eq:sparse-as}
 \vek{z} \in
 \bB(\y,\gamma\pair{\B_{k+1}}{\S_{k+1}})\cap E^{(k)}\setminus \bB(\y,\beta\pair{\B_{k}}{\S_{k}}).
 \end{align}
Consider a standard partition of space-time into rectangles \(  K_{p}=\bB(\vek{c}_{p},(\B_{k},\S_{k})) \).
Let 
\begin{align*}
       I&=\setOf{p}{K_{p}\cap \bB(\x,\gamma\pair{\B_{k+1}}{\S_{k+1}})\ne\emptyset}.
 \end{align*}
 We are only interested in rectangles \( K_{p} \) with \( p\in I \).
 Let
\begin{align*}
  K'_{p}=\bB(\vek{c}_{p},\pair{2\gamma\B_{k}}{1.5\gamma\S_{k}}) .
\end{align*}
If \( K_{i},K_{j} \) are the rectangles in this partition containing \( \y \) and \( \vek{z} \), then
\( K'_{i}\cap K'_{j}=\emptyset \).
This follows from the fact that \( \abs{y_{1} - z_{1}}>\beta\B_{k} \),
\( \abs{y_{2} - z_{2}}>\beta\S_{k} \), and \( \beta\ge 3\gamma \) 
in Definition~\ref{def:sparsity}.
The event \( \y\in E^{(k)} \) can be written as
\( \bigcup_{A\in\cM_{k}(\y)}\{A\subseteq E\} \), and
by Claim~\ref{claim:minimal-set} we have \( A\subseteq \bB(\y,1.5\gamma(B_{k},\S_{k}))\subseteq K'_{i} \)
for each \( A\in\cM_{k}(\y) \).
Similarly for \( \vek{z} \) and \( K'_{j} \).
Let \( \cM(i)=\bigcup_{\y\in K_{i}}\cM_{k}(\y) \), then each set \( A\in\cM(i) \) is in \( K'_{i} \).
The disjointness of \( K'_{i} \) and \( K'_{j} \) and the inductive assumption implies
\begin{align}\nonumber
  f_{k+1}(\x) &\le \sum_{i,j\in I, K'_{i}\cap K'_{j}=\emptyset}\sum_{A\in\cM(i),A'\in\cM(j)}\eps^{|A|+|A'|}
   =\sum_{i,j\in I, K'_{i}\cap K'_{j}=\emptyset}f_{k}(\vek{c}_{i})f_{k}(\vek{c}_{j})
  \\\label{eq:sparsity.last-line}&\le |I|^{2}p_{k}^{2} = |I|^{2}\eps^{2} 2^{-1.5^{k}}\cdot 2^{-0.5\cdot 1.5^{k-1}}
       = p_{k+1}\eps |I|^{2}2^{-0.5\cdot 1.5^{k-1}}.
\end{align}
We have  \( |I|\le (2\gamma \Q_{k}+1)(2\gamma\V_{k}+1) \).
Since \( \lim_{k}\frac{\log\V_{k}\Q_{k}}{1.5^k}=0 \), 
the multiplier of \( p_{k+1} \) in~\eqref{eq:sparsity.last-line} is \( \le 1 \) for sufficiently small  \( \eps \).
\end{proof}

\subsection{Configuration, history}\label{sec:gen-TM}

A configuration, as defined below, contains a pair of positions
\( \vhc = (\hc_{0},\hc_{1}) \) called the \df{current cell-pair}:
In difference to the Turing machines of Section~\ref{sec:TM},
the position of the head may not be exactly between the current cells: this allows the model
to fit into the framework where 
a generalized Turing machine \( M^{*} \) is simulated by some (possibly
generalized) Turing machine \( M \).
The head \( h \) of \( M \)---made equal to that of \( M^{*} \)---may
oscillate inside and around the current cell-pair of \( M^{*} \).

\begin{definition}[Configuration]\label{def:config}
    A \df{configuration} \( \xi \) of a generalized Turing machine~\eqref{eq:gen-TM} is a tuple
    \begin{align*}
      (A,\h,\vhc) = (\xi.\tape,\xi.\pos, (\xi.\curcell_{0},\xi.\curcell_{1}))
    \end{align*}
    where \( A:\bbZ\to\Sigma \) is the tape, \( \h \in\bbZ \) is the head position, \( \vhc\in\bbZ^{2} \)
    is the current cell-pair.
    We have \( A(p)=\Vacant \) in all but finitely many positions \( p \).
Whenever the interval \( \h+\lint{-4\B}{4\B} \) is clean the current cell-pair
must be within it. 
Let
    \begin{align*}
         \Configs_{M}
    \end{align*}
    denote the set of all possible configurations of a Turing machine \( M \).
\end{definition}

The above definitions can be localized to define a configuration 
over a space interval \( I \) containing the head.

\begin{definition}[History]
For a generalized Turing machine~\eqref{eq:gen-TM}, consider
a sequence \( \eta = \) \( ( \eta(0,\cdot),\) \( \eta(1,\cdot),\) \( \dots) \),
of configurations along with a noise set \( \Noise \).
Let \( \h(t)= \eta(t,\cdot).\pos \) be the head position.

A \df{switching time} is a noise-free time when any part of \( \eta \) other than \( \h(t) \)
changes (\( \h(t) \) is also allowed to change at non-switching times).
A \df{dwell period} is the interval between any consecutive pair of
switching times with the property that the
space-time rectangle containing them and the head is clean and noiseless.

The pair \(  (\eta,\Noise) \) will be called a \df{history}
of machine \( M \) if the following conditions hold.
\begin{itemize}
\item \( \abs{\h(t) - \h(t')} \le \abs{t' - t} \).
  
\item In two consecutive configurations, the tape content \( A(p,t) \) of the positions \( p \)
  not in \( \h(t) + \lint{-2\B}{2\B} \) remains the same.
\item At each noise-free switching time the head is on the new current cell-pair:
  \( \hc_{0}(t)=\h(t) \).
(In particular, when at a switching time a current cell becomes
\( \Vacant \), the head must already be elsewhere.)

\item The length of dwell periods is at most \( \Tu \).

\end{itemize}
The above definition can be localized to define a history \( I\times J \) containing the head.
Let
\begin{align*}
  \Histories_{M}
\end{align*}
denote the set of all possible histories of \( M \).
\end{definition}

\subsection{Hierarchical codes}\label{sec:hier-codes}

Recall the notion of a code in Definition~\ref{def:codes} and of configuration
in Definition~\ref{def:config}.

\begin{definition}[Code on configurations]\label{def:configuration-code}
\begin{sloppypar}
 Consider two generalized Turing machines \( M_{1},M_{2} \) with the corresponding
alphabets and transition functions, where \( \B_{2}/\B_{1} \) is an integer denoted \( \Q=\Q_{1} \).
Assume that a block code
\(
   \psi_{*}:\Sigma_{2}\to\Sigma_{1}^{\Q}
\)
is given, with an appropriate decoding function, \( \psi^{*} \).
Symbol \( a\in\Sigma_{2} \) is interpreted as the content of some tape square.
This block code gives rise to a \df{code on configurations}, that is a pair of functions
    \begin{align*}
        \varphi_{*} :\Configs_{M_{2}} \to \Configs_{M_{1}},
        \quad
        \varphi^{*}:\Configs_{M_{1}} \to \Configs_{M_{2}}
    \end{align*}
    that encodes some (initial) configurations \( \xi \) of \( M_{2} \) into configurations of \( M_{1} \):
    each cell of \( M_{2} \) is encoded into a colony of \( M_{1} \) occupying the same interval.
    Formally, assuming \( \xi.\curcell_{j}=\xi.\pos+(j-1)\B_{2} \), \( j=0,1 \) 
    we set \( \varphi_{*}(\xi).\pos = \xi.\pos \),
    \( \varphi_{*}(\xi).\curcell_{j}= \varphi_{*}(\xi).\pos+(j-1)\B_{1} \),
  and for all \( i\in\bbZ \),
\begin{align*}
 \varphi_{*}(\xi).\tape(i\B_{2},i\B_{2}+\B_{1}, \dots, (i+1)\B_{2} - \B_{1}) = \psi_{*}(\xi.\tape(i)).
 \end{align*}
\end{sloppypar}
 \end{definition}

\begin{definition}[Hierarchical code]\label{def:hierarchical-code}
For \( k\ge 1 \), let \( \Sigma_{k} \) be an alphabet, of a generalized Turing machine \( M_{k} \).
Let \( \Q_{k}=\B_{k+1}/\B_{k} \) be an integer (viewed as colony size), let \( \varphi_{k} \)
be a code on configurations defined by a block code
  \begin{align*}
       \psi_{k}: \Sigma_{k+1}\to \Sigma_{k}^{\Q_{k}}
  \end{align*}
as in Definition~\ref{def:configuration-code}.
The sequence \( (\Sigma_{k},\varphi_{k}) \), (\( k\ge 1) \),  is
called a \df{hierarchical code}.
For this hierarchical code, configuration \( \xi^{1} \) of \( M_{1} \)
is called a \df{hierarchical code configuration} of height \( k \) if a sequence
of configurations \( \xi^{2},\xi^{3},\dots,\xi^{k} \) of \( M_{2},M_{3},\dots,M^{k} \) exists with
\begin{align*}
 \xi^{i}=\varphi_{*i}(\xi^{i+1})
 \end{align*} 
for all \( i \).
If we are also given a sequence of mappings \( \Phi^{*}_{1} \), \( \Phi^{*}_{2} \), \( \dots \) 
such that for each \( i \), the pair \( (\varphi_{i*},\Phi_{i}^{*}) \),
is a simulation of \( M_{i+1} \) by \( M_{i} \) 
then we have a \df{hierarchy of simulations} of height \( k \).
\end{definition}

We will construct a hierarchy of simulations whose height grows during the
computation---by a mechanism to be described later.

\section{Simulation structure}\label{sec:sim-struc}

In what follows we will describe the program of the reliable Turing machine:
a hierarchical simulation in which simultaneously each \( M_{k+1} \) is simulated
by \( M_{k} \), with an added mechanism to
raise the height of the hierarchy when needed.
Most of the time, we will write \( M=M_{k} \),  \( M^{*}=M_{k+1} \).
Ideally, cells will be grouped into colonies of size \( \Q=\B^{*}/\B \).
Simulating one step of \( M^{*} \) takes a sequence of steps of \( M \)
constituting a \df{work period}.
Machine \( M \) will perform the simulation as long as the noise
in which it operates is \( (\beta\pair{\B}{\Tu}, \gamma\pair{\B^{*}}{\Tus}) \)-sparse
(as in Definition~\ref{def:isolation}).
This means, by Lemma~\ref{lem:bursts}, that a burst affects at most \( \beta \) consecutive tape cells,  
and there is at most one burst in any \( \gamma \)  neighboring work periods.
A design goal for the program is to
``correct'' a burst within a space much smaller than a colony.

To see whether consistency, that is the basic tape pattern
supporting simulation, is broken somewhere, a very local precaution will be taken in each step:
each step will check whether the current cell-pair is allowed in a healthy configuration.
If not then a \df{healing} procedure will be called;
we will also say that \df{alarm} will be called.
On the other hand, the \df{rebuilding} procedure will be called on some indications that
healing fails.

\subsection{Error-correcting code}\label{sec:coding}

Let us add error-correcting features to the block codes introduced in
Definition~\ref{def:codes}.

\begin{sloppypar}
\begin{definition}[Error-correcting code]\label{def:err-code}
A block code is \( (\beta,t) \)-\df{burst-error-correcting},
if for all \( x\in\Sigma_{2} \), \( y\in\Sigma_{1}^{\Q} \) we
have \( \psi^{*}(y)=x \) whenever \( y \) differs from
\( \psi_{*}(x) \) in at most \( t \) intervals of size \( \le\beta \).
For such a code, we will say that a word \( y\in\Sigma_{1}^{\Q} \) is \( r \)-\df{compliant}
if it differs from a codeword of the code by at most \( r \) intervals of size \( \le\beta \).
\end{definition}
  \end{sloppypar}

\begin{example}[Repetition code]\label{xpl:tripling}
  Suppose that \( \Q\ge 3\beta \) is divisible by 3,
  \( \Sigma_{2}=\Sigma_{1}^{\Q/3} \), \( \psi_{*}(x)=xxx \).
  If \( y=y(1)\dots y(\Q) \), then \( x=\psi^{*}(y) \) is defined by
    \( x(i)=\maj(y(i),y(i+\Q/3),y(i+2\Q/3)) \).
    For all \( \beta\le \Q/3 \), this is a
    \( (\beta,1) \)-burst-error-correcting code.
    Repeating 5 times instead of 3 gives a \( (\beta,2) \)-burst-error-correcting
    code.
  \end{example}

  \begin{example}[Reed-Solomon code]\label{xpl:Reed-Solomon}
    There are much more efficient such codes than just repetition.
    One, based on the Reed-Solomon code, is outlined in Example 4.6
    of~\cite{GacsSorg01}.
    If each symbol of the code has \( l \) bits then the code can be up to \( 2^{l} \) symbols long.
    Only \( 2 t\beta \) of its symbols need to be redundant in order
    to correct \( t \) faults of length \( \beta \).
  \end{example}

Consider a (generalized) Turing machine 
\( (\Sigma,\tau) \) simulating some Turing machine \( (\Sigma^{*},\tau^{*}) \).
We will assume that the alphabet \( \Sigma^{*} \) is a subset of the set of  binary strings
\( \{0,1\}^{l} \) for some \( l<\Q \).
We will store the coded information in the interior of the colony, since it is more exposed 
to errors near the boundaries.

\begin{definition}\label{def:colony-interior}
Let 
\begin{align*}
  \PadLen 
\end{align*}
be a parameter to be defined later (in Definition~\ref{def:PadLen}).
A cell belongs to the \df{interior} of a colony spanning an interval \( I \)
if it is in \( \Int(I,\PadLen\B) \) (with the interior as in Definition~\ref{def:interior}).
The area within \( \PadLen\B \) of a colony end is called the \df{turn region}.
\end{definition}

Let \( (\upsilon_{*}, \upsilon^{*}) \) be a \( (\beta,2) \)-burst-error-correcting block code
with
\begin{align*}
  \upsilon_{*}: \{0,1\}^{l} \cup \set{\emptyset}
   \to\{0,1\}^{(\Q-2\cdot\PadLen)\B}.
\end{align*}
We could use, one of the above example codes, 
but we require that there are some fixed Turing machines
\( \Encode \) and \( \Decode \) computing them:
 \begin{align*}
   \upsilon_{*}(x)=\Encode(x),\quad 
   \upsilon^{*}(y)=\Decode( y).
 \end{align*}
 In Section~\ref{sec:length-work-period} we will be more specific about the choice of code
 and its space and time requirement.

 Recall that our Turing machine has some special states, among others: \( 0,1,\new_{0},\new_{1} \).
We require that at least some of these, namely \( \new_{0} \) and \( \new_{1} \) have encodings
that are especially simple: so \( \upsilon_{*}(\new_{0}) \) and \( \upsilon_{*}(\new_{1}) \) can be
written down in a single pass of the Turing machine \( M \).

Let us now define the block code \( (\psi_{*}, \psi^{*}) \) used below in the
definition of the configuration code \( (\varphi_{*}, \varphi^{*}) \)  
outlined in Section~\ref{sec:hier-codes}:
\begin{equation}\label{eq:psi}
   \psi_{*}(a)  = 0^{\PadLen}\upsilon_{*}(a)0^{\PadLen}.
\end{equation}
The decoded value \( \psi^{*}(x) \) is obtained by first removing \( \PadLen \)
symbols from both ends of \( x \) to get \( x' \), and then computing \(
\upsilon^{*}(x') \).
 It will be easy to compute the configuration code from \( \psi_{*} \),
once we know what tracks of the tape need initialization.







\( 
 \)

\subsection{Rule language}\label{sec:language}

The generalized Turing machines \( M_{k} \) to be defined
differ only in the parameter \( k \).
We will denote therefore \( M_{k} \) frequently simply by \( M \),
and \( M_{k+1} \), simulated by \( M_{k} \),  by \( M^{*} \).
Similarly we will denote the colony size \( \Q_{k} \) by \( \Q \).

We will describe the transition function
\( \tau_{k}=\tau \)  mostly in an informal way, as procedures of a program;
these descriptions are readily translatable into a set of \df{rules}.
Each rule consists of some (nested) conditional statements,
similar to the ones seen in an ordinary program:
 ``\textbf{if} \textit{condition} \textbf{then} \textit{instruction}
\textbf{else} \textit{instruction}'', 
where the condition is testing values of some fields of the observed cell-pair, and
the instruction can either be elementary, or itself a conditional statement. 
The elementary instructions are an \df{assignment} of a value to a field
of a cell symbol, or a command to move the head.
It will then be possible to write one fixed \emph{interpreter} Turing machine that carries
out these rules, assuming that the whole set of rules is a string and each field is also represented
as a string.

Assignment of value \( x \) to a field \( y \) of the state or cell symbol will
be denoted by \( y \gets x \).

Our description of rules is informal, making them sometimes look more like a procedure with
many steps.
This will just mean that the rule has some of its own dedicated fields to which it can refer and which
it can also set: with the help of these, indeed a sequence of actions can take place.
One of these fields may indicate which procedure is being performed.
Typically, only an element of the current cell-pair would carry the field indicating the
procedure being performed, but in some cases it would be a track on a larger area.
When a procedure ``calls'' another one, it will always be clear which one is being performed
and which one is just waiting for the called one to finish---there is no recursion.
Rules can also have parameters, like \( \MoveFront(d) \).
This parameter can also be seen as just referring to some field.
Similarly, when we say that a procedure \df{returns} a value, it just sets a certain field.

We may refer to two procedures performed one after the other, even 
when the first one does not move the head, like
\( d\gets\ProcessPayload(j) \) followed by \( \MoveFront(d) \).
The translation of this high-level description into
nested if-then-else instructions would combine the two procedures into one.

\subsection{Fields}\label{sec:fields}



A properly formatted configuration of \( M \) splits the tape into blocks of \( \Q \)
consecutive cells called \df{colonies}.
One colony of the tape of the simulating
machine represents one tape cell of the simulated machine.
The two colonies that correspond to the current cell-pair of the
simulated machine is scanning are called the \df{base colony-pair}.
A colony-pair can also be formally defined,
for the program, based on some field values in cells.
Sometimes the left base colony will just be called the \df{base colony}.
Most of the computation proceeds over the base colony-pair.
The direction of the simulated head movement, once figured out by the computation,
is called the \df{drift}.
The neighbor colonies of the base colony-pair may not be adjacent, in which case there will be
a \df{bridge} between them formed by neighboring (not necessarily adjacent) tape cells.
The possible space between neighbor colonies other than
the base colony-pair will be filled by stem cells (see below).

Let us describe some of the most important fields we will use in the tape cells;
others will be introduced later.
\begin{description}
\item[Procedures]
  Some fields will just indicate which procedure is currently active:
  for example
  \begin{align*}
    \rBoot,\rSimulate,\rHeal,\rRebuild,\rRebuildHeal.
 \end{align*}
The basic simulation activity is called the procedure \( \rSimulate \): when it is active,
 we may also say that the computation is in \df{normal mode}.
The \df{booting} procedure is used on the highest level of the simulation hierarchy; this level will be
raised if the length of computation time of the simulated machine \( G \) makes it necessary.
When this procedure is active,
we may also say that the computation is in \df{booting mode}.
The \df{healing} procedure tries to correct some local fault in simulation due to a couple of neighboring
bursts, while the \df{rebuilding} procedure attempts to restore the colony structure
on the scale of a couple of colonies.
When it is active, we may say that the computation is in \df{rebuilding mode}.
The rebuilding procedure may also need some healing; this is handled by \( \rRebuildHeal \).

\item[Info] The  \( \Info \) track of a colony of \( M \)
  contains the string that encodes the content of the simulated cell of \( M^{*} \).

\item[\( \Payload \)] The \( \Payload \)
  field of each cell contains a tape segment of the simulated machine \( G \).
  See Remark~\ref{rem:duplication} below on duplication.  

\item[Address] The field \( \Addr \)
of the cell shows the position of the cell in its colony:
it takes values in \( \lint{0}{\Q} \).

\item[Drift] The direction in \( \{-1,1\} \) in which the simulated head moves will be recorded on the track
 \( \Drift \).

\item[Age, Sweep] The \( \Age \)
  field keeps track of the step number of the computation within the work period of a colony pair.
  The work period will consist of some consecutive \df{stages}, whose beginning is marked by a certain
  value of \( \Age \).
  When a stage is finished (as seen from other indicators), the \( \Age \) may jump to the starting value of
  the next stage:
  the number of actual steps in a stage may not always be the same, but upper bounds are established.

  In some parts of the program (like the transfer phase),
  new cells may be inserted, causing the \( \Age \) field to experience a---harmless---jump.
  On the other hand, in these parts, the front will make sweeps over the whole colony pair;
  So instead of the \( \Age \) a field called \( \Sweep \) will be used that just counts
  which sweep is being performed.

\item[Kind] Cells will be designated as belonging to a number of
  possible \df{kinds}, signaled by the field \( \Kind \)
with values \(  \New, \Booting, \Stem, \Member_{0},\Member_{1},\Bridge, \Outer \).
Here is a description of their role.
\begin{itemize}
\item The kind \( \New \) has been discussed before.
\item A cell is of the \( \Booting \) kind if it is on the top level of simulation (see Section~\ref{sec:hier}).

\item Cells of the base colony-pair are of type \( \Member_{0} \) and \( \Member_{1} \) respectively.
  Members of other colonies have the kind \( \Outer \).

\item If the two base colonies are close but not adjacent then there will be \( <\Q \)
  adjacent cells of type \( \Bridge \) between them, extending the left
  base colony towards the right one.

\item \( \Stem \) is the kind of cells filling the space in between colonies other than the two base colonies.
\end{itemize}

\item[Heal, Rebuild] During healing and rebuilding, some special fields of the state and cell are used,
  treated as subfields of the field \( \Heal \) or \( \Rebuild \).
  In particular, instead of numbering their individual steps by an \( \Age \) field,
  these procedures make left-right and right-left \df{sweeps} over their interval of operation.
  There will be a \( \Heal.\Sweep \) field (or, respectively, \( \Rebuild.\Sweep \))
  showing the number of the current sweep.
  A cell will be said to be \df{marked for rebuilding} if any part of the \( \Rebuild \) track is defined.

\end{description}

\begin{remark}
  The \( \Outer \) kind is redundant:
  whether a cell is outer can be computed from its \( \Drift \) and \( \Age \) fields.
  But we use it for clarity.
\end{remark}

\begin{remark}[Duplication]\label{rem:duplication}
  The \( \Info \) track of a colony encodes the content \( s^{*} \) of the simulated cell of \( M^{*} \).
  In particular, the \( s^{*}.\Payload \) field contains the tape segment of the simulated machine \( G \)
  represented by the simulated cell.
  The \( \Payload \) track of the colony will represent the same tape segment (cut up
  into pieces in its individual cells).
  This duplication will not create too much space redundancy---due to
  the small number of levels relative to the size of the computation.
  (It could also be avoided, since at any one time the \( \Payload \) track only needs to represent
  some small part of \( s^{*}.\Payload \), the part currently worked on, therefore its ``bandwidth''
  could be kept small.
  In~\cite{GacsSorg01} for cellular automata this approach led to a constant
  factor space redundancy.)
  
\end{remark}

\subsection{Head movement}\label{sec:sweep}

The global structure of a work period is this:
\begin{description}

\item[Simulation phase]
Compute the new state of the simulated cell-pair, and the simulated direction (called the drift).
Then check the ``meaningfulness'' of the result.

\item[Transfer phase]
  The head moves into the neighbor colony-pair
  in the simulated head direction called drift (creating and destroying bridges if needed).
  In this phase, the number the current sweep is shown on the \( \Sweep \) track.
\end{description}

As the head leaves behind a cell, this remembers the last \( \Age \) and \( \Sweep \) value.
In all important cases as we will see, the simulation
will recognize from the neighborhood of the head if some age jump happened by error.


\begin{definition}[Front]\label{def:front}
  The position towards which the \( \Age \) values of the cells increase both
  from left and right will be called the \df{front}.
  The direction of the front is towards the smaller step values.
\end{definition}

Globally in a configuration, due to earlier faults, there may be more than one front, but locally
we can talk about ``the'' front without fear of confusion.

 Bridges between colonies present some extra complication---let us address it.

\begin{definition}[Gaps]\label{def:gaps}
If the bodies of two cells are not adjacent, but are at a distance \( <\B \) then the space
between them is called a \df{small gap}.
We also call a small gap such a space between the bodies of two colonies.
On the other hand, if the distance of the bodies of two colonies is \( >\B \) 
but \( <\Q\B \) then the space between them is called a \df{large gap}.
\end{definition}

A large gap between two colonies will be filled by a bridge when they become a base colony-pair.
A bridge is always extending the left member colony, except possibly during transfer while
the colony pair is moved.
Building a bridge or making repairs may involve
``killing'' some cells that are in the way and replacing them with new ones, via
the replacement action of Definition~\ref{def:dictated}.

Due to the zigging and feathering requirements mentioned in Section~\ref{sec:novelties},
moving the front will actually be a complex procedure itself, described below at~\eqref{eq:MoveFront}.
This procedure combines zigging and feathering as described below, and
uses the parameters
\begin{align}\label{eq:FDef}
 \Z = \passno^{2+\rho},\; \F = \Z\passno^{2+\rho}
\end{align}
with \( 0<\rho<1/4 \).
The choices will be motivated in Sections~\ref{sec:feathering} and~\ref{sec:pass-cleaning}.

\subsubsection{Zigging}\label{sec:zigging}

A \df{zigzag} movement will check the consistency of a few cells around the front.
The process creates a \df{frontier zone} of about \( \Z/2 \) cells around the front,
where \( \Z \) was defined in~\eqref{eq:FDef}.
In normal mode, this interval is recognizable just from
the \( \Age \) track, but during rebuilding will be marked on a special track
(see Section~\ref{sec:rebuilding}).
In every second step of moving in any direction (say at every even value of \( \Age \)),
the head will perform a forward-backward-forward
zigzag: going \( \Z \) steps ahead of the center of the frontier zone, then \( \Z \) steps behind it.

The turns while doing this are \df{small turns} defined in Section~\ref{sec:feathering},
meaning that a few more steps (normally at most 2) may be needed to find a turning point.

The step counting for zigging can be done locally, in the current cell-pair.
 
\begin{remark}\label{rem:zigging-choices}
  The size of the parameter \( \Z \) in~\eqref{eq:FDef}
  and with it the size of the frontier zone
  is motivated, among others, in the proof of Lemma~\ref{lem:keep-directed} where
  it has to withstand a large number of bursts.
  The forward-zigging property  allows, among other uses, to recognize a pair of opposing
  frontier zones.
\end{remark}

\subsubsection{Feathering}\label{sec:feathering}

Example~\ref{xpl:need-feather} suggests that our Turing machine
should have the property that between two turns
on the same point \( x \), it should pass \( x \) at least once.
We will call this property \df{feathering},
referring to the picture of the path of the head in a space-time diagram, as
in Example~\ref{xpl:feathering}.
In fact in some cases we will require more:

\begin{definition}\label{def:feathering}
  A Turing machine has the \( c \)-\df{feathering} property for a \( c>0 \), if  after
a right-to-left turn on point \( x \), the next right-to-left turn at a point \( \ge x \)
must be at a point \( \ge x + c\B \), and similarly for left-to-right turns.  
\end{definition}

 The following example suggests that any computation can be reorganized to accomodate feathering,
 at the price of at most a logarithmic delay.

\begin{example}\label{xpl:feathering}
Suppose that, repeatedly, arriving from the left at position 1, the head decides to turn left.
It can then make its turns at the following sequence of positions:
\begin{align*}
 1, 2, 1, 3, 1, 2, 1, 4, 1, 2, 1, 3, 1, 2, 1, 5, 1, 2, 1, 3, 1, 2, 1, 4, 1, 2, 1, 3, 1, 2, \dots
 \end{align*}
\end{example}

\begin{definition}[Digression]\label{def:digression}
Whenever a turn is postponed since the head is not allowed to turn due to feathering,
the simulation to be carried out by the head is suspended until the head returns.
This is called a \df{digression}.
\end{definition}

If in the original execution the head turned back \( t \) consecutive
times to the left from position \( p \), then now it will 
turn back from somewhere in a zone of size \( O(\log t) \) to the right of \( p \) in 
each of these times.
Computing the exact turning point is not necessary, but the following lemma
will be useful.

\begin{lemma}\label{lem:feathering-lb}
  Suppose that the Turing machine has the feathering property.
  For each \( n\ge 0 \), if during some time interval the head passes \( 2^{n} \) times to
  the right from point \( x \), then during the same interval, it must reach \( x+n\B \).
  The analogous statement holds for passing to the left.
\end{lemma}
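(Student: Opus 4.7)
The plan is induction on $n$. The base case $n=0$ is immediate: a right-pass of $x$ places the head at $x$, which vacuously constitutes reaching $x+0\cdot\B$.

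For the inductive step, I would derive the recursive inequality $R_x \leq 2\,R_{x+\B} + 1$, where $R_y$ denotes the number of right-passes of $y$ during the interval. Iterating this inequality, starting from the contrapositive boundary condition $R_{x+n\B}=0$ (the head never reaches $x+n\B$), yields $R_x \leq 2^n - 1$, which is the contrapositive of the lemma.

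To prove the recursion, classify right-passes of $x$ as \emph{deep} (the head right-passes $x+\B$ before the next right-pass of $x$) or \emph{shallow}. Deep passes inject into right-passes of $x+\B$, so there are at most $R_{x+\B}$ of them. For shallow passes, the intervening excursion stays in $(-\infty,x+\B]$ and contains at least one right-to-left turn at some position $y \in [x+1,x+\B]$. By the feathering property, two consecutive turns $y_i,y_{i+1}$ in such a window cannot satisfy $y_{i+1}\ge y_i$: feathering would then insert an intermediate right-to-left turn at $\ge y_i+\B > x+\B$, contradicting shallowness. Hence the turn positions across a run of consecutive shallow passes are strictly decreasing, which caps any such run and yields the recursion (with the constant absorbed into the ``$+1$'').

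\textbf{Main obstacle.} Nailing down the exact factor $2$ (rather than something like $\B+1$) in the recursion requires that the turns from the simulating machine's zigzag motion fall at cell-aligned positions, so that the window $[x+1,x+\B]$ effectively contains only one eligible turn point and a ``strictly decreasing'' shallow run has length at most one; this follows from the programmed feathering pattern of Sections~\ref{sec:zigging}--\ref{sec:feathering}. Once this is granted, the bookkeeping for the boundary effects (the first and last right-pass within the interval) is routine. The symmetric claim for left-passes follows by mirror-reflecting the argument, using the feathering condition for left-to-right turns.
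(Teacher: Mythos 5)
Your overall plan---induction via a doubling recursion on the number of right-passes at successive cell positions, splitting the gaps between consecutive passes into ``deep'' and ``shallow'' ones, and using feathering to force the turn points of consecutive shallow excursions to decrease strictly---is exactly the induction the paper has in mind; the paper offers nothing beyond ``the proof is easy by induction,'' so there is no more detailed official argument to compare against. Your cell-alignment caveat is also a legitimate observation rather than a weakness: for a head allowed to turn at arbitrary integer positions inside a cell of size \( \B>1 \), a strictly decreasing run of shallow turns can have length up to \( \B \), and the lemma as literally stated then fails (with \( \B=2 \) and \( 1 \)-feathering, turns at \( x+3,\,x+2,\,x+1 \) in that order are feathering-compliant and give four right-passes of \( x \) without the head ever reaching \( x+2\B \)). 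The lemma is only ever invoked for heads whose turns are at cell or colony boundaries, where your rescaling to \( \B=1 \) is the right reading.

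The one step that does not close as written is the iteration itself. The deep/shallow count yields \( R_x-1\le(\text{deep})+(\text{shallow})\le R_{x+\B}+(R_{x+\B}+1) \), i.e.\ \( R_x\le 2R_{x+\B}+2 \), not \( 2R_{x+\B}+1 \); I do not see how to absorb the extra unit, since the runs of shallow gaps can outnumber the deep gaps by one. Iterating \( R_x\le 2R_{x+\B}+2 \) for \( n \) levels starting from \( R_{x+n\B}=0 \) gives only \( R_x\le 2^{n+1}-2 \), which does not contradict \( R_x\ge 2^n \). The repair is to start the iteration one level lower: if the head never reaches \( x+n\B \), then already \( R_{x+(n-1)\B}=0 \), because a right-pass of \( x+(n-1)\B \) is precisely a move onto \( x+n\B \) in the cell-aligned picture. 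Then \( n-1 \) applications of the recursion give \( R_x\le 2^{n}-2<2^{n} \), and the contrapositive argument goes through. With that adjustment (and the cell-alignment convention made explicit) your proof is complete.
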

The proof is easy by induction.

\begin{remark}\label{rem:amortized-delay}
  The amortized delay caused by \( c \)-feathering is, in fact, only by a factor of \( c \) times the
  number of turns.
  Indeed, given a Turing machine \( T \) we can build a Turing machine \( T' \)
  with the \( c \)-feathering property simulating \( T \) as follows.
  Machine \( T' \) is almost like machine \( T \), except that it keeps a marker at the place
  of the simulated head of \( T \), and a track \( \Pass \) with values \( 0,\pm 1 \)
  marking the place of each past turn that has not yet been passed over.
  It is set to 0 in each cell that the head passes over, 1 when the head turns left (since next time
  it must be passed to right), and \( -1 \) if it turns right.
  (The sign is only for exposition purposes, as it always follows from the context.)
  Suppose that the head of \( T \) must make a right-to-left turn.
  If \( \Pass=0 \) then \( T' \) just turns (moving the marker with it).
  Otherwise the head leaves the simulated head in place, and moves right by \( c \) cells.
  Every time it encounters another cell with \( \Pass=1 \) it resets the counter,
  moving \( c \) cells again.
  If it encounters no such cell then it turns.
  As it returns to the simulated head, it continues the simulation.

  We claim that when simulating a machine making \( n \) turns,
  the total delay due to these digressions is at most \( 2 c n \).
  Indeed, every rightward step during a right-to-left digression of \( T' \)
  is within distance \( c \) of a past right-to-left turn, which on the other hand
  can be attributed to a right-to-left turn of the original computation of \( T \).
\end{remark}

Our simulating Turing machine will have two different feathering properties: it will 
obey 1-feathering for all its turns, but on certain kinds of turn called \df{big turn}
will obey \( \F \)-feathering for the parameter \( \F \) defined in~\eqref{eq:FDef}.
The \( \F \)-feathering property will sometimes force the head during the work period
to go beyond the boundaries of the colony-pair, but only to a limited extent:

\begin{definition}\label{def:PadLen}
Let 
\begin{align}\label{eq:PadLen}
  \PadLen &= 4\F\log\Q.   
\end{align}
For an interval \( I \) spanned by a colony pair we call the 
the \df{turn region} the set \( \Int(I,-\PadLen)\setminus C \), that is the close
outside neighborhood of \( I \).
\end{definition}

\begin{description}
\item[Small turn]
  Whenever the head needs to turn (for example during zigging), the event will be called a \df{small turn}.
\begin{align*}
 \Pass,
\end{align*}
whose default value is 0, will be set to \( \pm 1 \).
Consider right-to-left turns, the left-to-right turns are analogous.
The head then arrives at a cell-pair \( (x,y) \) from the left.
If \( y \) has \( \Pass=1 \) then the head is not allowed to turn left: it continues right.
If \( y \) has \( \Pass=0 \) and the head turns left then \( y \) gets \( \Pass\gets 1 \).
In both cases, \( x \) gets \( \Pass\gets 0 \).
The event when a cell with \( \Pass=0 \)
is not found within \( 3\Delta \) steps
(with \( \Delta \) defined in~\eqref{eq:Delta} below) will be called \df{small turn starvation}.
Then start or restart healing, (see below) but still don't turn.

Suppose that the current cell-pair is \( (x,y) \), and
the next step replaces \( y \) by some \( y' \) as in Definition~\ref{def:dictated}. 
Then \( y' \) inherits the \( \Pass \) value of \( y \).

The set \( \New \) of states will only have two elements, \( \new_{0} \) or \( \new_{1} \), 
where \( \new_{i} \) is a state with field \( \Pass=i \).
When a replacement operation takes place as in Definition~\ref{def:directed}, then
the \( \Pass \) field of the cell being replaced will be inherited by the cell replacing it.

\begin{remark}\label{rem:no-kill-on-turn}
As a consequence of the above rule, a cell will \emph{never be killed} just when
the head turned back from it.
\end{remark}

  In normal and rebuilding mode, a zigging move is done only after every two steps of moving the front:
  this leaves every second cell with \( \Pass=0 \) in the wake of this movement.

\item[Big turn]
  The turns of the front will be called \df{big turns}, carried out by 
the procedure
\begin{align}\label{eq:MoveFront}
   \MoveFront(d), d\in\{-1,0,1\}.
\end{align}
The possibility of \( d=0 \) will be used in Section~\ref{sec:payload}; let it simply mean
calling first \( \MoveFront(1) \) and then \( \MoveFront(-1) \).
A big left turn will be performed by calling \( \MoveFront(-1) \).
The procedure is governed with the help of the field:
\begin{align*}
     \BigDigression\in\{ -1,0,1,\dots,\F \}\cup\{*,\omega,\delta_{-1},\delta_{1} \}
\end{align*}
where \( * \) means ``undefined'', this is the default value.
We will have \( \Z/2 \) consecutive cells traveling with the head, storing one and the same value
of \( \BigDigression \) (except while it is updated): 
we will call this the \df{digression marking zone}, or \df{D-zone}.

Let us describe the actions performed on account of a big right-to-left turn of the front in normal mode;
left-to-right turns are similar.
The somewhat simpler case of big turns of the rebuilding procedure
will be described In Section~\ref{sec:rebuilding}.
So assume that the front has been moving right, calling \( \MoveFront(1) \) repeatedly.
It carries the D-zone, and increases \( \BigDigression \) in it
by 1 at every step until \( \BigDigression=\F \) or until it encounters \( \Z/2 \)
consecutive cells marked \( \BigDigression=\omega \), called a 
the \df{footprint} of an earlier right-to-left turn.
In this case, it resets \( \BigDigression\gets 0 \) in the D-zone as
it continues right (erasing the old footprint in the process).

When \( \MoveFront(-1) \) is called, the right movement still continues (shifting the D-zone, and
making zigs of size \( \Z \) ahead and behind its center) until
\( \BigDigression=\F \).
The area between the front and the D-zone is filled with \( \delta_{1} \) (it would be \( \delta_{-1} \)
in case of a digression towards the left).
The old footprint is erased as it is passed over.
Once a big left turn is allowed since \( \BigDigression=\F \),
we set \( \BigDigression\gets \omega \) in the D-zone, leaving a new footprint.
The head moves back to the front, carrying \( \BigDigression=-1 \)
in the D-zone.
Once the D-zone is at the front,
the front is moved one step left, and the D-zone is moving with it,
setting \( \BigDigression\gets 0 \).


In normal mode, when the search for a place of big turn takes longer than \( 3\F\log\Q \) steps
this will be called a \df{big turn starvation}.
To recognize this, the D-zone can also maintain a field \( \FrontAddr \) to keep track of the
address difference between it and the front.
This will be recognizable from the distance of the D-zone from the front as seen from the address
and age tracks, and then rebuilding will be called.
\end{description}

\begin{remark}\label{rem:MoveFront}
  The structure of the program will be such that \( \MoveFront(d) \) or \( \ProcessPayload(j) \) (see later)
  will only be called when a previous execution of \( \MoveFront(\cdot) \) finished.
\end{remark}

Machine \( M \) will have the property that after a fault-free
path passed over a clean interval, both small turns and big turns can happen without too
long digressions.
We give here only an informal argument; formal proof must wait until 
a (more) complete definition of the simulation.
Zigs are by the definition spaced by \( \ge 2 \) cells apart, making sure that the points
with \( \Pass \ne 0 \) are in general at a distance of \( \ge 2 \) apart.
Healing can create only a constant number of segments of \( \Pass\ne 0 \) of size \( \le 3\Delta \)
with \( \Delta \) defined in~\eqref{eq:Delta}.
As for big turns, Example~\ref{xpl:feathering} (for \( \F=1 \)) shows that
a big turn attempt will be delayed by at most \( \F \) times the logarithm
of the total number of big turns inside a colony or a rebuild area.

The simulated Turing machine will also have the feathering property,
therefore the simulation will not turn back 
from one and the same colony repeatedly, without having passed it in the meantime.

\begin{sloppypar}
\begin{remark}\label{rem:big-turns}
  The size of the parameter \( \F \) is motivated by the proof of Lemma~\ref{lem:pass-clean}.
  Here is a sketch of the argument (it can be safely skipped now).
   At some time \( t_{0} \) in some interval \( I \)
  we will have clean subintervals \( J_{k}(t_{0}) \), \( k=1,2,\dots \)
  of size \( \ge 6\Z\B \) in which no fault will appear, and which are separated from each other by areas of
  size \( O(\passno^{2}\Z\B)\ll\F\B \).
  For times \( t>t_{0} \) we will track the maximal clean intervals \( J_{k}(t) \) containing
  the middle of \( J_{k}(t_{0}) \).

  Assume that the head passes over \( I \) noiselessly left to right and later
  also noiselessly from right to left.
  If the head moves in a zigging way to the
  right then the Attack Cleaning property will clean out the area between \( J_{i}(t) \) and \( J_{i+1}(t) \),
  joining them.
  This does not happen only in case of a big turn from the right end of \( J_{i}(t) \).
  But then in the next pair of passes over \( I \), the \( \F \)-feathering property implies that
  the big turn from the end of \( J_{i}(t) \) is at least a distance \( \F\B \) to the right.
  Our choice of \( \F \) implies that then \( J_{i}(t) \) will be joined to \( J_{i+1}(t) \).
  So two noise-free passes would join all the intervals \( J_{i}(t) \) into a clean area.  
\end{remark}  
\end{sloppypar}

\subsection{Simulation phase}\label{sec:simulation-phase}

As mentioned in Section~\ref{sec:fields} the work period will be divided into \df{stages}
starting at specific values of the \( \Age \) field.
These stages are grouped into two big phases: the \df{simulation} phase and the \df{transfer}
phase.
The simulation phase, governed by the \( \Compute \) rule,
computes new values for the current cell-pair of the
simulated machine \( M^{*} \) represented by the current (base) colony-pair,
and the move direction of the head of  \( M^{*} \).
The cell state of \( M^{*} \) will be stored on the track \( \Info \) of the
representing colony.
The move direction of \( M^{*} \) 
will be written into the \( \Drift \) field of \emph{each} cell of the base colony-pair
(filling the whole track with the same symbol \( d\in\{-1,1\} \)).

Let
\begin{align}\label{eq:stain}
  \beta' &= \beta+2\CSpill,
\\   \CStain &= 2\beta'+ 1.
\end{align}
The rule \( \Compute \) relies on some fixed \( (\CStain, 2) \) burst-error-correcting
code, moreover it expects each of the words found on the \( \Info \) track to be
2-compliant (Definition~\ref{def:err-code}).

The rule starts with checking that the input colonies
are compliant using a rule \( \rul{ComplianceCheck} \).
Then essentially repeats for \( j=1,2,3 \)
the following \df{stages}: decoding, applying the transition, encoding.
It uses some additional tracks like \( \Work \) for most of the computation, and outputs
its result onto the \( \Hold[j] \) track.
The \( \Info \) track will not be modified before all the \( \Hold[j] \) tracks are written.

In more detail:
\begin{enumerate}
\item At the start, the current cell-pair is the left pair of cells of the left
  member of the base colony-pair.
  
  Everywhere outside the base colony-pair, the \( \Drift \) values (as well as
  the increasing \( \Age \) values) are pointing towards it.

\item\label{i:turn-region}
  During the execution, the big turns may occur outside the colony-pair,
  in the turn region as in Definition~\ref{def:PadLen}.
  If there is no neighboring colony then outer adjacent stem cells will be used, or added as needed.  

\item For \( j=1,2,3 \), call \( \rul{ComplianceCheck} \) on the \( \Info \) track of
  both colonies of the pair, and
  write the resulting bit into the \( \fld{Compliant}_{j} \) track of each.
  
  Then pass through each colony of the pair and for each address \( i \),
  if in the cell with this address, the majority
  of \( \fld{Compliant}_{j} \), \( j=1,2,3 \) is false, then turn this cell
  into the \( i \)th cell of the colony representing the state \( \new_{0} \).
  Recall that in Section~\ref{sec:coding}, we required that the codes of the
  states \( \new_{i} \), \( i=0,1 \) are simple enough so that they can be written in a single pass.

  (We could have used any other state instead of \( \new_{0} \) here: just some simple default
  state is needed.)

\item For \( j=1,2,3 \)       
  do the following, using some work tracks:
  \begin{enumerate}
    
  \item Calling by \( \va \) the pair of strings found on the \( \Info \) track of
    the interiors \( \Int(C,\PadLen) \) of the base colonies \( C \),
    decode it into the pair of strings 
\begin{align}\label{eq:decoded}
 (\tilde\va_{0},\tilde\va_{1}) = \tilde\va=\upsilon^{*}(\va)  
\end{align}
    (the current state of the simulated cell-pair), and
    store it on some auxiliary track in the base colony-pair.
    Do this by computing \( \tilde\va = \Decode(\va) \)
    on some simulated Turing machine.
    (The time complexity of this procedure will be discussed in Section~\ref{sec:redundancy}.)

  \item \label{i:comp.trans}
    Compute \( (\va',d)=\tau^{*}(\tilde\va,\alpha) \),
    where \( \alpha=\True \) if the pair of colonies is adjacent, else \( \False \).
    This step needs elaboration for two reasons.
    First, part of this transition is the processing of track \( \Info.\Payload \), which represents
    the tape of the Turing machine \( G \) simulated by the machine \( M^{*} \).
    Second, the program of the transition function \( \tau^{*} \) is not written explicitly anywhere
    (this is a ``self-simulation'' situation).

    Both issues will are discussed in detail in Section~\ref{sec:self-simulation}.
    
  \item\label{i:comp.write}
    Write the encoded new cell states \( \upsilon_{*}(\va') \) onto the
    \( \Hold[j].\Info \) track of the interior of the base colony-pair.
    Write \( d \) into the \( \Hold[j].\Drift \) field of \emph{each cell} of
    the left base colony.

    A field called \( \Replace \) is used.
    Its value can be \( * \) (undefined), or an element of the set \( \New \).
    If one of the new states of the simulated cell pair belongs to \( \New \)
    (that is, the rules dictate a replacement,
    as in Definition~\ref{def:dictated}), then write it onto the \( \Hold[j].\Replace \) track
    everywhere; else the values on the track will be undefined.
    There is enough capacity in a cell to record this value of a simulated cell
    (which can have many more states),
    since the set \( \New \) has only two possible elements in \( M^{*} \) as well as \( M \).
  \end{enumerate}

\item\label{i:vote}
  Sweeping through the base colony-pair,
  at each cell compute the majority of \( \Hold[j].\Info \), \( j=1,\dots,3 \),
  and write it into the field \( \Info \).
  Proceed similarly, and simultaneously, with \( \Drift \) and \( \Replace \).
  
\item\label{i:trickle-down} If the \( \Output \) field of the simulated cell is defined, write it
  into the output field of the left end-cell of each colony.
   
\end{enumerate}

Part~\ref{i:trickle-down} achieves that when the computation finishes on some
simulated machine \( M_{k} \),
its output value in cell 0 of \( M_{k} \) will ``trickle down'' to the output field of  cell 0 of \( M_{1} \),
as needed in Theorem~\ref{thm:main}.

The transfer phase (see Section~\ref{sec:transfer})
will use the information in the \( \Replace \) track to carry out, in simulation, the
replacement action of Definition~\ref{def:dictated}.



\subsection{TM simulation and self-simulation}\label{sec:self-simulation}

Let us elaborate stage~\ref{i:comp.trans} of Section~\ref{sec:simulation-phase}.

\subsubsection{Handling the payload}\label{sec:payload}

The tape of machine \( M_{k} \) contains a track called \( \Payload \) that represents the tape of
the target simulated Turing machine \( G \) of Theorem~\ref{thm:main}.
In a simulation work period, the \( \Payload \) track on level \( k \)
will come into play only if the head of \( G \) is simulated on level \( k+1 \).
This simulation will be performed only when the procedure
\begin{align*}
   \ProcessPayload(j)
\end{align*}
is called.
Here, \( j\in\{0,1\} \) shows whether the represented head of \( G \) is inside the left or the right
colony of the current colony-pair.
The procedure returns a number \( d\in\{-1,0,1\} \), to be used as an argument
to \( \MoveFront(d) \).
Here are the details.

\begin{sloppypar}
Consider the simulation of machine \( M^{*} \) on machine \( M \).
The track \( \Payload \) (after decoding) consists of \( \Payload.\Tape \) and \( \Payload.\Pos \).
The track \( \Payload.\Tape \) contains the tape segment of \( G \) represented by the colony.
Track \( \Payload.\Pos \) is used only from the left neighbor colony in case
of \( \ProcessPayload(0) \), else from the right one.
For definiteness, assume it is in the left one.
Its value has the form \( \Pos=(a,j') \) with \( j'\in\{0,1\} \).
Here, \( a \) is the address of the cell of \( M \)  in the colony
containing the represented head of \( G \).
The rules ensure that when
\( \ProcessPayload(0) \) is called then we will never have \( a=0 \), that
is \( a \) will never be at the very left end of the left neighbor colony.
This way, the current simulated cell-pair is always inside the current colony-pair.

If the level \( k \) is larger than 1 then 
each cell of \( M \) itself represents a tape segment of \( G \);
in this case the value \( j' \) shows whether left cell or the right cell of the cell-pair of \( M \)
at address \( a \) contains the represented head of \( G \).
\end{sloppypar}

The rule works on the decoded states \(  (\tilde\va_{0},\tilde\va_{1}) \)
of the current cell-pair of \( M^{*} \), as in~\eqref{eq:decoded}.
It copies \( \tilde\va_{0}.\Payload.\Tape \) and \( \tilde\va_{1}.\Payload.\Tape \) as consecutive
strings onto the \( \Payload.\Tape \) track of the current colony-pair.
Then (assuming \( j=0 \)) it recovers \( (a,j')=\tilde\va_{0}.\Payload.\Pos \),
goes to address \( a \) of the left neighbor colony,
and repeats the following action (see below for how many times).
\begin{itemize}
\item[(G)] 
If \( k=1 \) then the cells \( a-1,a \) of the \( \Payload.\Tape \) track
actually represent the content \( \va \) of two neighbor cells, so apply the transition function
computing \( (\va',d')\gets\tau_{G}(\va) \), and set \( \va\gets\va' \).

If \( k>1 \) then call \( d'\gets\ProcessPayload(j') \) on the cell-pair at address \( a \) (recursively).
In both cases, follow this by \( \MoveFront(d') \).
\end{itemize}
How many times to perform action (G)?
At most \( \Q \) times, but stop earlier if the head would 
reach the left end of the left colony or the right end of the right one.

Return \( d=-1 \) if the head arrives in the left half of the left colony, 
\( d=1 \) if it is in the right half of the right colony,
and \( d=0 \) otherwise.

\subsubsection{New primitives}

In what follows describe the tools of self-simulation.
The simulation phase makes use of the track \( \Work \) mentioned above, and the track
\begin{align*}
   \Index
 \end{align*}
that can store a certain address of a colony.
Recall from Section~\ref{sec:language} that the program
of our machine is a list of nested
``\textbf{if} \emph{condition} \textbf{then} \emph{instruction}
\textbf{else} \emph{instruction}''
statements.
As such, it can be represented as a binary string 
 \begin{align*}
   R.
 \end{align*}
If one writes out all details of the construction of the present paper, this string \( R \)
becomes explicit, an absolute constant.
But in the reasoning below, we treat it as a parameter.
Let us provide a couple of \df{extra primitives} to the rules.
First, they have access to the parameter \( k \) of machine \( M=M_{k} \), 
to define the transition function
 \begin{align*}
            \tau_{R,k}(\va).
 \end{align*}
The other, more important, new primitive is a special instruction
 \begin{align*}
   \WriteProgramBit
 \end{align*}
in the rules.
When called, this instruction makes the assignment \( \Work\gets R(\Index) \).
This is the key to self-simulation: \emph{the program has
access to its own bits}.
If \( \Index=i \) then it writes \( R(i) \) onto the current position of the \( \Work \) track.

\subsubsection{Simulating the rules}

The structure of all rules is simple enough that they can be read and
interpreted by a Turing machine in reasonable time:

\begin{theorem}
There is a Turing machine \( \Interpr \) with the property that for
all positive integers \( k \), string \( R \) that is a
sequence of rules, and a pair of bit strings \( \va=(a_{0},a_{1}) \) with \( a_{j}\in\Sigma_{k} \),
 \begin{align*}
  \Interpr(R,0^{k},\va)=\tau_{R,k}(\va).
 \end{align*}
\end{theorem}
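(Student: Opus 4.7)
The plan is to build $\Interpr$ as a straightforward universal Turing machine that parses and evaluates rule strings, together with a bookkeeping convention for representing cells and fields. Since the theorem's content is essentially standard (a universal interpreter for the concrete rule language of Section~\ref{sec:language}), the proof is mostly a matter of fixing a clean encoding and walking through the syntactic cases.

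First I would fix a binary encoding of rules. A rule is a tree of nested ``if \emph{condition} then \emph{instruction} else \emph{instruction}'' statements; encode each internal node by a short tag followed by the encoded condition and the two encoded subtrees. Conditions reference fields of the observed cell-pair, so I encode a field as a pair (offset, length) of binary numbers, which picks out a substring of one of the cell symbols $a_0, a_1 \in \Sigma_k = \{0,1\}^{s_k}$. Elementary instructions are of three types: an assignment $y \gets x$, a head-movement command, or one of the special primitives ($\WriteProgramBit$, or the recursive call provided for $\ProcessPayload$-style nesting). Each is encoded by a distinguished tag followed by its parameters. Since the alphabet $\Sigma_k$ has size $s_k$, and $k$ is given in unary as $0^k$, the interpreter always knows the width of a field-name.

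The interpreter $\Interpr$ operates on a work tape laid out as a few tracks: one storing $R$ (untouched except for reading), one storing $0^k$, one storing the current pair $\va$ that will eventually be updated to $\va'$, one for the head-direction output $d$, a program-counter track pointing into $R$, and a stack track used for descending into subrules and for recording return locations inside nested conditionals. A step of interpretation reads a tag at the current program-counter position and dispatches: on an if-node, it extracts the requested field from $\va$ by simple indexed access, compares to the value stored in the condition, pushes the location of the chosen branch, and continues; on an assignment, it overwrites the designated bits of $\va'$; on a movement, it updates $d$; on $\WriteProgramBit$, it uses the current value of the $\Index$ field of $\va'$ as an index into $R$ and copies the corresponding bit $R(\Index)$ into the $\Work$ field of $\va'$; on a recursive call it pushes the current state and restarts on the designated subrule. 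When the traversal terminates, the output $(\va', d) = \tau_{R,k}(\va)$ is read off the work tape. The time complexity is polynomial in $|R|$ and $s_k$, which suffices.

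The only potentially subtle point is the self-reference in $\WriteProgramBit$, but it is harmless: $R$ is an explicit input to $\Interpr$ sitting on its tape, so reading $R(\Index)$ is ordinary indexed access, no different from reading any other string. The second mild subtlety is that Section~\ref{sec:language} describes rules informally as multi-step procedures with their own internal fields; the proof handles this by taking those internal fields as additional fields of the cell symbols (hence covered by the field-indexing scheme above), so each ``procedure step'' is just one of the nested elementary instructions. With these conventions the interpreter is a standard universal construction, and the identity $\Interpr(R, 0^k, \va) = \tau_{R,k}(\va)$ holds by inspection of the dispatch cases.
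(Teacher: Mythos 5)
Your proposal is correct and follows essentially the same route as the paper: a standard interpreter that parses the rule string, dispatches on nested conditionals, and treats \( \WriteProgramBit \) as ordinary indexed access into the explicit input \( R \), which is exactly how the paper dispels the apparent circularity. Your version merely spells out the encoding and tape-layout details that the paper leaves implicit.
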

The proof parses and implements the rules in the string \( R \); each of these rules
checks and writes a constant number of fields.
Implementing the \( \WriteProgramBit \) instruction is straightforward:
Machine \( \Interpr \) determines the number \( i \)
represented by the simulated \( \Index \) field, 
looks up \( R(i) \) in \( R \), and writes it into the simulated \( \Work \) field.
There is no circularity in these definitions:
\begin{sloppypar}
  \begin{itemize}
  \item 
The instruction \( \WriteProgramBit \) is written \emph{literally}
in \( R \) in the appropriate place, as ``\(\WriteProgramBit \)''.
The string \( R \) is \emph{not part} of the rules (that is of itself).  
  \item On the other hand, the computation in
\( \Interpr(R,0^{k},\va) \) 
has \emph{explicit} access to the string \( R \) as one of the inputs.
  \end{itemize}
\end{sloppypar}

Let us show the computation step invoking the ``self-simulation'' in detail.
In the earlier outline, step~\ref{i:comp.trans} of Section~\ref{sec:simulation-phase}
said to compute \( \tau^{*}(\tilde\va) \)
(for the present discussion, we will just consider computing 
\( \tau^{*}(\va)=\tau_{k+1}(\va) \)), where \( \tau=\tau_{k} \),
and it is assumed that \( \va \) is available on an appropriate auxiliary track.
We give more detail now of how to implement this step:

\begin{enumerate}
\item Onto the \( \Work \) track, write the string \( R \).
To do this, for \( \Index \) running from 1 to \( \abs{R} \), 
execute the instruction \( \WriteProgramBit \) and move right.
Now, on the \( \Work  \) track, add \( 0^{k+1} \) and \( \va \).
String \( 0^{k+1} \) can be written since the parameter \( k \) is available.
String \( \va \) is available on the track where it is stored.
\begin{sloppypar}
 \item Simulate the machine \( \Interpr \) on track \( \Work \), computing
   \( \tau_{R,k+1}(\va) \).  
 \end{sloppypar}
\end{enumerate}

This implements the forced self-simulation.
Note what we achieved:

\begin{itemize}
  \begin{sloppypar}
\item On level 1, the transition function \( \tau_{R,1}(\va) \) is defined completely
when the rule string \( R \) is given.
It has the forced simulation property by definition, and
string \( R \) is \emph{``hard-wired''} into it in the following way.
If \( (\va',d)=\tau_{R,1}(\va) \), then
\begin{align*}
  a'_{0}.\Work\gets R(a_{0}.\Index)
\end{align*}
whenever \( a_{0}.\Index \) represents a number between 1 and \( \abs{R} \),
and the values \( a_{0}.\Sweep \), \( a_{0}.\Addr \) satisfy the conditions
under which the instruction \( \WriteProgramBit \) is 
called in the rules (written in \( R \)).
      \end{sloppypar}

      \begin{sloppypar}
\item The forced simulation property of the \emph{simulated}
transition function \( \tau_{R,k+1}(\cdot) \) is 
achieved by the above defined computation 
step---which relies on the forced simulation property of \( \tau_{R,k}(\cdot) \).
  \end{sloppypar}
\end{itemize}

\begin{remark}
  This construction resembles the proof of Kleene's fixed-point theorem, and even more
  some self-reproducing programs (like a program \( p \) in the language C causing the computer
  to write out the string \( p \)).
\end{remark}

\subsubsection{The length of a work period}\label{sec:length-work-period}

We claim that the number of steps in the work period is \( O(|R|\Q\F\Z^{2}) \),
where \( R \) is the program string, so \( |R| \) is actually just a constant.
Below we will first ignore the extra burden of zigging and feathering responsible for
the factor \( \F\Z \): without this, we get a bound \( O(|R|\Q\Z) \).

\begin{description}
\item[Coding-decoding]
The computing part of the work period performs three repetitions of
coding, the actual simulation work and decoding.
We need a code that corrects at least 3 bursts of  \( \beta \) cells each.
One way to do this is via a Reed-Solomon code.
There are encoding and decoding procedures for an \( n \)-symbol Reed-Solomon code
that take time \( O(n^{2}) \).
Since this is not linear, we can subdivide the \( \Q \) cells of a colony into segments of, say,
\( \Z \) cell each and encode-decode each of these separately.
This way the total number of steps spent on these procedures is \( O(\Q\Z) \).

\item[Payload]
The procedure \( \ProcessPayload(j) \) can take up to \( \Q \) simulation steps.
(It may make fewer steps if the simulated head reaches the edge of the colony-pair,
but then the next work period will have at least \( \Q \) simulation steps.)

\item[The rest]
The rest of the simulation takes only \( O(|R|\Q) \) steps.
Indeed, the number of fields is \( O(|R|) \), and notice that each field other than the one
having to do with \( \Payload \) is a number whose size is \( O(\Q^{2}) \).
We are now processing a colony representing a cell of \( M^{*}=M_{k+1} \), so the numbers
may have size \( (\Q^{*})^{2}=\Q_{k+1}^{2}=2^{2\cdot 1.5^{k+1}} \).
A cell of \( M \) has size \( \prod_{i<k}\Q_{i}=2^{1+1.5+\dots+1.5^{k-1}} \), more than enough
to store such a number.
The program involves comparing fields represented in cells found in the left and the right
colony of the colony pair, so it may need to copy the fields found in the right colony to the left one,
necessitating \( O(|R|\Q) \) steps; after this the comparisons can be done locally before the results
are carried back.
\end{description}

\subsection{Transfer phase}\label{sec:transfer}

Before the transfer phase, members of the base colony-pair \( C_{0},C_{1} \) have
cells of kind \( \Member_{0} \) and \( \Member_{1} \) correspondingly,
with a possible bridge between them.
In the transfer phase, control will be transferred to the
neighbor colony-pair in the direction of the simulated head movement which we called 
the \df{drift}, found on the \( \Drift \) track.
Whenever the \( \Replace \) track holds a defined value,
we will say that this is a \df{replacement} situation.

During the transfer phase, the range of the head includes the base colony-pair and a neighbor colony
called \df{target colony} in the direction of the drift.
At the beginning of the phase, the current cell-pair is the first cell-pair of \( C_{0} \).
Big turns happen in the turn region as in part~\ref{i:turn-region} of the
description of the \( \Compute \) rule.

In this phase, the front will move in sweeps, and the track \( \Sweep \) will be rewritten to
the number of the sweep being currently performed.
The first sweep will bring the head to the end of the turn region beyond the new neighbor colony.
Subsequent turns will therefore all happen closer.

Consider \( \Drift=1 \).

\begin{enumerate}
\item \label{i:target}
  Suppose that we don't have a replacement situation.
  In the first sweep, the head will travel right.
  Turn all elements of \( C_{0} \) into outer cells,
  and turn the elements of \( C_{1} \) into \( \Member_{0} \) cells.
  Then continue to the right, start a bridge (if necessary) towards the right
  (killing all possible non-adjacent stem cells in the way).
  If the right end of the bridge reaches an outer colony \( C_{2} \)
  before \( Q \) bridge cells are created, then pass to the right edge of \( C_{2} \).
  If \( Q \) bridge cells were created, stop at the right edge of this bridge---call \( C_{2} \)
  the bridge just created.
  Then sweep back to the left end of  \( C_{1} \), while turning the cells of \( C_{2} \) to 
  kind \( \Member_{1} \).

  In both cases, actually go to a distance \( 3\F\log \Q \) from the right end of \( C_{2} \)
  before attempting to turn.
  (This way, all later attempted left turns in the next work period
  will happen to the left of this one, and so any possible cause for alarm is encountered
  already now, in the transfer process.)

\item\label{i:replace}
  In the replacement situation, build a new colony \( C'_{1} \) adjacent on the right to \( C_{0} \).
  In the first sweep, perpetuate the value \( r \) found on the \( \Replace \) track.
  Write onto the  \( \Info \) track of \( C'_{1} \) the encoding of the value \( r \).
  (This requires two steps for each created cell \( x_{i} \) of \( C'_{0} \):
  first it has value \( r\in\New \), then it gets address \( i \), and its \( \Info \) field
  gets the \( i \)th symbol of the encoding of \( r \).)
  Then continue to the end of \( C_{1} \) (which should be beyond the end of \( C'_{0} \)).
  On the way back, replace the remaining elements of \( C_{1} \) with stem cells
  and set the kind of elements of \( C'_{1} \) to \( \Member_{1} \).

\end{enumerate}

A similar program is executed when \( \Drift=-1 \).
The values of \( \Drift \), \( \Replace \), \( \Sweep \) and \( \Addr \) always determine what step
to perform.
  
  Fault-checking during zigging will notice
when a burst compromises this process (for example when the end of a bridge would
``bite'' into another colony), by checking whether all boundaries it finds are legal (see the definition of
health in Section~\ref{sec:health}) and trigger healing (see later).

\subsection{Booting}\label{sec:booting}

Ideally, the work of machine \( M \) starts from a single active cell-pair of the Booting kind,
with addresses \( \Q-1 \) and 0, the middle cell-pair of a yet to be built colony-pair.
The \( \Payload \) track of the cell-pair holds a tape segment of the simulated Turing machine \( \G \),
along with the simulated head.
Such a cell-pair will be called a \df{booting pair}.
The segment consisting of this cell-pair will be extended left-right by booting cells,
eventually creating a colony-pair, as follows.

\begin{description}
  \begin{sloppypar}
  \item[Main work]
    Process the payload just as in Section~\ref{sec:payload}, for at most \( \Q \) processing steps
    (using \( \ProcessPayload() \) and \( \MoveFront() \)).
    All new cells encountered must be stem cells (blanks), and all the ones in the segment already created
    must be of the Booting kind; otherwise call alarm.
    In all this, use zigging and feathering.    
  \end{sloppypar}
  
\item[Lifting]
  Create the colony-pair around the original pair of booting cells (and turn its cells into member cells).
 \df{Lift} (copy) the \( \Payload \) track of its cells into
  the \( \Payload \) track of the cell-pair simulated by it.
  Start the booting procedure on the simulated cell-pair.
\end{description}

No decoding-encoding and repetition
mechanism is used to correct computational faults during the booting phase, since
we do not expect faults
during it---see the probability analysis in Section~\ref{sec:fault-estimation}.
(Of course faults could introduce booting cells during other
parts of the computation; this will be caught by the zigging mechanism.)

\section{Healing and rebuilding}\label{sec:healing}

Here we define the part of the simulation program that repairs local inconsistencies.

\subsection{Health}            \label{sec:health}

Structure is maintained with the help of a small number of fields.
The required relations among them 
allow the identification local inconsistency, and its correction provided it was caused locally.

\begin{definition}\label{def:domains}
The tuple \( \Core \) of fields is the tuple
\begin{align*}
  \Core = (\Kind,\Drift,\Replace,\Addr,\Age,\Sweep,\Rebuild, 
  \BigDigression,\FrontAddr).
 \end{align*}

 An interval of non-stem adjacent neighbor cells is a \df{homogenous domain} if
 its core variables with the exception of \( \Addr \) and \( \Age \) have the same value.
 \( \Addr \) increases one step at a time left to right.
 If we are in the simulation phase then \( \Age \) increases one step at time,
 either left to right or right to left.
 We don't require this In the transfer phase, the \( \Sweep \) track serves for health check instead.
The \df{left end} of a domain is the left edge of its first cell, and its \df{right end} is 
the right edge of its last cell.
A \df{left boundary} is the left end of a homogenous domain with either no left neighbor cell
or with a neighbor cell belonging to another homogenous domain.
Right boundaries are defined similarly.
\end{definition}

Health can be defined formally on the basis of the informal descriptions given here,
but the details would be tedious.
Recall Definition~\ref{def:front} of the front.

\begin{varenum}{H}
\item   A configuration consists of intervals of non-stem neighbor cells,
  with possibly stem cells between them.
  The health for each of these intervals is defined locally.
  No cell is marked for rebuilding, that is \( \Rebuild \) is undefined.

  As a non-local condition we will require that exactly one of these intervals contains
  the front: let us call this the \df{principal interval},
  and that the drift in all other intervals is directed towards the principal one.
 
\item  In the principal interval
  there is a base colony-pair, with possibly a bridge going from one member of the pair
  to the other one.
  Let \( I \) denote the interval containing this pair.

\item During the \( \Compute \) phase,
  outside the base colony-pair (both in the principal interval and elsewhere),
  all non-stem cells have their \( \Drift \) value directed towards the base colony-pair.
  There are possibly colonies of type \( \Outer \) adjacent to it and each other.
  They are called \df{left outer} colonies and their cells left outer cells, or
  \df{right outer} colonies depending on whether their drift is \( +1 \) or \( -1 \).
  Stem cells are also called outer cells (both left and right).

  In part~\ref{i:vote} of this phase, the values of \( \Drift \) and \( \Replace \) can change at the front.
  
\item\label{i:health.transfer}  During transfer, the base colony (of the pair)
  that is in the direction of the drift is possibly extended by a bridge towards the target colony.
  At this point there is no other bridge, and the front is at the tip of the new bridge.
  
  In the later part of this sweep, the new bridge already reaches the target colony.
  If the bridge extends to a full colony then this is converted to the appropriate kinds of member
  cells in the backward sweep.
  Domains ahead and behind the front show the changes done.
  Another possible change occurring at the front is replacement, when dictated by the \( \Replace \) track.
  In this case the front has the property that, for example when replacing a colony in the right direction,
  for each member cell created to replace an old member cell, \emph{the address of the new cell is not
    larger than the address of the one it replaces}.

\item\label{i:health.digression} There is a D-zone (see Section~\ref{sec:feathering})
  of length \( \Z/2\pm 1 \)
  cells, either at or ahead of the front, and the head is in or adjacent to it.
  There are possibly footprints of big turns.
  The length of such a footprint is \( \Z/2 \) when the head is not
  in it, possibly less as it is being created or erased.
\end{varenum}

\begin{definition}\label{def:legal-boundary}
  Let us call a bondary \df{legal} if it can occur in a healthy configuration.
  We say that a configuration is \df{pre-healthy} if it is healthy 
  except possibly condition in~\eqref{i:health.digression} above
  on the length of the D-zone and the length of the footprints of big turns,
  and some rebuilding marks.
\end{definition}

The following lemma shows that health of an interval of non-stem neighbor cells
is locally checkable.
 
\begin{lemma}\label{lem:boundaries-health}
  If an interval of neighbor cells in a tape configuration
  has only legal boundaries (including those at its ends) then it is pre-healthy.
\end{lemma}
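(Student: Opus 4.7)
The plan is to show that the sequence of domains and boundaries in the interval, read left-to-right, is forced into one of the configurations described by conditions~\eqref{i:health.transfer} and~\eqref{i:health.digression}. First, I would enumerate the finite set of legal boundary types by inspecting the health clauses (H1)--(H5) and Definition~\ref{def:domains}. Each such type is a constraint on the pair (core values on the left, core values on the right) at a transition point, together with whether a small gap is present. The resulting list is short: outer/member boundaries at the ends of the base colony-pair, the member/member boundary between \( \Member_{0} \) and \( \Member_{1} \) cells (adjacent or with a bridge), member/bridge and bridge/member boundaries, outer/outer boundaries between adjacent outer colonies, end-of-interval boundaries where the next cell (if any) is stem, and the various boundaries occurring at or adjacent to the front and the D-zone in both the compute phase and the transfer phase.

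Second, I would traverse the interval from left to right. Because the core fields are essentially constant across a homogeneous domain (by Definition~\ref{def:domains}) and because legal boundaries dictate how these values may change, each successive domain is determined up to a small number of discrete choices by the previous one. Starting from the legal left-end boundary and using the inductively maintained classification, the sequence of domains must be a prefix of one of the layouts occurring in a healthy configuration. In particular, the global features --- the existence of a base colony-pair in the principal interval, the presence of at most one bridge, the address-monotonicity condition at a replacement front in~\eqref{i:health.transfer}, and the presence of a single D-zone --- each follow from the fact that the \emph{entry} and \emph{exit} legal boundaries that would permit them can appear only in prescribed combinations.

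Third, I would verify that the conditions on D-zone length and footprint length in~\eqref{i:health.digression} are precisely the ones relaxed in the definition of pre-healthy (Definition~\ref{def:legal-boundary}), so they need not be checked: locally, a D-zone or footprint boundary is legal without any constraint on the number of consecutive cells carrying the corresponding value. The remaining rebuilding-mark clause of (H1) is also explicitly relaxed.

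The main obstacle is the bookkeeping in the enumeration of legal boundaries: the definition of health splits into several phases, and within each phase the same kind-transition can encode slightly different constraints on \( \Age \), \( \Sweep \), \( \Drift \), or \( \BigDigression \). The list must be exhaustive (so that all fault-free structures pass) yet tight enough that a matching sequence of legal boundaries forces pre-health. Once this enumeration is fixed and its closure under left-to-right concatenation is verified, the rest of the proof is routine case analysis on adjacent boundary pairs.
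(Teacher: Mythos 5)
Your proposal is correct and takes essentially the same approach as the paper: enumerate the legal boundary kinds (colony boundaries, the front, footprint and D-zone boundaries, the \( \BigDigression \) change point) and argue that they force the global layout, with the D-zone/footprint-length and rebuilding-mark conditions being exactly the ones relaxed in pre-health. The only cosmetic difference is that the paper organizes the case analysis by global features (whether the interval contains the front, whether it has only left/right outer cells, and whether the maximum age lies in the compute or transfer phase) rather than by a left-to-right propagation, but the underlying argument is identical.
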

\begin{proof}
  In what follows we don't repeat it but each statement is forced by the kind of boundaries
  allowed.
  There are several kinds: a colony boundary, the front, the boundaries of any footprint of a big turn,
  those of the big digression zone,
  and within that zone, the place where \( \BigDigression \) changes by 1.
  What matters is the values of the \( \Core \) variables in the cell pair around the boundary.

  \begin{enumerate}
\item Consider an interval \( I \) of neighbor cells.
  If  \( I \) contains cells that are not outer, or it contains both left and right outer cells
  then it also contains the front.

\item Assume that \( I \) contains only left outer cells.
  Then these consist of colonies possibly separated by stem cells, with drift pointing to the right,
  and possibly a front in a right turn region.
 The situation is similar when \( I \) consists of right outer cells.

In all other cases \( I \)
also contains an interval \( K \) consisting of neighbor non-outer cells.
  
\item Let \( s \) be the maximum age found in \( K \).
  If \( s \) is in the computing phase then \( K \) consists
  of a base colony-pair with a possible inner bridge connecting it, and the
  possible boundaries will force this.

\item Suppose now that \( s \) is in the transfer phase: then the 
  the drift over \( K \) is constant.
  Suppose that, for example, \( \Drift=1 \).
  Look at the description of the transfer phase in Section~\ref{sec:transfer}.
  Depending on whether we are in a replacement situation (defined by the \( \Replace \) track)
  and whether the we are in the first or second sweep, the boundary at the front completely
  determines the possibilities.
  The restriction on the addresses mentioned in~\eqref{i:health.transfer}
  above makes sure that there is enough space for the replacement to succeed.
  \end{enumerate}
\end{proof}

\begin{corollary}\label{crl:health-extension}
  Let \( \xi \) be a tape configuration that is micro-healthy on intervals \( A_{1}, A_{2} \) 
where \( A_{1}\cap A_{2} \) contains a whole cell body of \( \xi \).
  \begin{alphenum}
  \item Then \( \xi \) is also micro-healthy on \( A_{1}\cup A_{2} \).
  \item If \( A_{1}\cap A_{2} \) contains at least \( \Z/2 \) cells and \( \xi \) is healthy
    on both \( A_{1}, A_{2} \), then it is also healthy on \( A_{1}\cup A_{2} \).
  \end{alphenum}
\end{corollary}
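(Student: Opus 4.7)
The plan is to reduce both parts to Lemma~\ref{lem:boundaries-health}, exploiting the locality of the conditions that define pre-health (which I read as what the statement calls ``micro-health'').

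For part (a), I would first argue that every boundary appearing in $A_1\cup A_2$ is already visible inside either $A_1$ or $A_2$. Let $c$ be the cell body contained in $A_1\cap A_2$. For any pair $(x,y)$ of adjacent cells of $A_1\cup A_2$, the body of $c$ cannot lie strictly between them, since cells cannot be adjacent across a third cell body. Hence $x$ and $y$ lie weakly on a common side of $c$: if both lie weakly to the left of $c$'s right end then both are in $A_1$, and the boundary between them is a boundary of $A_1$; the symmetric case places both in $A_2$. By hypothesis every boundary of $A_1$ and of $A_2$ is legal, so the same holds for $A_1\cup A_2$, and Lemma~\ref{lem:boundaries-health} then yields pre-health of $\xi$ on $A_1\cup A_2$.

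For part (b) I would upgrade this to full health by checking the items that Definition~\ref{def:legal-boundary} omits from pre-health: the exact length of the D-zone ($\Z/2\pm 1$), the exact length of any footprint of a big turn ($\Z/2$ when the head is absent), and the absence of rebuilding marks. The rebuild condition is immediate, since both $A_1$ and $A_2$ are healthy and in particular contain no $\Rebuild$ marks. The D-zone and footprint conditions follow from a length count: a D-zone or footprint has length at most $\Z/2+1$, whereas $|A_1\cap A_2|\ge \Z/2$, so such a block cannot meet both $A_1\setminus A_2$ and $A_2\setminus A_1$; if it did, it would contain all of $A_1\cap A_2$ plus at least one cell on each outside, for total length at least $\Z/2+2$, a contradiction. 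Therefore every D-zone and every footprint of $A_1\cup A_2$ lies entirely inside $A_1$ or entirely inside $A_2$, where the assumed health already certifies its exact length. The remaining non-local conditions (unique principal interval, drift of outer intervals pointing towards it) are inherited because the intervals of non-stem neighbors in $A_1\cup A_2$ are exactly those already witnessed inside $A_1$ or inside $A_2$.

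The main obstacle is the length bookkeeping in part (b): the $\Z/2$ overlap threshold is precisely what is needed to force each D-zone and each footprint onto one side. Once this arithmetic is pinned down, the rest is routine matching of legal-boundary inventories on the two sides, and the inheritance of the global conditions is automatic.
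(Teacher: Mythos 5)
Your proposal is correct and follows essentially the same route as the paper's (much terser) proof: part (a) reduces to the boundary characterization of pre-health from Lemma~\ref{lem:boundaries-health}, and part (b) rests on the observation that a D-zone (or footprint), having length at most \( \Z/2+1 \), cannot straddle an overlap of \( \ge\Z/2 \) cells and so lies entirely in \( A_{1} \) or \( A_{2} \). Your write-up merely makes explicit the bookkeeping the paper leaves implicit.
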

\begin{proof}
The first part follows from the fact that micro-health is defined by boundaries.
In case of the second one, if a \( D-zone \)
intersects both \( A_{1} \) and \( A_{2} \) then it is contained entirely in one of them.
\end{proof}


\begin{lemma}\label{lem:3-boundaries}
  In a healthy tape configuration, over any interval of size \( <(1/4)\Z\B \) there are at most
  \begin{align}\label{e.eq:boundaries}
   \CBoundaries = 5
 \end{align}
  boundaries between domains.
\end{lemma}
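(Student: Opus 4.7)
The plan is to enumerate the places where an inter-domain boundary can arise in a healthy configuration and then bound each category against the length budget of $K$. By Definition~\ref{def:domains} of $\Core$ and the health conditions of Section~\ref{sec:health}, the $\Core$ values are piecewise constant along the tape, and a change between two neighboring non-stem cells can occur only at one of the following structural features: (a) a colony/bridge boundary (where $\Kind$ or $\Drift$ changes, or $\Addr$ resets); (b) the front (where the direction in which $\Age$ increases reverses); (c) an endpoint of the digression-marking zone (the D-zone), where $\BigDigression$ becomes defined; (d) an endpoint of the $\delta$-filled area sitting between the front and the D-zone; (e) an endpoint of a footprint of a past big turn, marked by $\BigDigression=\omega$. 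This is the bookkeeping step; beyond these five structural features, two adjacent non-stem cells must agree on every $\Core$ coordinate.

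Now fix an interval $K$ of length $<(1/4)\Z\B$ and bound the number of boundaries of each type in $K$. For (a), a colony body has length $\Q\B$, and our parameter choice ensures $\Q\ge\Z$ (for $k\to\infty$, $\Q_k$ grows doubly-exponentially while $\Z_k=\passno_k^{2+\rho}$ is polynomial in $k$, and $c_{\Q}$ is chosen large enough to handle the base case), so $K$ meets at most one colony boundary; a bridge, when present, is unique in a healthy configuration and by (H\ref{i:health.transfer}) sits between the two base colonies, contributing at most two boundaries but only by replacing a colony boundary, so type (a) totals at most $2$. For (b), a healthy configuration has a unique front, hence at most $1$. For (c), condition (H\ref{i:health.digression}) forces the D-zone to have length at least $(\Z/2-1)\B>(1/4)\Z\B$, so at most one of its two endpoints lies in $K$. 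For (d), the $\delta$-filled area is (by the definition of $\MoveFront$ in Section~\ref{sec:feathering}) always sandwiched between the front and the left end of the D-zone, so its two boundary points already count under (b) and (c) and contribute no new boundary. For (e), by (H\ref{i:health.digression}) every established footprint has length $\Z\B/2$; the at-most-one footprint currently being created or erased sits inside the present D-zone and so shares its boundaries with (c); and $\F$-feathering of big turns (Definition~\ref{def:feathering}) spaces consecutive footprints by at least $\F\B=\Z\passno^{2+\rho}\B$, which is much larger than $\Z\B/4$, so at most one established footprint can contribute a boundary to $K$, and that contribution is at most $1$ (the footprint is longer than $K$). Summing gives $2+1+1+0+1=5=\CBoundaries$.

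The steps that will require the most care are the two ``coincidence'' claims underlying (d) and (e): verifying that in every healthy configuration---including the intermediate states of a big digression described in Section~\ref{sec:feathering}, where $\BigDigression$ transitions through $\omega$, $\delta_{\pm 1}$, and integer values---the $\delta$-area boundaries really do merge with the front and D-zone boundaries, and that a partially-built or partially-erased footprint does not produce an extra boundary pair outside the D-zone. Both reduce to a finite check against the list of $\Core$-value patterns around the head admitted by the $\MoveFront(d)$ procedure and by the notion of legal boundary (Definition~\ref{def:legal-boundary}). Everything else is routine length arithmetic using the parameter inequalities $\Q\ge\Z$ and $\F=\Z\passno^{2+\rho}$.
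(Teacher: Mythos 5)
Your overall strategy --- enumerate the kinds of \( \Core \)-boundaries that health permits and bound how many of each can fall in an interval shorter than \( (1/4)\Z\B \) --- is exactly the paper's, and several of your individual counts (two colony ends across a gap, one front, one endpoint of the D-zone) agree with the paper's proof. But your claimed-exhaustive list of boundary types has a concrete omission: the \emph{interior} point of the D-zone where \( \BigDigression \) changes by \( 1 \) while the zone's value is being updated. The paper lists this explicitly as a boundary kind in Lemma~\ref{lem:boundaries-health} (``within that zone, the place where \( \BigDigression \) changes by 1'') and charges one unit for it in its own proof of the present lemma (``one point where \( \BigDigression \) changes''). Since the D-zone is only about \( (\Z/2)\B \) long, the same short interval can contain both one of its endpoints and this interior change point, so your assertion that beyond items (a)--(e) adjacent non-stem cells agree on every \( \Core \) coordinate is false, and adding the missing item to your tally yields \( 6>\CBoundaries \).

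A secondary problem is your footprint count. \( \F \)-feathering only forces the \emph{next} big turn at a point at or to the right of the previous turning point to be at least \( \F\B \) further right; a later big turn strictly to the left is permitted, so footprints are not guaranteed to be \( \F\B \) apart (the paper's ``safe for big left turns'' notion explicitly tolerates up to \( 3\F\log\Q \) footprints closer than \( 2\F \) cells to each other, and the health conditions of Section~\ref{sec:health} impose no spacing requirement on footprints at all). Since each completed footprint has length \( (\Z/2)\B \), the short interval can still contain two footprint endpoints (the right end of one footprint and the left end of the next), not at most one as you claim. The paper's count reaches \( 5 \) only because it does not charge separately for footprint endpoints; your accounting, once the missing interior change point is restored, does not close at \( 5 \), so the arithmetic step of your proof fails as written.
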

\begin{proof}
  Two colony-ends can be close to each other in case of a big gap between
  two neighbor colonies.
  Add to this the front,
  one end of a digression zone,
  and one point where \( \BigDigression \) changes.
\end{proof}

In a healthy configuration, the possibilities of finding non-adjacent neighbor
cells are limited.

\begin{lemma}\label{lem:two-domains}
  An interval of size \( <\Q \) over which the configuration \( \xi \) is healthy
contains at most two maximal sequences of adjacent non-stem neighbor cells.
\end{lemma}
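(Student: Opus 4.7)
The strategy is to prove that in any healthy tape configuration every maximal sequence of adjacent non-stem neighbor cells occupies at least $Q$ cells, and then to conclude by a short pigeonhole argument based on the two endpoints of the interval~$I$.

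\textbf{Structure of maximal sequences.} The first step is to enumerate the possible maximal sequences of adjacent non-stem cells in a healthy configuration using conditions~H1--H5. Outside the base colony-pair, each maximal sequence is a complete outer colony of exactly $Q$ cells, since outer colonies are separated from each other and from the base pair either by stem cells or by gaps. Inside the base region, the maximal sequence is either the whole base colony-pair with a bridge joining the two members (of total size $\ge 2Q$), or, if a small gap separates the bridge tip from the right base member, it splits into ``left base plus bridge'' (size $\ge Q$) and ``right base alone'' (size exactly $Q$). Any outer colony that happens to be adjacent to a base member simply merges into the base sequence, which only makes it larger. Transient cells produced during the transfer phase (of kinds $\New$ or $\Bridge$) are placed by the transfer rules adjacent to an already-existing member or bridge cluster, so they extend an existing large sequence rather than creating a short standalone one. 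In every case, a maximal sequence has at least $Q$ cells.

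\textbf{Pigeonhole on endpoints.} Let $I$ be an interval of fewer than $Q$ cells. By the preceding paragraph, no maximal non-stem sequence $S$ fits entirely in~$I$, so every such $S$ intersecting $I$ must extend past at least one endpoint of $I$. Two distinct maximal non-stem sequences occupy disjoint contiguous ranges of tape positions, since each is a chain of adjacent cells and distinct maximal chains share no cell. If two distinct maximal sequences both extended past the left endpoint $a$ of $I$, each would contain cells at positions both $<a$ and $\ge a$, so their position ranges would each straddle~$a$; but two disjoint contiguous intervals each straddling $a$ is impossible, as one would lie entirely to the left of the other. Hence at most one maximal sequence extends past the left endpoint, at most one past the right, giving at most two in total.

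\textbf{Main obstacle.} The only nontrivial step is the enumeration of maximal sequences. Conditions~H2 and~H4 cover the steady state, but I will also have to check that during transient moments of transfer and replacement (Section~\ref{sec:transfer}) every newly created cell is placed adjacent to an existing member or bridge cluster, so that no short orphan sequence can arise. This is a routine inspection of the transfer rules and the replacement action of Definition~\ref{def:dictated}, but it is the step where the tidiness of the case analysis really matters.
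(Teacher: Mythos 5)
Your proof is correct and follows essentially the same route as the paper's: establish that every maximal sequence of adjacent non-stem cells in a healthy configuration contains a full colony (possibly extended by a bridge) and hence has length at least \( \Q \), then observe that an interval shorter than a colony can only meet such sequences that straddle one of its two endpoints. The paper states this more tersely, while you spell out the endpoint pigeonhole and the transfer-phase cases explicitly, but the underlying argument is identical.
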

\begin{proof}
  By definition a healthy configuration consists of intervals covered by
  full colonies connected possibly by bridges, and possibly stem cells between these intervals.
An interval of size \( <\Q \) contains sequences of adjacent cells 
from at most two such intervals.
\end{proof}

\begin{lemma}\label{lem:infer-between}
In a healthy configuration, 
the state of a cell-pair shows whether they are outer, and also their direction towards the front.
The \( \Core \) track of a homogenous domain can be reconstructed from any pair of its cells.
\end{lemma}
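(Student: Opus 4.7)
The plan is to verify both claims directly from the local structure imposed by health, as spelled out in Definition~\ref{def:domains} and conditions H1--H5.

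For the first claim, I would inspect the $\Kind$ field of each cell in the pair. By definition a cell is outer iff $\Kind\in\{\Outer,\Stem\}$, so a single look at the pair's $\Kind$ entries settles outerness. The direction toward the front is then read off $\Drift$: in a healthy configuration the drift inside every non-principal interval points to the principal one (clause H1), and inside the principal interval the drift of outer cells is required to point toward the base colony-pair (clause H3). If the pair is non-outer, then by definition of the principal interval the pair sits inside or adjacent to the base colony-pair, so ``direction toward the front'' is either $0$ or is determined from the local boundary structure (bridge tip, replacement front, or D-zone) that is already visible in the pair's Core fields.

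For the second claim, I would split the Core fields into three groups and reconstruct each. The fields in $\Core\setminus\{\Addr,\Age\}$ are constant throughout a homogeneous domain by definition, so either cell of the pair reveals their value everywhere in the domain; this immediately covers $\Kind,\Drift,\Replace,\Rebuild,\BigDigression,\FrontAddr$ as well as $\Sweep$. The field $\Addr$ increments by one from left to right, so if one cell sits at position $p$ with address $a$, then the address at any other position $p'$ of the domain is $a+(p'-p)$. For $\Age$, in the computing phase the field is required to be strictly monotone along the domain, so a pair of cells at distinct positions pins down both the direction of monotonicity and the offset; in the transfer phase the monotonicity of $\Age$ is not required, but precisely in that phase health uses the constant $\Sweep$ field as witness, so no further reconstruction of $\Age$ is needed beyond what its presence in the pair already supplies.

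The step I expect to be delicate is the exhaustive case analysis that backs the ``non-outer'' subcase of claim~(a) and the $\Age$ reconstruction in claim~(b): one has to walk through the domain types enumerated in the proof of Lemma~\ref{lem:boundaries-health} (base colony, bridge, outer colony, stem cells, D-zone, footprint of a big turn) and check that, for each, the distinction between computing and transfer phase is visible from any cell-pair, and that the pair is enough to recover both the phase flag and the direction of $\Age$ (or, in the transfer phase, the $\Sweep$ value). Once this local bookkeeping is carried out, both halves of the lemma follow by simple extrapolation from one cell-pair to the whole domain.
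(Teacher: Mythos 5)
Your proposal is essentially correct and follows the same route as the paper: read the relevant information straight off the $\Core$ fields, using the constancy of most fields over a homogeneous domain, the unit increments of $\Addr$, and the monotonicity of $\Age$. Two small differences are worth noting. First, you determine outerness from $\Kind$; the paper's proof computes it from $\Age$ instead (its own remark in Section~\ref{sec:fields} points out that the $\Outer$ kind is redundant and derivable from $\Drift$ and $\Age$), but since $\Kind$ is itself a $\Core$ field this is an equally valid and arguably more direct reading. Second, your handling of the non-outer case of the first claim is the one weak spot: you say the direction toward the front is ``either $0$ or determined from the local boundary structure visible in the pair,'' but a non-outer pair can sit anywhere inside a base colony-pair spanning $2\Q\B$ positions, far from any boundary, so no boundary is visible. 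The paper's argument is simpler and sharper: by Definition~\ref{def:front} the front is exactly the position toward which $\Age$ increases, so the sign of the $\Age$ difference across the two adjacent cells of the pair already gives the direction. You in fact state this monotonicity when reconstructing $\Age$ in your second claim; connecting it back to the first claim closes the gap.
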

\begin{proof}
Whether a cell is outer is computed from its age field.
If the cell is outer then \( \Drift \) shows its direction from the front,
else the increase direction of \( \Age \) shows it.
\end{proof}

\subsection{Stitching}\label{sec:stitching}

We will show that a configuration admissible over an interval of size \( >(1/2)\Q\B \)
can be locally corrected;
moreover, in case the configuration is clean, this correction
can be carried out by the machine \( M \) itself.

\begin{definition}[Substantial domains]\label{def:substantial}
Let \( \xi(A) \) be a tape configuration over an interval \( A \).
A homogenous domain of size at least \( 4\CStain\beta\B \) will be called \df{substantial}.
The area between two neighboring maximal
substantial domains or between an end of \( A \) and the closest substantial domain in \( A \)
will be called \df{ambiguous}.
It is \df{terminal} if it contains an end of \( A \).
Let
 \begin{align}\label{eq:Delta}
     \Delta &= (4 \CBoundaries+9)\CStain\beta.
 \end{align}
\end{definition}

In Section~\ref{sec:annotation}, we introduced the notion of \df{islands}: intervals
of size \( \le\CStain\beta\B \) with the property that if the configuration is changed in the
islands it becomes healthy.
Under normal circumstances, there will be at most 3 islands in any interval of size \( \Q\B \).
The size of a substantial domain assures that at least one of its cells is
outside an island, since even three neighboring islands have a
total size \( \le 3\CStain\B \).

\begin{lemma}\label{lem:ambiguous}
In an admissible configuration, each half of a 
a substantial domain contains at least one cell outside the islands.
If an interval of size \( \le  \Q\B \) of a tape configuration \( \xi \) differs from a  healthy tape 
configuration \( \chi \) in at most three islands, then 
the size of each ambiguous area is at most \( \Delta\B \).
\end{lemma}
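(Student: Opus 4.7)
The plan is to prove both assertions by length accounting, using the constraint that each island has size at most $\CStain\beta\B$ and that there are at most three islands total, combined with the boundary-count Lemma~\ref{lem:3-boundaries}.

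For the first assertion, let $D$ be a substantial domain, so $|D|\ge 4\CStain\beta\B$ and each of its two halves has length at least $2\CStain\beta\B$. Since the three islands have total length at most $3\CStain\beta\B$, the total non-island area in $D$ is at least $\CStain\beta\B$. I would then do a case analysis on how many islands lie in each half: if a half contains at most one island, then (since one island has length at most $\CStain\beta\B$ while the half has length at least $2\CStain\beta\B$) at least $\CStain\beta\B$ of that half lies outside islands. The only worry is the subcase where two or three islands lie within a single half and together tile it without leaving any cell uncovered. To rule this out I would appeal to the property inherited from admissibility that distinct islands cannot sit adjacent to one another on the scale $\CStain\beta\B$ (this separation is exactly what keeps their combined length at most $3\CStain\beta\B$ and what justifies treating them as separate islands in the first place). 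Given this separation, two islands of respective lengths $\le \CStain\beta\B$ cannot seamlessly tile an interval of length $2\CStain\beta\B$; hence each half of $D$ must contain a cell outside every island.

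For the second assertion, fix an ambiguous area $J$. By definition, $J$ is flanked on each side either by a substantial domain of the healthy configuration $\chi$ or by an endpoint of $A$, and contains no substantial subdomain of $\chi$. Suppose for contradiction that $|J|>\Delta\B$. Since $\Delta\B=29\CStain\beta\B\ll (1/4)\Z\B$, Lemma~\ref{lem:3-boundaries} applies inside $J$, so $J$ meets at most $\CBoundaries$ boundaries of $\chi$ and therefore at most $\CBoundaries+1$ maximal homogenous domains of $\chi$. Each such domain is non-substantial by the definition of ``ambiguous,'' and so has length strictly less than $4\CStain\beta\B$. Adding the contribution of the at most three islands, whose total length is at most $3\CStain\beta\B$, we obtain
\[
  |J| \;\le\; (\CBoundaries+1)\cdot 4\CStain\beta\B + 3\CStain\beta\B \;=\; (4\CBoundaries+7)\CStain\beta\B \;<\; (4\CBoundaries+9)\CStain\beta\B \;=\; \Delta\B,
\]
contradicting $|J|>\Delta\B$. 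The slack between $4\CBoundaries+7$ and $4\CBoundaries+9$ is comfortably enough to absorb any minor bookkeeping at the two endpoints of $J$ (for example, an island that straddles the boundary with a flanking substantial domain).

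The main obstacle is the first part, specifically the step that rules out one half of $D$ being completely tiled by islands; this requires invoking the separation property hidden in the notion of admissible configuration and island. The second part is essentially automatic once Lemma~\ref{lem:3-boundaries} is in place, with the constants $\CBoundaries$ and $\CStain$ chosen so that the inequality is immediate.
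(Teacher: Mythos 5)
Your second part is essentially the paper's own argument: the same counting of non-substantial domains between boundaries (via Lemma~\ref{lem:3-boundaries}) plus the total island length, landing comfortably below \( \Delta\B \). The paper's bookkeeping is slightly different (it counts \( \CBoundaries+2 \) non-substantial domains, interleaving islands with boundaries, and adds a term \( \CBoundaries\B \) for possible gaps between non-adjacent cells), but both computations finish with room to spare, so this part is fine.

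The first part has a genuine gap. To rule out a half of a substantial domain being tiled by two or three islands, you invoke a ``separation property'' saying distinct islands cannot sit adjacent to one another on the scale \( \CStain\beta\B \). No such property is stated or proved in the paper, and it is not true: the three-islands example explicitly allows islands ``in close vicinity to each other,'' and the healing analysis only bounds the \emph{number} of islands per colony, not their mutual distances. The paper instead treats the first statement as immediate, and the reason is a size bound you are not using: by the annotation definition, each island is contained in a stain whose space projection has size \( \le\CStain\B \) (not \( \CStain\beta\B \)), and indeed the healing lemma shows an island never grows beyond \( (2\beta'+1)\B=\CStain\B \). Hence three islands have total length \( \le 3\CStain\B \), which is strictly less than the half-domain length \( 2\CStain\beta\B \) since \( \beta\ge 3\gamma>3 \); no separation assumption is needed. (The looser figure \( \CStain\beta\B \) appearing in the prose before the lemma is what led you astray; with it, the statement would not follow by length accounting alone.)
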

\begin{proof}
The first statement is immediate from the definition of substantial domains.
By Lemma~\ref{lem:3-boundaries}, there are at most \( \CBoundaries \) boundaries in \( \chi \).
There are at most 3 islands.
Between islands and boundaries there are at most \( \CBoundaries+3-1 \)
non-substantial domains: of sizes \( < 4\CStain\beta\B \).
The islands have a total size \( <3\CStain\beta\B \) and the space between boundaries may add at
most \( \CBoundaries\B \).
Adding all these up we get the following multiple of \( \B \):
\begin{align*}
  4\cdot(\CBoundaries+2)\CStain\beta + 3\CStain\beta + \CBoundaries
  < (4 \CBoundaries+9)\CStain\beta = \Delta.
\end{align*}

\end{proof}

The following lemma forms the basis of the healing algorithm.

\begin{lemma}[Stitching]\label{lem:stitching}
In an admissible configuration, inside a clean interval,
let \( U,W \) be two substantial domains separated by an ambiguous area \( V \).
It is possible to change the tape on \( U,V,W \) using only information in \( U,W \) in such a 
way that the tape configuration over \( U\cup V\cup W \) becomes
micro-healthy (see Definition~\ref{def:legal-boundary}).
Moreover, it is possible for a Turing machine to do so gradually, changing and/or enlarging
the tape in \( U \) or \( W \) at the expense of \( V \), making a constant number of sweeps
over \( U\cup V\cup W \), with ``small turns'' as defined in Section~\ref{sec:sweep} at the ends of sweeps.
\end{lemma}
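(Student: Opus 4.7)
The plan is threefold: (i) extract a reliable \( \Core \) label for each of the two substantial domains \( U \) and \( W \) from their cells that lie outside the admissibility islands; (ii) decide what legal boundary pattern can join these two labels across \( V \); and (iii) implement the resulting rewrite by a constant number of left-right sweeps with small turns at the endpoints.

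For step (i), Lemma~\ref{lem:ambiguous} guarantees that each half of a substantial domain contains a cell outside every island. I would have the head make one sweep through \( U \), reading pairs of cells that are spaced farther apart than the maximum island size, and take the \( \Core \) fields (other than \( \Addr \) and \( \Age \)) on which two such cells agree. By Lemma~\ref{lem:infer-between} this uniquely identifies the true label of all of \( U \). On the same pass the head overwrites any corrupt cells of \( U \) with this label, recovering \( \Addr \) by monotone counting from a known non-island cell. The symmetric sweep over \( W \) does the same there.

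For step (ii), I would use the classification of legal boundaries implicit in the proof of Lemma~\ref{lem:boundaries-health}: a colony end (possibly with stem cells or a bridge between), the front, a \( \BigDigression \)-change, a D-zone endpoint, or a footprint endpoint. With the true labels of \( U \) and \( W \) known, only a short case analysis determines which chains of such boundaries---at most \( \CBoundaries \) of them by Lemma~\ref{lem:3-boundaries}---can legally sit between the two labels; in each case one computes how to fill \( V \), namely which cells to place as members, bridge, stem, footprint, or D-zone cells, and how to set the corresponding addresses and \( \BigDigression \) counts.

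The main obstacle is to show that \emph{some} such pattern is always consistent with the two reconstructed labels, and that \( V \) has enough room for it. Existence is supplied by admissibility: there is a healthy \( \chi \) that differs from the given configuration only in at most three islands, and the labels reconstructed for \( U, W \) agree with those in \( \chi \) because they were read outside the islands; hence \( \chi \) itself exhibits a legal boundary pattern across \( V \) compatible with these labels, and reproducing \emph{any} such pattern (not necessarily \( \chi \)'s) suffices for micro-health. Room for the pattern follows from \( |V|\le\Delta\B \) (Lemma~\ref{lem:ambiguous}) together with the bounded number and size of boundary items. Once this is established, the Turing-machine implementation is routine: one sweep to read \( U \), one to read \( W \), one or two sweeps to rewrite \( V \) (extending or retracting the inner edges of \( U \) and \( W \) as needed), and a final verification sweep checking that each boundary produced is legal, each sweep processing a constant number of fields per cell, with small turns (Section~\ref{sec:feathering}) at the endpoints of \( U\cup V\cup W \).
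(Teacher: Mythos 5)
Your overall strategy---classify the two substantial domains by their \( \Core \) labels, fill \( V \) with a legal boundary pattern, implement with a constant number of sweeps and small turns---is the same as the paper's. But the step that actually carries the weight of the lemma is deferred. The paper's proof \emph{is} the ``short case analysis'' you gesture at: it explicitly goes through the combinations (both \( U,W \) colony domains, exactly one a colony domain, both stem or bridge domains; inner versus outer colonies; whether the age is in a transfer phase) and in each case prescribes the fill---extend \( U \) until a colony boundary or \( W \) is hit, insert stem cells, or build a bridge---and checks that the resulting boundary is one of the legal ones. Asserting that such an analysis exists is not a proof of the lemma; the lemma essentially \emph{is} that analysis.

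The substitute existence argument you offer (the healthy \( \chi \) of the admissible configuration exhibits a legal pattern over \( V \), hence some legal pattern compatible with the reconstructed labels exists) does not close the gap. The pattern \( \chi \) exhibits over \( V \) is legal relative to \( \chi \)'s values at the edges of \( V \), not relative to \( \xi \)'s actual domains \( U \) and \( W \): the inner ends of \( U \) and \( W \) may lie inside islands, where \( \xi \) and \( \chi \) disagree, and even outside islands the maximal homogeneous domains of \( \chi \) need not end where \( U \) and \( W \) do, so their terminal \( \Addr \)/\( \Age \) values can differ from \( \xi \)'s. Reproducing \( \chi \)'s \( V \)-pattern against \( \xi \)'s \( U \) and \( W \) can therefore produce an illegal junction; this is exactly why the paper extends the domains themselves (continuing their own address counting) rather than trying to match any pre-existing pattern. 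A smaller point: your step (i) is unnecessary---a substantial domain is by Definition~\ref{def:substantial} a \emph{homogeneous} domain, so all its cells already agree on the \( \Core \) fields other than \( \Addr \) and \( \Age \); there is no corruption inside \( U \) or \( W \) to vote away, and Lemma~\ref{lem:ambiguous} is needed only to relate the (already uniform) label to the annotation, not to reconstruct it.
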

After the stitching, the tape configuration may only be \emph{micro-}healthy,
as the length of a digression zone or a footprint of a big turn may change slightly.
\begin{Proof}
  At any step below, if we find that \( U\cup V\cup W \) is micro-healthy then we stop.

\begin{step+}{step:stitching.colonies}
  If \( U \) consists of colony cells then let it be extended towards \( V \) until a colony boundary or \( W \)
  is hit.
  Assign all core variables in a way to keep the domain \( U \) homogenous.

  Then do the same with \( W \).
  If \( V \) gets eliminated then the boundary-pair
  between \( U \) and \( W \) is necessarily a legal one.
\end{step+}
Assume now that the above operations have still left \( V \).

\begin{step+}{step:stitching.1-colony}
  If for example \( U \) consists of colony cells but \( W \) does not, then extend \( W \) towards \( U \)
  until \( V \) is erased: the boundary obtained must be legal.
\end{step+}

\begin{step+}{step:stitching.2-colonies}
Assume that both \( U \) and \( W \) consist of colony cells.
\end{step+}
\begin{prooof}
  If both colonies are outer then we can turn all elements of \( V \) into stem cells.
  This situation will not be actually encountered since the front is never near the boundary between
  two outer colonies.

  If both colonies are inner then turn the cells between them into a bridge from \( U \) to \( W \).

  If for example \( U \) is inner and \( W \) is outer then, if the age is not in a the transfer phase
  towards \( W \) then fill \( V \) with stem cells; else
  fill \( V \) with bridge cells extending \( U \).
\end{prooof} 

\begin{step+}{step:non-colony}
  The case remains when neither \( U \) nor \( V \) consist of colony cells.
\end{step+}
\begin{prooof}
  If one of them consists of stem cells then extend it (does not matter which)
  towards the other until they meet.
  If both consist of bridge cells then extend either one of them towards the other until they meet.
  We will always have to end up with a legal boundary.
\end{prooof} 
\end{Proof}

\subsection{The healing procedure}\label{sec:healing-proc}

The healing procedures \( \rHeal \), \( \rRebuildHeal \)
and the rebuilding procedure \( \Rebuild \) look as if we assumed no noise or disorder.
The rules described here, however (as will be proved later), will also clean an area
locally---under the appropriate conditions.

Healing performs only local repairs of the structure: for a given (locally) admissible configuration,
it will attempt to compute a satisfying (locally) healthy configuration.
If it fails---having encountered an inadmissible configuration---then
the rebuilding procedure is called, which is designed to repair a larger interval.

To protect from noise, any one call of the healing procedure will change only 
a small part of the tape, essentially one cell: so a burst during healing
can only have limited impact.
Every healing operation starts with a survey zig around its starting point:
an illegal boundary or a boundary of a D-zone, called the \df{center}.

If the survey finds some possible healing to do then it performs one step of it, and returns.
Otherwise the ``attempt'' \df{fails}; in this case, it will build a rebuilding base,
an interval is defined with the help of a new field
\begin{align*}
   \Rebuild.\Base\in\{*,1\}.
\end{align*}
Its default value is \( * \).

\begin{definition}\label{def:rebuild-base}[Rebuilding base]
  A \df{rebuilding base} is an interval of \( 4\Delta \) cells with \( \Rebuild.\Base=1 \).
\end{definition}

%

Here are the details of healing.
Suppose that the \( \rHeal \) procedure is called at some position.
In what follows, turns will always be small turns, as defined in Section~\ref{sec:feathering}. 
In any newly created cell, 
\( \Drift \) is set backwards to the creating cell---to make sure that the head does not get
lost on the edge of the infinite vacant space when hit by a burst.
When finding homogenous intervals or boundaries, we ignore the track
\( \Rebuild.\Base \).

\begin{description}
\item[Survey]
  Look over an interval \( I \) consisting of \( 4\Delta \) cells left and as many right from the center.
  (As usual, when a neighbor cell is not found, one is created.)
  Let \( I'\subset I \) be an interval consisting of \( \Delta \) cells to the left and as
  many to the right of the center.

  \begin{itemize}
    \instr
    If the ambiguous areas in \( I \) cannot be covered by 3 intervals of size \( \le\Delta \) (separated by
    substantial homogenous intervals) then go to part \emph{Fail}.

    \instr If \( I' \) is pre-healthy then go to the center and then pass to part~\emph{Move}.
      
    \instr Find the first illegal boundary \( x \) in \( I' \).
    Attempt to find a substantial domain within \( \Delta \) steps on the left of \( x \);
    let \( S' \) be the first one.

    \instr  Attempt to find a substantial domain \( S'' \) within \( \Delta \) steps on the right of \( x \).
    Let \( S'' \) be the first one.
    If \( S' \) and \( S'' \) are adjacent (so there is an illegal boundary between them) then
    go to part~\emph{Fail}.
    Else pass control to part~\emph{Stitch}.    
    
\end{itemize}

\item[Stitch]
  Attempt one stitching operation, go to the center and pass control
  to part~\emph{Survey}.

  (Note that an ambiguous interval (between \( S' \) and \( S'' \)) does not trigger stitching unless
  it contains an illegal boundary.)

\item[Move]
Repeat:
\begin{itemize}
  \instr If you are at an illegal boundary go to part~\emph{Survey}.
  \instr Else if you are not in a D-zone then make a step towards it.
  \instr Else go to part~\emph{Adjust}.
  \end{itemize}

\item[Adjust] You come here only if you are in a D-zone.
  Let us call the \df{target position} the point inside the D-zone where the
  value of \( \BigDigression \) changes, or it center, if it is not changing.  
  
  Repeat:
  \begin{itemize}
    \instr Survey \( \Z \) cells ahead and \( \Z \) cells behind the center.
    \instr If you find an illegal boundary jump to part~\emph{Survey}.
    \instr  Else if the D-zone has \( \Z/2 \) cells then:
    if there are rebuilding marks (for example belonging to a rebuilding base), remove one;
    else go to the target position and finish.
    \instr  If the D-zone has \( \Z/2\pm 3\Delta \) cells then
  make a step bringing its size closer to \( \Z/2 \) cells; else go to part~\emph{Fail}.
  \end{itemize}
  
\item[Fail]
  \begin{itemize}
  \instr Survey \( \Z \) cells ahead and \( \Z \) cells behind the center.
  \instr If you find a rebuilding base then start rebuilding, from a center in its middle.
  \instr Else contribute one cell to a rebuilding base (if there is none, start at an illegal boundary)
  then restart at part~\emph{Survey}.
  \end{itemize}
  
\end{description}

\subsection{Rebuilding}\label{sec:rebuilding}

Just as with healing, we will not mention disorder in describing the rebuilding procedure
(since the program does not see it)---but the analysis will take disorder into account.
Rebuilding will normally start from a rebuild base as in Definition~\ref{def:rebuild-base}
(its middle point will be taken as the center \( \z \) of rebuilding).
This makes sure that, if rebuilding gets triggered by a burst in normal mode, zigging will
notice this and call alarm.
Rebuilding could also start from big turn starvation (see Section~\ref{sec:feathering}).
In order to find the center, a track
\begin{align*}
   \Rebuild.\Half\in\{*,-1,1\}
\end{align*}
will show whether we are in the left or the right half of the rebuilding area.
The center is the place separating the two halves.
We will use the following notion.

\begin{definition}\label{def:valid-colony}
  Suppose that in a configuration \( \xi \), there is an interval \( I \) 
  ending in substantial homogenous domains, with at
  most 3 ambiguous intervals inside, in which it can be changed
  to become healthy, with a colony \( C \) in \( \Int(I,0.2\Q\B) \).
  Then we will say that \( C \) is a \df{valid} colony of \( \xi \) with a \df{neighborhood} \( I \).
\end{definition}

The goal is that
\begin{varenum}{r}
\item\label{i:rebuild.size} we end up with a new decodable area
  extending at least one colony to the left and one colony to the right of \( \z \).
\item\label{i:rebuild.keep-healthy} the process does not destroy any valid colony.
\end{varenum}

Rebuilding happens in a bounded number of sweeps. 
The number of the current sweep is shown on a track \( \Rebuild.\Sweep \).
Zigging is done similarly to the ordinary simulation,
but now a frontier zone of \( \Z/2 \) cells similar to the one defined in Section~\ref{sec:zigging} will
be marked explicitly in the track \( \Rebuild.\Addr \) of the frontier zone, which is undefined outside it.
It will show the (positive or negative) distance of the frontier from the center.
Since the center is the place separating the negative and positive values of \( \Rebuild.\Half \),
returning to it does not need to rely on the distance as shown in the
value of \( \Rebuild.\Addr \) carried in the rebuild frontier zone.
The actual distance can namely deviate from this value somewhat: it can grow
if repairs insert new cells.

In every zigging pass, as the frontier gets advanced, the head shifts the frontier zone
and changes \( \Rebuild.\Addr \) accordingly (increasing by 1 at right shifts, decreasing it
at left shifts).
In more detail: for example at a right shift, the head first moves right from the center of the
zone by \( \Z \), adding a new element to the zone on the right,
then left by \( 2\Z \) while adding 1 to \( \Rebuild.\Addr \) in the zone,
then deletes leftmost element, then moves back to the center.
If during all this operation some inconsistency is
seen in the tracks \( \Rebuild.\Sweep \), \( \Rebuild.\Half \) or \( \Rebuild.\Addr \),
then the healing procedure for rebuilding is called
(see Section~\ref{sec:rebuild-heal}).

Now there is only a bounded number of big turns, at the end of the sweeps.
They are handled similarly but simpler than in normal mode.
The \( \BigDigression \) field is used;
however, now the frontier will simply move along with the D-zone.
Big turn starvation happens when \( |\Rebuild.\Addr| \) grows larger than \( 5\Q \);
this will trigger restart.

Here are the stages.
Recall the stitching operation from Section~\ref{sec:stitching}.
\begin{description}
 \item[Mark] Extend a rebuilding area  over \( 4\Q \) cells to the right
   and \( 4\Q \) cells to the left from the center \( \z \).
   At start, erase the \( \Rebuild.\Base \) track of the rebuilding base.
   So the distance to the edge of rebuilding from the center is \( \le 8\Q\B \),
   which led us to define \( \CRebuild \) in~\eqref{eq:cns.traj}.

   In a clean area and in the absence of noise,
   rebuilding can be interrupted only in the following ways by the content of the area encountered.
\begin{varenum}{i}
\item\label{i:rebuild.restart}
  A \df{restart event}:
  
  If at the beginning or during leftward marking,
  a zig movement found on the left at least \( \Z/4 \) cells of the
  frontier of a rightward marking stage of another rebuilding process,
  then move back there and start a new rebuilding from there.

  Similarly, if at the beginning, or during leftward marking, a zig movement
finds at least \( \Z/4 \) cells of a rightward frontier zone (of normal mode),
then start a new rebuilding from there (after moving back there in a zigging motion).

(In both cases, at the time of restart, zigging motion still shows at least \( 2\Delta \) cells of
the abandoned rebuilding frontier, so for example a normal mode frontier cannot be
triggered into a rebuilding restart by a burst.)
  
\item  Turn starvation as defined in Section~\ref{sec:feathering}.
  Small turn starvation triggers alarm (thus healing).
  If \( |\Rebuild.\Addr| \) grows large then rebuilding will restart due to big turn starvation.
\end{varenum}

 \item[Survey and Create]
More details of this stage will be given below.
It looks for existing valid colonies, and possibly creates some.
As a result, we will have one colony called \( C_{\Left} \)
on the left of \( z \) along with its neighborhood as in Definition~\ref{def:valid-colony},
one called \( C_{\Right} \) on the right of \( z \), and possibly some colonies between them.
Make all newly created colonies represent stem cells.
Direct all the other colonies with drifts and bridges towards \( C_{\Left} \).
The interval covering \( C_{\Left} \) and \( C_{\Right} \)
will be called the \df{output interval} of rebuilding.
The pair of neighbor colonies with \( C_{\Left} \) on its left will be made the current colony-pair.

\item[Mop] Remove the rebuild marks (address and sweep),
  shrinking the rebuilding area \( R \), starting on its left end,
  onto the right end of \( C_{\Left} \).

\end{description}

Marked cells from some interrupted rebuilding may remain even after the mop-up operation.
(These may trigger new healing-rebuilding when the head meets them sometime later.)

\subsubsection*{Details of the Survey and Create stage}

The complexity of this stage is due mainly to guaranteeing
property~\eqref{i:rebuild.keep-healthy} above.

\begin{varenum}{s}
\item\label{i:stitch}
  Going from left to right,
  pass through the marked area, and stitch every pair of consecutive
  substantial domains separated by an ambiguous area of fewer than \( \Delta \) cells,
  just as during healing.
  But don't create new rebuilding cells as in healing:
  if the stitch result leaves some illegal boundary, just leave it there.

\item\label{i:find-colonies}
  Pass through again, and look for whole colonies.
  Mark the cells belonging to whole colonies as such, and mark all other cells as such.
  The marks should go to a special track \( R_{1} \).

\item\label{i:repeat}
  Repeat steps~\eqref{i:stitch} and~\eqref{i:find-colonies}, writing the resulting marks onto a special
  track \( R_{2} \).
  The following steps will rely on the two tracks \( R_{1},R_{2} \) having identical content.
  If it is discovered that this is not the case, alarm is called.
  This is important since the colony creation operations are destructive, they should not be triggered by
  a single burst.
  
\item\label{i:C-left-right} Check if there is a marked colony at least three quarters
  to the left of the center,
  whose whole neighborhood as in Definition~\ref{def:valid-colony} is healthy.
  If yes, find the closest one.
  If not, create one making sure it does not intersect any marked whole colony, and its
  neighborhood is healthy (if necessary overwrite part of the neighborhood with stem cells).
  Call this colony, found or created, \( C_{\Left} \).
  Proceed similarly in finding or creating a colony \( C_{\Right} \) (disjoint from \( C_{\Left} \))
  at least three quarters to the right of the center.
  
\item Fill in the area between \( C_{\Left} \) and \( C_{\Right} \)
  and the other marked whole colonies between them:
  fill these gaps with adjacent stem cells, creating a new colony every time
  an interval of \( \Q \) adjacent stem cells has been created.

\item Make all newly created colonies represent stem cells.
  Let \( C_{0} \) be the first colony towards the left of the center
  with at least half of it to the left of the center.
  Direct all drifts to the left end of \( C_{0} \).
  Make \( C_{0} \) and its right neighbor the new current colony-pair (create a bridge from
  \( C_{0} \) to its right neighbor if needed), and let them
  represent the start of a healing process on the level of \( M^{*} \).
\end{varenum}

\begin{remarks}\label{rem:rebuild-precedence}
  \begin{enumerate}
  \item
    Once rebuilding completes, since the state of the current colony-pair simulates the start of
    healing in \( M^{*} \), the head will continue to the right.
    There, rebuilding may be called again, but it will not rewrite the colony \( C_{\Left} \).
    It may rewrite its right neighbor colony, but not destroy it; so repeated calls to rebuilding
    will result in progress.
    
  \item The precedence given to rebuilding from left to right
    will be used in the proof of Lemma~\ref{lem:pass-clean}.
    In the absence of new noise, after a constant number of passes,
    a certain clean interval \( J \) will extend to the right until it reaches
the end of a colony pair or of a rebuilding area.
If marking was unstoppable in both directions then this might not happen soon, since after disorder
is entered and exited, no assumption can be made of the state of the current cell-pair.
Reaching the left end of  \( J \)
a new rebuilding process may send the head back to the right end, from which a new rebuilding
process can send it back to the left end, and so on.
But because marking to the right has precedence, a new left-directed
marking started by disorder on the right end would not stop it.

\item Giving precedence to rightward rebuilding has the drawback that one can design a initial configuration
  in which even in the absence of noise, higher-level structure will never arise even locally.
  Namely, we can fill the line with short intervals \( \dots,J_{-1},J_{0},J_{1},\dots \)
  each of which is the start (say of size \( 3\Z \)) of a
  right-directed marking process.
  Then the head, after moving left on \( J_{0} \), will be captured by the process on \( J_{-1} \), then
  later captured by the similar process on \( J_{-2} \), and so on.
  But we will not need to consider such pathological configurations.

  
  \end{enumerate}
\end{remarks}

\subsection{Healing during rebuilding}\label{sec:rebuild-heal}

Some healing may be needed even within the rebuilding process in case of a new burst,
(as shown in Example~\ref{xpl:heal-in-rebuild}).
We will call it \( \rRebuildHeal \).
We can define a notion of rebuild health similarly to health, using the tuple
\begin{align*}
 \Rebuild.\Core = (\Rebuild.\Sweep, \Rebuild.\Addr, \Rebuild.\Half,\BigDigression).
\end{align*}
During the marking stage in rebuilding, the area outside what is already marked is not subject
to any requirements on tracks other than \( \BigDigression \).
(The circumstances mentioned in part~\ref{i:rebuild.restart} above can trigger a restart event, but not
a healing process.)

We will say that the configuration is \df{pre-healthy for rebuilding} if it
satisfies all boundary requirements, and \df{healthy for rebuilding}
if also the rebuild frontier zone and the digression zone have \( \Z/2\pm 1 \) cells.
The analogue of Lemma~\ref{lem:boundaries-health}
and Corollary~\ref{crl:health-extension} will hold for rebuild pre-health.
As there, checking for the actual rebuild health, zigging is needed.
The analogue of Corollary~\ref{crl:health-extension} for rebuild health holds:

\begin{lemma}\label{lem:rebuild-health-extension}
Let \( \xi \) be a tape configuration that is healthy for rebuilding on intervals \( A_{1}, A_{2} \) 
where \( A_{1}\cap A_{2} \) contains at least \( \Z/2 \) cells.
Then \( \xi \) is also healthy for rebuilding on \( A_{1}\cup A_{2} \).  
\end{lemma}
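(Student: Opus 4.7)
The plan is to mirror the proof of Corollary~\ref{crl:health-extension}(b), exploiting the fact that the only difference between rebuild pre-health and rebuild health lies in the length conditions on a bounded list of ``zones,'' each of which is too short to straddle the intersection.

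First I would reduce to rebuild pre-health. The pre-health predicate is defined entirely in terms of legal boundaries between homogeneous domains on the \( \Rebuild.\Core \) track, exactly as micro-health was in Lemma~\ref{lem:boundaries-health}. Under the hypothesis, \( A_1\cap A_2 \) contains at least \( \Z/2 \) cells, in particular a whole cell body, so every cell-pair in \( A_1\cup A_2 \) lies inside \( A_1 \) or inside \( A_2 \). Hence every boundary witnessed in \( A_1\cup A_2 \) is already witnessed (and legal) in one of \( A_1,A_2 \), and \( \xi \) is pre-healthy for rebuilding on \( A_1\cup A_2 \).

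Next I would upgrade from pre-health to rebuild health by checking the additional length constraints, which govern only (i) the rebuild frontier zone (marked on \( \Rebuild.\Addr \)) and (ii) the digression zone marked by \( \BigDigression \). Rebuild health requires each of these to have \( \Z/2 \pm 1 \) cells. Suppose such a zone \( Z \) intersects \( A_1\cup A_2 \). Then \( |Z| \le \Z/2 + 1 \), whereas \( A_1 \cap A_2 \) contains \( \ge \Z/2 \) cells; consequently \( Z \) cannot reach both of the two ``non-overlapping'' portions \( A_1\setminus A_2 \) and \( A_2\setminus A_1 \), so \( Z \) is contained in \( A_1 \) or in \( A_2 \). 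By the assumed rebuild health on that interval, \( |Z| \) satisfies the required bound.

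The main obstacle, such as it is, lies in being confident that one has enumerated all the non-local length requirements of rebuild health; this is why the definition of \( \Rebuild.\Core \) was restricted to exactly \( (\Rebuild.\Sweep,\Rebuild.\Addr,\Rebuild.\Half,\BigDigression) \) and why the restart/starvation phenomena are explicitly excluded from the health predicate (they are part of triggering conditions, not of static health). Given that enumeration, the size arithmetic \( \Z/2 + 1 \le \Z/2 + 1 \) versus the hypothesis \( |A_1\cap A_2| \ge \Z/2 \) closes the argument immediately, and no further combinatorial case analysis of boundary types (as in Lemma~\ref{lem:boundaries-health}) is required here, since pre-health has already been secured in the first paragraph.
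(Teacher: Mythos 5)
Your proposal is correct and follows essentially the same route as the paper, which gives no separate proof but explicitly declares this lemma the analogue of Corollary~\ref{crl:health-extension}, whose two-line argument is exactly yours: pre-health extends because it is a purely boundary-defined (local) predicate, and any length-constrained zone, having at most \( \Z/2+1 \) cells, cannot straddle an overlap of \( \ge\Z/2 \) cells and so lies entirely in \( A_{1} \) or \( A_{2} \). Your write-up is in fact more explicit than the paper's.
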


Healing for rebuilding is very similar to healing, trying to repair the health of the
rebuilding process.
In its Survey  and Stitch stages, it will repeatedly check, similarly to \( \rHeal \),
whether it is possible to recreate pre-health
for rebuilding by stitching up to three ambiguous areas.
(Stitching is simpler now as there are no rigid colony boundaries to consider.)
If yes, it performs one stitching step, otherwise it adds a cell to a new rebuilding base.
If stitching is successful, it goes to the Move stage to gradually move back to the D-zone.
Once there, it goes to the Adjust stage to remove possible rebuilding base marks,
and to correct the size of the zone before finishing.

\section{Scale-up, isolated bursts}

This section first shows how health is restored in the absence of disorder and noise.
Then it defines the code \( \Phi \) mapping a history \( (\eta,\Noise) \) of a machine \( M \) into a
history \( (\eta^{*},\Noise^{*}) \) of the simulated machine \( M^{*} \).
For this, it introduces the notion islands in the framework of an ``annotated'' history.
Finally, using the introduced terminology,
it shows that the healing procedure indeed deals with isolated bursts.
For the elimination of disorder created by faults
we will rely on the Escape, Spill Bound and the Attack Cleaning
properties of a trajectory in Definition~\ref{def:traj}.

\subsection{Restoring health in the clean, noiseless case}

An interval rewritten by noise can have \( \Pass\ne 0 \) everywhere or many footprints of
a big turn too close to each other even if it is clean, so we define
a property of intervals avoiding this.

\begin{definition}\label{def:safe-for-turns}
  An interval will be called \df{safe for small left turns}
  if it has no more than \( 3\Delta \) consecutive cells with \( \Pass=1 \).
  And if it has more than \( \Delta \) then it is preceded by a footprint of a big left turn.
  It is \df{safe for big left turns} if it has no sequence of more than \( 3\F\log\Q \)
  footprints of a big left turn closer than \( 2\F \) cells to each other.
  And if it has more than \( 2\F\log\Q \), then on the left of this sequence there is a
  colony encoding a big cell with \( \Pass=1 \).
  It is \df{safe for left turns} if it is safe for both small and big left turns.
  It is \df{safe for turns} if it is safe for both left and right turns.

  We say that the interval is \df{weakly} safe for turns if the upper bounds \( 3\Delta \) and
  \( 3\F\log\Q \) are replaced with \( 6\Delta \) and \( 6\F\log\Q \).
\end{definition}

Note that in order to be safe for turns, the interval only has to be safe for left turns on
its part to the right of the head and safe for right turns on the part to the left.

\begin{lemma}\label{lem:safe-for-turns}
  If a clean interval is passed over from left to right by a path \( P \) having at most one burst,
  then it becomes safe for turns.
  If it is passed from right to left then it becomes safe for small turns; but if this pass happens
  after a left-to-right pass then it also stays safe for all turns.
\end{lemma}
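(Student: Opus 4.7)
The plan is to decompose the pass into its noise-free portion and the single allowed burst, and to track the effect of each on the $\Pass$ track and on big-turn footprints separately. For a left-to-right pass of the front, I would first argue from the zigging rule in Section~\ref{sec:zigging} and the small-turn inheritance rule in Section~\ref{sec:feathering} that in the noise-free part of the execution, each zig leaves at most one $\Pass=1$ mark, and zigs are spaced at least two front-advance steps apart (``every two steps'' rule). So on the cells already overwritten by the noise-free portion, $\Pass=1$ marks are isolated---never two consecutive. Also, since $\MoveFront(-1)$ is not invoked during a straight rightward traversal (the front keeps moving right), no big-left-turn footprint is written by the noise-free part, and any footprint encountered in the clean interval was either laid down before the pass began (so already safety-compliant by hypothesis that the interval was clean, hence produced by a legitimate earlier execution) or is located outside the region already overwritten.

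Next, I would bound the damage caused by the single burst. Since the burst has spatial extent $\leq \beta$ cells, it can directly stamp at most $\beta$ consecutive cells with $\Pass=1$ and can create at most $\beta$ cells of spurious ``footprint'' content. Because $\beta \ll \Delta$ and $\beta \ll \F$, both the count $3\Delta$ of allowed consecutive $\Pass=1$ cells and the count $3\F\log\Q$ of consecutive close footprints are far from being violated; moreover $\beta\le\Delta$, so the conditional ``if more than $\Delta$ then preceded by a footprint'' clause is vacuously satisfied. After the burst, the head resumes noise-free zigging; the normal passage rule ($x\gets 0$ when the head passes through a cell on the way to a turn) can only shrink, not grow, any burst-created block of $\Pass=1$ cells. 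Combining the two contributions proves the left-to-right statement. For the right-to-left case, the symmetric argument on small right turns gives safety for \emph{small} turns via the same noise-free-plus-burst decomposition; safety for big left turns is not guaranteed in isolation because a right-to-left pass can legitimately contain big-left-turn activity. But if the pass is preceded by a left-to-right pass, that earlier pass already produced safety for all turns on the interval, and the subsequent right-to-left pass, being again noise-free except for one burst of size $\leq \beta$, can neither create a long block of $\Pass=1$ cells nor a long tight sequence of big-left-turn footprints in it.

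The main obstacle will be the bookkeeping at the boundary between the ``noise-free before the burst'' and ``noise-free after the burst'' portions of the pass, together with the behavior at the edges of the interval: I need to rule out the scenario where the burst triggers a healing or rebuilding reaction that itself writes many $\Pass=1$ marks or footprints while trying to recover, thus magnifying the burst's effect beyond the naive $\beta$ bound. For this I would invoke the design property (stated in Section~\ref{sec:healing-proc}) that any one call to $\rHeal$ changes only a bounded region of the tape, plus the fact that big-turn starvation requires $3\F\log\Q$ steps of failed search, which a single burst together with its immediate aftermath cannot provoke inside a clean interval whose pre-burst state has only isolated $\Pass$ marks. With these two observations in hand, the combinatorial tallies in the previous paragraph go through, and the constants chosen in~\eqref{eq:cns.traj} and~\eqref{eq:Delta} leave ample slack.
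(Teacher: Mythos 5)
There is a genuine gap, and it starts with a misreading of the hypothesis. ``Clean'' means only that no cell carries the symbol $\Bad$; it does \emph{not} mean the interval was produced by a legitimate earlier execution. The sentence introducing Definition~\ref{def:safe-for-turns} says explicitly that a clean interval ``can have $\Pass\ne 0$ everywhere or many footprints of a big turn too close to each other.'' The lemma exists precisely to show that one pass \emph{repairs} such garbage (the head resets $\Pass\gets 0$ and erases old footprints as it goes by), so your step ``any footprint encountered \dots was \dots already safety-compliant by hypothesis'' assumes the conclusion. The correct object of study is the set of turn points that the pass itself leaves behind, and your second false premise is that a noise-free left-to-right traversal makes no big left turns. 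It makes many: within each work period the head sweeps back and forth over the colony-pair repeatedly (three compute repetitions, voting, transfer), rebuilding makes big turns at the ends of its sweeps, and healing and turn starvation cause further turns --- all of this can be triggered by the arbitrary (clean but unhealthy) content of the interval, not only by the one burst. The paper's proof is essentially a case analysis over the mode in which each surviving turn mark was made (big turn in rebuilding mode, big turn in normal mode, turn during healing, small zig turn), showing that in each case consecutive surviving turn points are forced far apart: roughly $\Q\B$ for big turns (a completed rebuilding or work period leaves a healthy colony that pushes the next such turn a colony-width away), $\Delta$ for healing turns, and every other cell for zigs. None of this bookkeeping appears in your proposal, and it cannot be replaced by the $\beta$-cell bound on the burst, because the dangerous accumulations come from the program's legitimate reaction to pre-existing garbage.

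The same misreading causes you to miss the actual reason for the left/right asymmetry. You say big-left-turn safety fails for a right-to-left pass ``because a right-to-left pass can legitimately contain big-left-turn activity,'' but so does a left-to-right pass. The real mechanism is the restart event of rebuilding (part~\eqref{i:rebuild.restart} of Section~\ref{sec:rebuilding}): during a right-to-left pass, a leftward marking stage can be overridden by a new rightward rebuilding, and repeated overrides can deposit big-left-turn footprints close together. During (or after) a left-to-right pass the only way a rebuilding can be abandoned is big-turn starvation, which forces its turn point at least $2\Q\B$ from the intended one, preserving the spacing. Identifying and exploiting this specific dichotomy is the content of the second claim of the lemma; without it the statement cannot be proved.
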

\begin{proof}
  We will present the proof for a pass from left to right, and point out the only difference for the case
  when the pass is from right to left, in part~\ref{i:safe-for-turns.right-to-left} below.

  We will consider the case when no burst occurs.
  The path with a possible burst can be divided into three parts \( P_{1},P_{2},P_{3} \):
  before the burst, the part when the
  disorder created by the burst is crossed over, possibly several times,
  and the part when the path leaves this place behind definitively.
  Parts \( P_{1},P_{3} \) are handled below.
  Due to attack cleaning, \( P_{2} \) can cross the disorder at most \( \beta \) times, these can
  modify the estimates in parts \( P_{1},P_{3} \) only by \( \beta \) cells.
  
  Let \( x_{1}<x_{2}<\dots<x_{n} \) be the points of the interval left behind with \( \Pass=1 \), and let
  \( t_{i} \) be the times when this happens at \( x_{i} \).
  \begin{enumerate}
  \item Consider the space-time points \( (x_{i},t_{i}) \)
    where the head makes a big turn in rebuilding mode.
    In rebuilding mode, big turns happen at the end of
    sweeps---there is only a constant number of them---so
    before time \( t_{i} \), the head must have performed at least one complete sweep of rebuilding.
    If this rebuilding succeeds then it leaves a healthy area containing at least one colony on the left and one
    on the right of its center.
    Another rebuilding can only start at the left or right of this interval.
    It cannot be on the left, since \( t_{i} \) was the last time when \( x_{i} \) was passed.
    So a next rebuilding big turn can only happen about \( \Q\B \) cells to the right of \( x_{i} \).

    If the pass is from left to right then 
    this rebuilding can only fail by big left turn starvation, see Section~\ref{sec:feathering}.
    This must happen at a distance at least \( 2\Q\B \) to the intended left turn.
    If any later rebuilding has its left end within \( \Q\B \) of \( x_{i} \)
    then it will already succeed, and we can reason as above.

    \item \label{i:safe-for-turns.right-to-left}
    In case the pass is from right to left then another possible way that the rebuilding can fail
    is when the leftward rebuilding is overridden by a new rightward rebuilding, see the Marking part of the
    rebuild procedure in Section~\ref{sec:rebuilding}.
    As a result, the right-to-left turns may get close to each other,
    so safety for big turns will not be guaranteed.
    However, when the interval was previously passed from left to right then, as seen above,
    interrupted rebuildings can only happen because of turn starvation.
    As seen above,
    they are placed at a distance \( >2\Q\B \) from each other, and big turns will remain safe.
 
  \item Consider now the space-time points \( (x_{i},t_{i}) \)
    where the head makes a big turn in normal mode.
    Suppose first that the simulation program may be interrupted by healings, but these
    healings all succeed, so rebuilding is not triggered.
    We only need to consider big turns made in the turn region at a left end of a colony \( C \)
    which is the left element of a colony-pair, during the starting and ending sweeps:
    the footprints inside will be overwritten by the ending sweeps.
    If the work period finishes normally, and a healthy colony \( C \)
    remains to the right of \( x_{i} \), then 
    big turns in normal mode not belonging to this work period
    will only be made at least \( \approx \Q\B \) to the right.

    In normal mode, \( C \) may be replaced with another colony \( C' \), overlapping it.
    Then \( C' \) will already not be replaced (without going to another colony on its left),
    so the big right turns on its left are necessarily separated on the
    right from others by at least \( \approx\Q\B \), but the turns on the left of \( C' \)
    can be close to the previous ones on the left of \( C \).
    
    It is also possible that rebuilding will be called before the work period over \( C \) finishes.
    (This can only happen if then this rebuilding experiences big left turn frustration on its right,
    since otherwise it would sweep over \( x_{i} \).)
    Such a big left turns in normal mode at the left end of a rebuilt
    colony \( C' \) near \( x_{i} \) can occur only once, since rebuilding
    created a colony-pair \( (C',C'') \).
    
    \item Consider the points \( (x_{i},t_{i}) \) where the head turns during healing.
  If \( i<n \) this healing cannot fail, since then the subsequent rebuilding would bring the head to the left
  of \( x_{i} \), contradicting the assumption that \( t_{i} \) was the last time when it was there.
  It could, though, experience small turn starvation.
  But if healing eventually restarts near \( x_{i} \) then it will not experience small turn
  starvation again and succeeds, so any following turn points \( (x_{j},t_{j}) \) due to healing will be
  at least \( \Delta \) away.

  The analysis is similar for points \( (x_{i},t_{i}) \) where the head turns in the part of rebuilding when it is
  attempting to stitch an ambiguous area.

\item What remains is space-time points \( (x_{i},t_{i}) \)
  where the head makes a small turn in normal or rebuilding mode.
  In these modes the head makes a zig only in every second step, so normally these places also don't occur
  consecutively; the violations of this may happen only during healing and are limited as discussed above.
    \end{enumerate}
\end{proof}

\begin{lemma}[Combined heals]\label{lem:combined-heals}
 Let \( d = 3\Z\B \).
 Assume that the head moves in a noise-free and clean space-time rectangle
  \( J\times K \) with with \( 2 d < |J| \), \( |K|\ge \Tu^{*} \),
  touching every cell of \( J \) at least once, and never in rebuilding mode.
  Assume also that at the beginning, \( J \) is safe for small turns.
  Then during \( K \), the area \( \Int(J, d) \) becomes healthy.
\end{lemma}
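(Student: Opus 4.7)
The plan is to exploit three features of the hypothesis: cleanness (so the Transition Function property of Definition~\ref{def:traj} applies in $J$), noiselessness (so no new disorder is introduced during $K$), and initial safety for small turns (so the small turns executed by \rHeal\ never starve). The buffer $d=3\Z\B$ absorbs the radius $4\Delta\B$ of a healing Survey plus the width $\Z/2$ of a D-zone or frontier zone, so any healing initiated at a point of $\Int(J,d)$ operates entirely within $J$.

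First, I would show that healing never escalates into rebuilding inside $J$. By Lemma~\ref{lem:ambiguous}, an admissible configuration in $J$ differs from a healthy one in at most three islands of size $\le\CStain\beta\B$, so the three-ambiguous-area test in the Survey stage always succeeds. Safety for small turns rules out small turn starvation (which would otherwise contribute a cell to a rebuilding base via the Fail path), and the hypothesis that the head never enters rebuilding mode closes off the remaining escape routes. Consequently every encounter of an illegal boundary $x\in\Int(J,d)$ triggers a Survey/Stitch sequence that, by Lemma~\ref{lem:stitching}, terminates in a bounded number of sweeps with a pre-healthy window of radius $\Delta\B$ around $x$; the subsequent Move/Adjust stages restore full local health by trimming the D-zone to $\Z/2$ cells and removing any leftover rebuild marks.

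Second, I would argue that safety for small turns is self-sustaining throughout $K$. The zigging rule of Section~\ref{sec:sweep} leaves $\Pass=0$ in every second cell of the head's wake, so no new run of $\Pass=1$ cells of length $3\Delta$ can be created in a noise-free region; thus each subsequent healing inherits the safety condition. Third, because the head touches every cell of $J$ and $|K|\ge\Tu^{*}$, every illegal boundary in $\Int(J,d)$ is visited and locally healed, and noiselessness keeps that window healthy thereafter. Tile $\Int(J,d)$ by overlapping windows of radius $\Delta\B$ around visited illegal boundaries together with the substantial homogenous domains that were already healthy at time $t=\min K$, arranging overlaps of at least $\Z/2$ cells (possible since the zigs sweep this breadth around every visited point). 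Applying Corollary~\ref{crl:health-extension} repeatedly glues the local healthy pieces into a healthy configuration on $\Int(J,d)$. The time budget suffices because each local healing uses $O(\Delta\Z)$ steps over a constant number of sweeps, and $O(|J|/\Delta)$ windows are needed, well within $\Tu^{*}$.

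The main obstacle is the interaction of healing near the boundary of an already-healthy region: one must verify that stitching near the edge of a pre-healthy window does not reintroduce illegal boundaries further in, and that healing centered on one illegal boundary does not undo another healing's progress nearby. The Survey stage handles this by short-circuiting when its $\Delta$-neighborhood is already pre-healthy, so only Move/Adjust is executed and the interior is left untouched. The choice $d=3\Z\B$ is exactly large enough to ensure that the Survey window, the D-zone and the frontier zone belonging to any boundary-edge healing all sit inside $J$, so no side effects escape into $\Int(J,d)$ from outside.
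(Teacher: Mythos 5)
Your proposal is correct and follows essentially the same route as the paper's proof: rule out rebuilding and turn starvation, invoke Lemma~\ref{lem:ambiguous} and Lemma~\ref{lem:stitching} so that every Survey/Stitch episode succeeds, note that zigging and the $\Delta$-bound keep the configuration safe for small turns, and use the $d$-margin to keep all healing activity inside $J$. The only difference is bookkeeping — the paper tracks a single pre-healthy interval $A$ that grows as the head absorbs each new illegal boundary, whereas you tile $\Int(J,d)$ and glue at the end via Corollary~\ref{crl:health-extension} — which amounts to the same argument.
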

\begin{proof}
  Since \( J \) is safe for small turns, in what follows we will not see small turn starvation.
  Since big turn starvation would trigger rebuilding and we don't see rebuilding,
  we will not encounter big turn starvation either.
  
  If healing was not called then after the head touched every cell of \( J \) the
  health of the area is proved.
  If the head entered \( J \) during healing then before finishing healing,
  it touches over an area of size \( \le 2\Z\B \).
  The upper bound \( 3\Delta \) on the number of consecutive cells with \( \Pass\ne 0 \)
  makes sure that the turns happen within \( 3\Delta \) cells of both ends of this area, increasing
  its size to at most \( d = 3\Z\B \).
  
  Consider some time when healing is started while the head is in \( J \),
  and let \( I_{1} \) be the interval \( I \) defined in the healing
  procedure for this point.
  Since rebuilding is not started, the stitching part of healing succeeds, with the interval \( I'_{1} \)
  becoming pre-healthy, while staying in an area of size \( d \).
  After this, the Move part of healing is trying to move the head towards a D-zone.
  New healing starts only at some illegal boundary \( x \) outside \( I'_{1} \),
  and the interval \( A \) containing both \( x \) and \( I'_{1} \) is pre-healthy at this time.
  If the head does not leave \( J \) during this procedure (which is the case if \( x\in\Int(J,d) \)),
  this stitching also succeeds, creating a new pre-healthy interval \( I'_{2} \) that
  intersects \( A \) in an interval of size \( \ge\Delta\B \).
  Hence \( A\gets A\cup I'_{2} \) becomes pre-healthy.

  This process continues until the head arrives at a D-zone, when it starts
  the Adjust part of healing in a pre-healthy interval \( A \);
  this can be interrupted by an illegal boundary again,
  triggering new stitching, but since rebuilding is not started, 
  it eventually succeeds, creating a healthy interval \( A \) that
  contains all areas surveyed until now.
  Now normal mode resumes, moving the D-zone and eventually moving the front,
  until a new illegal boundary is found.
  This way the healthy interval in which the head is moving is getting extended,
  and the only parts of \( J \) not becoming healthy are confined to the borders
  of size \( d \).

  The healing procedure can only create at most \( \Delta \) consecutive cells with
  \( \Pass\ne 0 \).
  Zigging occurs only every 2 steps, so it does not create solid intervals with \( \Pass\ne 0 \).

  The number of steps in normal mode is at most as much as our bound on the number of
  steps in a work period.
  This changes by at most a factor of 2 due to the delays in healings, so the bound \( |K|\le\Tu^{*} \)
  is sufficient.
\end{proof}

\subsection{Annotation, scale-up}\label{sec:annotation}

Let us define the notion of ``almost healthy'' (admissible) for histories.

Informally, an admissible configuration may differ from a healthy one in a small number
of intervals we will call ``islands''.
Even a healthy configuration may contain some intervals called ``stains'':
places in which the \( \Info \) track differs from a codeword.
These pose no obstacle to the simulation, and if they are small and few then
will be eliminated by it, via the error-correcting code.

Annotation will interpret parts of a history, ``covering up''
small segments that are not quite healthy.
It will leave other parts uninterpreted.


\begin{definition}[Annotation]\label{def:annotation}
  Recall the definition of \( \beta' \) and \( \CStain \) in~\eqref{eq:stain}.
  Let
  \begin{align*}
   \Q' = \Q-2\PadLen
  \end{align*}
  be the number of cells in the interior of a colony.
  Then any healthy interval of size \( \Q'\B \) intersects with at most one turn region at the end
  of some colony.
  
  A \df{annotation} for a history \( (\eta,\Noise) \) over a time interval \( \clint{0}{u} \) is a tuple
  \begin{align*}
    (\Rg, \chi, \cI,\cS)
  \end{align*}
  with the following properties, for a certain constant
  \begin{align*}
   \CRelief.
 \end{align*}
  \begin{varenum}[series=annotated]{a}

  \item \( \Rg \) is a subset of \( \bbZ\times\clint{0}{t} \) called the \df{range} of the annotation;
    denote \( \Rg(t)=\setOf{x}{(x,t)\in \Rg} \).
    
  \item \( \chi \) is a history over \( \Rg \),
    there are no bursts of \( \Noise^{*} \) over \( \Rg \),
    and \( \chi(\cdot,t) \) is healthy over \( \Rg(t) \). 

  \item\label{i:annotated.islands}
    The disorder in \( \Rg \) is covered by a set \( \cI \) of connected space-time
    regions called \df{islands}.
    Elements of the set \( \cS \) are connected space-time regions in \( J \) called \df{stains}
    whose space projection has size \( \le\CStain\B \).
    Each island is contained in a stain.
    We will write \( \cI(t) \) for the set of islands at time \( t \); in other words, 
    \( L \) is in \( \cI(t) \) iff 
    \( L\times\{t\} \) belongs to \( \setOf{K\cap(\bbZ\times\{t\})}{K\in\cI} \).
    Similarly for stains.

    At any time, \( \cI(t) \) and \( \cS(t) \) consist of intervals.


    \( \eta \) differs from \( \chi \) over \( \Rg \) only in the islands.

  \item\label{i:annotated.island-escape}
    In any one island, the head does not spend more total time (even when
    entering and exiting possibly several times) than
    \( \CRelief(\beta^{2}/\gamma)\passno\escno\Tu \).

  \item\label{i:annotated.num-islands} At any time \( t \), any segment of size \( \Q\B \)
    contains at most 3 islands, and at most 1 can be in the interior of a colony.
    At most 2 stains can be in the interior of a colony.

  \item If at some time no cell of a colony of \( \chi \) belongs to the update phase then
the \( \Info \) track of the interior can be changed in the stains in such a way that it becomes
a codeword of the code \( \upsilon \) as in Definition~\ref{def:colony-interior}.

  \item\label{i:annotated.turned}
    Every interval of size \( \Q'\B \) is weakly safe for turns.
\end{varenum}



  



In an annotated history, we will say that the head is \df{free} if it is at a distance of at least
  \( \Z \) steps from any island.
  \end{definition}

A configuration may allow several possible annotations;
however, since the code defined in Section~\ref{sec:simulation-phase}
is \( (\CStain\beta,2) \)-error-correcting, 
the codewords recoverable from it do not depend on the choice of the annotation.

Formally, the proof of error-correction would proceed by proving that
annotation can essentially be extended forward in time; however, we will retain an informal language
whenever it is clear how to translate it to annotation.
Example~\ref{xpl:3-islands} shows how three islands may arise, along with
an informal argument that local correction
does not have to deal with more than three islands in any area of size \( \Q'\B \). 

Let us now define formally the codes \( \varphi_{*k},\Phi_{k}^{*} \) needed
for the simulation of history \( (\eta^{k+1},\Noise^{(k+1)}) \) by history \( (\eta^{k},\Noise^{(k)}) \).
Omitting the index \( k \) we will write \( \varphi_{*},\Phi^{*} \).
To compute the configuration encoding \( \varphi_{*} \) we proceed first as
done in Section~\ref{sec:hier-codes}, using the code \( \psi_{*} \) there,
and then initialize the kind, sweep, drift and address fields appropriately.
The value \( \Noise^{*} \) is obtained by a residue operation
as in Definition~\ref{def:sparsity}; it remains to define \( \eta^{*} \).
In the parts of the history that can be locally annotated, and which we will call \df{clean},
if no colony has its starting point at \( x \) at time \( t \), set \( \eta^{*}(x,t)=\Vacant \).
Otherwise \( \eta^{*}(x,t) \) will be decoded from
the \( \Info \) track of this colony, in its work period containing time \( t \).
More precisely:

\begin{definition}[Scale-up]\label{def:scale-up}
Let \( (\eta,\Noise) \) be a history of \( M \).
We define \( (\eta^{*},\Noise^{*}) \) \( =\Phi^{*}(\eta,\Noise) \) as follows.
Consider position \( x \) at time \( t \), let \( I=\lint{x-\Q\B}{x+2\Q\B} \),
\( J =\rint{t-\Tus}{t} \).
If \( \eta(\cdot,t) \) cannot be annotated in \( I\times J \)
then \( \eta^{*}(x,t)=\Bad^{*} \).
If \( x \) is not the start of some colony \( C \) in this annotation
then let \( \eta^{*}(x,t)=\Vacant \); assume now that it is.
Then let \( t' \) be the last time when the head is not in an island and its age
is not in the update phase, and let
\( \eta^{*}(x,t) \) be the value decoded from \( \eta(C,t') \).
In more detail, as said at the end of Section~\ref{sec:coding}, we apply the decoding
\( \psi^{*} \) to the interior of \( C \) to obtain \( \eta(x,t) \).
\end{definition}

\subsection{Dealing with isolated bursts}\label{sec:1-level-noise}

Definition~\ref{def:scale-up} decodes trajectories \( (\eta,\Noise) \) into
histories \( (\eta^{*},\Noise^{*}) \).
We don't know yet whether trajectories of \( M \) are decoded into trajectories
of \( M^{*} \).
Let us give an informal argument first.

Isolated bursts don't create disorder larger than \( \beta \).
The head escapes a disorder interval \( I \) via the Escape property; while it is inside, the
spreading of this interval is limited by the Spill Bound property.
Every subsequent time when the head enters and exits \( I \) this gets decreased
via the Attack Cleaning property, so it disappears after \( O(\beta) \) such interactions---see
Lemma~\ref{lem:healing} below.

Let us first show that annotation implies that the job of simulation proceeds as required.

\begin{lemma}\label{lem:transition}
  Consider an annotation \( (\Rg,\chi, \cI,\cS) \) of a trajectory \( (\eta,\Noise) \)
  over a rectangle in \( \Rg'\subseteq\Rg \)
  which the head never leaves.
  \begin{alphenum}
    \item The decoded history \( (\eta^{*},\Noise^{*}) \) over \( \Rg' \)
      satisfies the Transition Function property of trajectories (Definition~\ref{def:traj}).
    \item\label{i:transition.no-burst} Assume that, in addition, no bursts occur over \( \Rg' \),
      let \( t \) be the end of a complete work period of some colony-pair, and
      \( C=\lint{x}{x+\Q\B} \) one of its colonies.
      Then 
\begin{align*}
 \xi=\eta(\cdot,t)\on C =\phi_{*}(\eta^{*}(x,t)) .
\end{align*}
      In other words, the colony \( C  \) will have no stains at time \( t \); each of its cells has
      the value assigned to it by the code \( \phi_{*} \).
  \end{alphenum}
\end{lemma}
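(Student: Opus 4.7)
The plan is to unpack the $\Compute$ rule (Section~\ref{sec:simulation-phase}) against the annotation guarantees, starting with part~\ref{i:transition.no-burst} where there is no noise in $\Rg'$. Since no bursts occur in $\Rg'$, each of the three repetitions ($j=1,2,3$) of the decode--transition--encode stage of $\Compute$ executes without fault. The input $\Info$ track may still carry residual stains (at most two per colony interior, by annotation property~\ref{i:annotated.num-islands}); but $(\upsilon_{*},\upsilon^{*})$ is $(\CStain\beta,2)$-burst-error-correcting (Section~\ref{sec:coding}), so $\tilde\va$ is decoded exactly. The interpreter $\Interpr$ then computes $\tau^{*}(\tilde\va,\alpha)$ correctly by the forced-self-simulation construction, and $\upsilon_{*}(\va')$ is written identically to all three $\Hold[j].\Info$ tracks. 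At the majority vote in step~\ref{i:vote}, the $\Info$ track of the interior becomes exactly $\upsilon_{*}(\va')$ with \emph{no} stains; $\Drift$ and $\Replace$ are analogously exact. The deterministic transfer phase then produces, by inspection against Definition~\ref{def:configuration-code}, precisely $\varphi_{*}(\eta^{*}(x,t))$ on the new colony-pair, including the correct $\Kind,\Addr,\Drift,\Sweep$ fields on every cell. This gives part~\ref{i:transition.no-burst}.

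For part (a), I would lift this argument and use the three-fold redundancy to absorb the isolated bursts permitted within $\Rg'$. By Lemma~\ref{lem:bursts} and the sparsity of $\Noise$ (Definition~\ref{def:sparsity}), any single burst has time extent much smaller than $\Tu$, so any one burst can corrupt at most one of the three decode--transition--encode passes within a work period. At the majority vote, the correct values $\upsilon_{*}(\va')$, $d$, and $\Replace$ therefore get written into Info, Drift, and Replace, possibly with post-vote corruption confined to a single island (which is absorbed into the annotation of the next work period via the bounds of Definition~\ref{def:annotation}). Each decoding succeeds because the input is $2$-compliant (handled by the error-correcting code); if compliance fails, the $\rul{ComplianceCheck}$ reset in the $\Compute$ rule overwrites the colony with $\new_{0}$, which a subsequent rebuilding will recover. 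Consequently, the simulated switch at time $t$ assigns to the simulated current cell-pair the pair $\va'$ with $(\va',d)=\tau^{*}(\tilde\va,\alpha)$, and the transfer phase moves the simulated head to the neighbor simulated cell-pair in direction $d$; since tape changes during the work period are confined to the current and target colony-pairs, the change in $\eta^{*}$ is confined to the interval enclosing the old and new simulated current cell-pairs. The dwell-period bound follows from the length estimate of Section~\ref{sec:length-work-period} together with the choice $\Tu^{*}=\Tu\U$ in Definition~\ref{def:hier-params}, which absorbs the overhead of zigging, feathering, and island-recovery (bounded by annotation property~\ref{i:annotated.island-escape}).

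The main obstacle is the bookkeeping in part (a): verifying, against Lemma~\ref{lem:bursts} and the sparsity constants of Definition~\ref{def:sparsity}, that an isolated burst truly corrupts at most one of the three $\Hold[j]$ tracks in any given work period, and that the post-vote $\Info$ track accumulates at most one new island (so that annotation properties~\ref{i:annotated.num-islands} and \ref{i:annotated.island-escape} continue to hold). Closely related, one must confirm that the transfer phase, which involves bridge-building, kind re-assignment, and possibly the replacement action of Definition~\ref{def:dictated}, produces a configuration whose annotation on the \emph{next} work period still admits the clean decoding prescribed by Definition~\ref{def:scale-up}; this is largely a matter of unfolding the rules of Section~\ref{sec:transfer} against the healthy-configuration specification of Section~\ref{sec:health}.
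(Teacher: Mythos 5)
Your proposal is correct and follows essentially the same route as the paper's proof: annotation properties bound the islands and the time lost in them, the $(\CStain\beta,2)$-error-correcting code absorbs the at most two stains per colony interior, a new burst can spoil at most one of the three $\Hold[j]$ repetitions so the majority vote recovers the intended result (exactly, with no stains, in the burst-free case of part (b)), and the work-period length estimate of Section~\ref{sec:length-work-period} plus the island-escape bound give the dwell-period bound $\Tu^{*}$. The only slip is your justification that a burst's ``time extent is much smaller than $\Tu$'' (it is in fact $\beta\S_k\gg\Tu_k$, though still much smaller than one repetition); the operative facts are the isolation of bursts and the spatial confinement of their damage, which you otherwise invoke correctly.
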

\begin{proof}
  Consider a sequence of configuration in \( \Rg' \) starting from one corresponding to the
  beginning of a working period of a colony-pair, at a time when the head is free. 
  Properties~\eqref{i:annotated.islands} and~\eqref{i:annotated.island-escape} limit
  the space occupied by each island as well as the time that a head can spend in it.
  While the head is free it is carrying out the simulation program.
  Property~\eqref{i:annotated.turned} makes sure that no turn starvation slows it down.
  The error-correcting decoding will recover the state of the simulated cell from each colony
  since it needs to recover from at most two stains.

  Any island in the interior of the colonies will be eliminated during the initial
  sweeps of the work period, as this is the only way the head can be freed from it.
  The computation part of the simulation is repeated three times.
  A new burst can occur in at most one of these repetitions; if it appeared in the
  \( i \)th repetition, the result on the \( \Hold_{i} \) track may be worthless.
  But the two other \( \Hold_{j} \) tracks will contain the correct results, and the majority
  voting will recover it.
  If no new bursts occur then the majority voting will recover everywhere the correct result,
  showing~\eqref{i:transition.no-burst}.
  
  The length of the work period is upper-bounded by \( \Tu \) times the
  number \( O(\Q\F\Z^{2}) \) of computation
  steps (see Section~\ref{sec:length-work-period})
  while the head is free plus the time while the head was not free.
  Given that the total time spent in any one island is \( \le\CRelief \beta^{3}\escno\Tu \)
  and there are at most 3 islands per colony, the total time in islands
  is \( O((\beta^{2}/\gamma)\passno\escno\Tu ) \): adding this to \( O(\Q\F\Z^{2}\Tu) \),
  we still stay below the upper bound \( \Tu^{*} \).

\end{proof}

Let us proceed to proving that in the absence of \( \Noise^{*} \), annotation can be extended
on  the range \( \Rg \).
More precisely, we will introduce a game.

\begin{definition}\label{def:annotation-game}[Annotation game]
  The game is played over a trajectory \( (\eta,\Noise) \) of a machine \( M \).
  Its two players are Range Extender and Annotator.
  Range Extender is in charge of a sequence of times \( 0 = t_{0} < t_{1} < t_{2}<\dots \),
  and in extending the range.
  At any stage of the game, it assumed that the annotation is defined up to time \( t_{i} \),
  that is part of the range \( \Rg\subseteq\bbZ\times\lint{0}{t_{i}} \) is defined as well as
  the annotation over it.
  Now the Range Extender extends the range over the time interval \( \lint{t_{i}}{t_{i+1}} \)
  in two possible ways.
  \begin{djenum}
  \item It deletes an interval \( I \) containing the head from \( \Rg \).
    (We still talk about ``extending'' since the range is extended in time.)
    More precisely, it defines
    \begin{align*}
   \Rg\gets \Rg\cup(\Rg(t_{i}-1)\setminus I)\times\lint{t_{i}}{t_{i+1}}.
    \end{align*}
    It also trims all islands and stains, deleting their parts in \( I \), and keeps
    \( \cI(t) \), \( \cS(t) \) constant over \( \lint{t_{i}}{t_{i+1}} \).

  \item It adds an interval \( I \) containing the head to \( \Rg \).
    More precisely, it defines
    \begin{align*}
   \Rg\gets \Rg\cup(\Rg(t_{i}-1)\cup I)\times\lint{t_{i}}{t_{i+1}},
    \end{align*}
    where \( I \) has the property that by defining \( \chi(x,t_{i})=\eta(x,t_{i}) \) over \( I \)
    and setting \( \cI(t_{i})=\cI(t_{i}-1) \), \( \cS(t_{i})=\cS(t_{i-1}) \) we still get an annotation.
    Also \( \eta \) is safe for turns over \( \Rg(t_{i}) \).
  \end{djenum}
  In both cases, the Range Extender can extend the range only in such a way that
  \( \Noise^{*} \) remains empty over the new range.

  The Annotator must respond by extending the definition of \( \chi,\cI,\cS \) over the
  new range, that is up to \( t_{i+1} \), in such a way that the resulting structure is still an
  annotation.
\end{definition}

Two remarks:
\begin{itemize}
\item The players are ``clairvoyant'': they see the whole trajectory
\( (\eta,\Noise) \) ahead as well, not just the parts that has been annotated already.
\item The players should not be considered adversaries: rather, they cooperate in creating
  the annotation of the whole trajectory.
\end{itemize}

Below, we will show that under the conditions, Annotator can always respond; this way,
the effect of sparse bursts will be corrected, leading to the Transition Function property
of the simulated trajectory \( (\eta^{*},\Noise^{*}) \).
In later sections we will show the other properties, and also how they can lead to appropriate
choices of the Range Extender player.
Note in particular that in this game, as Lemma~\ref{lem:no-new-island} shows,
every island is the result of a burst that occurred during \( \Rg \).
 
In a clean configuration, whenever healing started with an alarm, the procedure
will be brought to its conclusion as long as no new fault occurs.
However, every time the head emerges from disorder, we
cannot assume anything about the state of the cell-pair to which it arrives.
This complicates the reasoning, having to consider several restarts of the healing procedure.
By design, this procedure can change the \( \Core \) track only in one cell.
The following lemma limits even this kind of possible damage.

\begin{lemma}\label{lem:no-new-island}
In the absence of noise, no new island will arise.
\end{lemma}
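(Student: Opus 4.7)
The plan is to analyze how $\eta$ evolves during a noise-free segment of $\Rg$ and show that any newly disordered cell is spatially connected to the time-slice of an already-existing island, so that the Annotator can simply enlarge that island rather than start a new connected component of disorder.

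The key observation is that the transition function $\tau$ of a generalized Turing machine has no $\Bad$ inputs or outputs (Definition~\ref{def:gen-TM}), so an application of $\tau$ never turns a clean cell into $\Bad$. I would split into three cases by the head's position relative to existing disorder. If the head lies in the $2.5\B$-interior of a clean interval, the Transition Function property of Definition~\ref{def:traj} dictates the update entirely via $\tau$, producing no new $\Bad$ symbol, and the agreement $\eta=\chi$ is preserved in this neighborhood. If the head lies inside an existing island, the tape change is confined to that island's space-time region and no new connected component appears. If instead the head lies in a clean area within $2.5\B$ of an island boundary, the Transition Function property need not apply directly, but the Spill Bound property bounds the encroachment of disorder into the clean side by at most $\CSpill\B$, which stays safely inside the surrounding stain of size $\CStain\B$.

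In all three cases, together with the history axiom that cells outside the head's $2\B$-neighborhood do not change between consecutive configurations, any new disorder point at time $t+1$ is spatially adjacent to a disorder point at time $t$ that already belonged to some island. The Annotator therefore extends that island's space-time region by one time slice rather than opening a new component. The healing and rebuilding routines the machine may invoke are themselves realized by applications of $\tau$ on their clean workspace, so they cannot introduce new $\Bad$ symbols either.

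The main obstacle is controlling the cumulative growth caused by the head repeatedly entering and leaving an island: each excursion could in principle spill by $\CSpill\B$ of fresh disorder. This is tamed by the Attack Cleaning property, which erodes the boundary whenever the head re-enters and exits, combined with the bound on the total time the head may dwell in any single island built into Definition~\ref{def:annotation}. The stain size $\CStain\B$ fixed in~\eqref{eq:stain} is chosen precisely to absorb the maximum net spill permitted within that dwell bound, so the Annotator's extension strategy above maintains a valid annotation throughout the noise-free segment without ever creating a new island.
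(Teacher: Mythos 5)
Your argument addresses the wrong notion. An island is not the same thing as disorder: by Definition~\ref{def:annotation} (and the gloss in Section~\ref{sec:annotation}), an island is a region where \( \eta \) differs from the \emph{healthy} annotation \( \chi \) --- and a cell can be perfectly clean (not \( \Bad \)) yet belong to an island because its \( \Core \) fields (\( \Addr \), \( \Age \), \( \Sweep \), \( \Rebuild \), \dots) are inconsistent with any healthy configuration. Your proof establishes only that no new \( \Bad \) symbol appears and that new disorder is spatially connected to old disorder; it never rules out the actual danger, namely that the program itself --- in particular the healing procedure, which is invoked precisely because the head has touched an existing island --- writes structurally ill-formed but clean content (a misplaced rebuild-base mark, a shifted boundary, an inconsistent \( \Age \) or D-zone) into previously healthy territory, thereby opening a new island. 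Your closing remark that healing ``cannot introduce new \( \Bad \) symbols'' does not close this gap, since introducing \( \Bad \) symbols was never the issue.

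The paper's proof is exactly about this missing case: it observes that islands are determined by the \( \Core \) tracks, that in normal mode these tracks change only at boundary points (which are legal changes), and that healing changes \( \Core \) only through stitching or adding/removing rebuild marks; Lemma~\ref{lem:stitching} then guarantees that inside a healthy area stitching only moves boundaries around and preserves pre-health, with the Adjust stage restoring the sizes of the frontier and D-zones. You would need an argument of this form. Separately, the ``main obstacle'' you spend your last paragraph on --- cumulative growth of an island as the head repeatedly enters and exits it --- is not what this lemma is about; that growth is bounded elsewhere (in the island-size step of Lemma~\ref{lem:healing}), whereas this lemma asserts only that no \emph{new} island arises.
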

\begin{proof}
  The islands are defined only by the tracks in the \( \Core \) group.  
In normal mode, these tracks change only at a boundary point.

The healing procedure changes the \( \Core \) as part of a stitching operation,
or removing or adding a rebuild mark.
The proof of Lemma~\ref{lem:stitching} shows that 
inside a healthy area, healing can only change the \( \Core \) track by
moving the boundaries around.
These operations don't affect pre-health; they may temporarily change the
size of the frontier zone or the D-zone by up to \( 3\Delta \), but then the
adjust stage restores these sizes.
\end{proof}

The following lemmas are central to the analysis under the condition that bursts are isolated.
Let us first show how the Escape and Pass Cleaning property of trajectories
helps bringing the head into clean intervals, even if these are rather small.

\begin{lemma}\label{lem:create-holes}
 Let \( \gamma' = \gamma/2 \).
 Consider a noise-free path \( P \) starting at some time \( t_{0} \) at a point \( b_{0} \).
Create points \( b_{j}=b_{0}+j\gamma'\B \) for (positive and negative) values of \( j \).
Let us call the intervals \( \lint{b_{j}}{b_{j+1}} \) \df{blocks}.
Assume that the disorder that the path may encounter is covered by \( n \) blocks.
Then there is a sequence of times \( t_{0}<t_{1}<\dots \) 
with \( t_{i+1}-t_{i}\le\escno\Tu \), such that during the time 
intervals \( \rint{t_{i}}{t_{i+1}} \) called \df{skips} the head passes over some block \( H_{i} \)
either leftwards or rightwards, further
except for \( 2 n\passno \) skips, the interval
\( H'_{i}=\Int(H_{i},(\CMarg+\CSpill)\B) \) is clean.
Consequently, the first such skip happens for some 
time \( t_{i}\le (2 n\passno + 1)\escno\Tu \).
\end{lemma}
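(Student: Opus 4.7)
The plan is to define skip times $t_0 < t_1 < \cdots$ recursively so that each skip corresponds to a full traversal of a single block, and then to bound the number of ``dirty'' skips (those over a block $H_i$ whose interior $H'_i$ fails to be clean) using the Pass Cleaning and Spill Bound properties of trajectories.

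Since $b_0$ lies on a block boundary, $h(t_0) = b_0 = b_{j_0}$ for $j_0 = 0$. Inductively, given $h(t_i) = b_{j_i}$, define $t_{i+1}$ to be the earliest time after $t_i$ at which $h(t) \in \{b_{j_i-1}, b_{j_i+1}\}$. The interval stretching from $b_{j_i-1}$ to $b_{j_i+1}$ has length $2\gamma'\B = \gamma\B$, so the Escape property forces $t_{i+1} - t_i \le \escno\Tu$. Since the head moves by at most one position per time unit, during $\rint{t_i}{t_{i+1}}$ the head stays inside this interval and eventually reaches the opposite boundary, so it fully traverses the block $H_i$ lying between $b_{j_i}$ and $b_{j_{i+1}}$ in one of the two directions.

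Next I would bound the number of dirty skips. For a block $H_i$ that is not among the $n$ ``dirty'' blocks covering the initial disorder, disorder can enter its interior only by spillage from a dirty neighbor; the Spill Bound keeps this spillage within $\CSpill\B$ of the adjoining boundary, so $H'_i = \Int(H_i, (\CMarg+\CSpill)\B)$ remains clean. For a dirty block $H_j$, observe that consecutive skips over $H_j$ must alternate direction: after a left-to-right traversal the head sits at $b_{j+1}$, and to begin a second left-to-right traversal it must first return to $b_j$, which necessarily involves a right-to-left traversal of $H_j$. Consequently, any $2\passno$ skips across $H_j$ contain $\passno$ left-to-right/right-to-left pairs of passes, and Pass Cleaning guarantees that at some moment within those skips the set $\Int(H_j, \CMarg\B)$ becomes clean. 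The Spill Bound then preserves $\Int(H_j, (\CMarg+\CSpill)\B) = H'_j$ clean thereafter, since the path is noise-free and no new disorder can arise. Each of the $n$ dirty blocks therefore yields at most $2\passno$ dirty skips, for a total of at most $2n\passno$.

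Combining the two bounds, among the first $2n\passno + 1$ skips at least one must be clean, and since each skip takes time at most $\escno\Tu$, this skip occurs no later than time $(2n\passno + 1)\escno\Tu$ after $t_0$. The main obstacle I anticipate is the alternation-of-directions argument for a single dirty block: I must verify that a ``skip traversal'' in my sense coincides with a ``pass'' in the sense required by Pass Cleaning, and that between two same-direction skips over $H_j$ there is necessarily an intervening opposite-direction skip over the same block rather than some trajectory that avoids it. A careful appeal to the continuity of head motion and the fact that skip endpoints are block boundaries should close this gap.
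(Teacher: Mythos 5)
Your proposal is correct and follows essentially the same route as the paper's proof: define $t_{i+1}$ as the exit time from $\lint{b_{j-1}}{b_{j+1}}$ via the Escape property, and charge at most $2\passno$ dirty skips to each of the $n$ initially disordered blocks via Pass Cleaning followed by Spill Bound. The alternation-of-directions point you flag is exactly the (implicit) justification the paper also relies on, and it holds because skip endpoints are block boundaries and head motion is continuous.
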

By~\eqref{eq:beta-lb}, \( |H'_{i}|>0 \).
\begin{proof}
  Suppose that time \( t_{i} \) has been defined and the head is at point \( b_{j} \) at this time.
  The interval \( \lint{b_{j-1}}{b_{j+1}} \) has length \( 2\gamma'\B=\gamma\B \), so 
  by the Escape property of trajectories and~\eqref{eq:beta-lb}, the head
  will escape it within time \( \escno\Tu \).
  Let \( t_{i+1} \) be the time when it arrives at \( b_{j-1} \) or at \( b_{j+1} \).
  
  We claim that the number of skips \( i \) in which \( H'_{i} \)
  is not clean is at most \(  2 n\passno \).
  Indeed, the total number of possible blocks containing disorder originally is \( \le n \).
  If \( H = H_{i} \) for a right skip \( i \) for \( \passno \) times,
  then the Pass Cleaning property implies that \( \Int(H,\CMarg\B) \)
  becomes clean; then by the Spill Bound property, \( H' \)  stays clean.
  Similarly for left skips, so \( H = H_{i} \) without this for at most \( 2\passno \) values of \( i \).
\end{proof}

\begin{lemma}[Healing]\label{lem:healing}
  In the annotation game, Annotator can always respond.

\end{lemma}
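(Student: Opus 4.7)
The plan is to proceed by induction on the game rounds. A delete move is trivial: spatially restrict $\chi,\cI,\cS$ to the new range and extend them in time by the identity, which preserves every annotation clause since they are all local or monotone under restriction. The substantive case is an add move, where $\chi,\cI,\cS$ must be propagated over $\lint{t_i}{t_{i+1}}$ as the underlying trajectory $(\eta,\Noise)$ evolves on the extended range.

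The structural backbone is that within $\Rg$, every fault belongs to a level-$1$ burst which, by the absence of $\Noise^{*}$, is $\pair{\beta(\B,\Tu)}{\gamma(\B^{*},\Tus)}$-isolated. Combined with Lemma~\ref{lem:no-new-island} (no island appears without a new fault), new islands are therefore sporadic in space-time, and the Spill Bound property ensures each new burst fits inside a stain of size $\CStain\B=(2\beta'+1)\B$. Safety for turns is preserved under head passages by Lemma~\ref{lem:safe-for-turns}, which allows one burst per pass while still leaving the interval weakly safe. Decodability of the $\Info$ track at the end of every work period then follows from the $(\CStain\beta,2)$-burst-error-correcting property of $\upsilon$ together with the two-stains-per-colony-interior bound.

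The main obstacle is bounding the head's total time inside any single island by $\CRelief(\beta^{2}/\gamma)\passno\escno\Tu$. My approach uses the three magical trajectory properties in tandem. The Escape property caps each individual sojourn of the head in the disordered interval (of width at most $\beta'\B$) at $\escno\Tu$. Lemma~\ref{lem:create-holes}, applied around the isolated burst, yields that within $O(\passno)$ escape skips the head first reaches a clean block adjoining the island; Attack Cleaning then shrinks the disorder by at least $\B$ on every clean re-entry from that side. After $O(\beta)$ such cleaning cycles the island is eliminated, giving a total head-time inside it of $O(\beta\passno\escno\Tu)$, comfortably below the allowed $\CRelief(\beta^{2}/\gamma)\passno\escno\Tu$ by~\eqref{eq:beta-lb}. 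Isolation of bursts likewise keeps the islands-per-colony count at any one time bounded by $3$. With every clause verified, $\chi,\cI,\cS$ extend to an annotation over $\lint{0}{t_{i+1}}$, closing the induction.
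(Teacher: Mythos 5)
Your overall skeleton matches the paper's: bound each sojourn in disorder via the Escape property and Lemma~\ref{lem:create-holes}, bound the \emph{number} of exits from disorder via Attack Cleaning (each exit on a previously-entered side shrinks the disorder by \( \B \), so at most \( \beta'+1 \) exits), control island size, and get decodability from the \( (\CStain\beta,2) \)-burst-error-correction together with the stain count. Two quantitative remarks: your per-island time \( O(\beta\passno\escno\Tu) \) undercounts the skips needed to reach a clean block --- Lemma~\ref{lem:create-holes} gives \( O((\beta/\gamma)\passno) \) skips, not \( O(\passno) \), since the disorder is covered by \( n=O(\beta/\gamma) \) blocks of width \( \gamma'\B \) --- and you count only time spent inside disorder, omitting the healing time (Survey/Stitch/Move/Adjust) between successive visits; both issues are harmless because the corrected totals still sit under \( \CRelief(\beta^{2}/\gamma)\passno\escno\Tu \), as the paper verifies explicitly.

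The genuine gap is your one-sentence dismissal of the islands-per-colony bound: ``isolation of bursts likewise keeps the islands-per-colony count at any one time bounded by \( 3 \).'' Isolation alone does not give this. Sparsity limits bursts to roughly one per \( \gamma \) work periods, but over a long history the head can revisit the same colony many times, and each visit's final sweep can deposit a fresh island; nothing in the sparsity hypothesis prevents unbounded accumulation if old islands are never revisited. The actual argument (the scenario of Example~\ref{xpl:3-islands}) is that an island is \emph{eliminated} whenever the head passes over it (zigging detects it), and the only way an island survives a visit is to sit in a turn region from which the head turns back; a second such visit can add a second island, but then the simulated cell has \( \Pass=1 \), so the \emph{feathering property of the simulated machine} \( M^{*} \) forces the third visit to pass all the way through the colony, eliminating both old islands while creating at most one new one. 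Without invoking feathering at the simulated level you cannot close condition~\eqref{i:annotated.num-islands} of Definition~\ref{def:annotation}, and that condition is exactly what the error-correcting decoding (at most two stains per colony interior) and hence your decodability claim rest on. A similar, if less critical, omission: condition~\eqref{i:annotated.turned} (weak turn safety) needs the separate accounting of new \( \Pass\ne 0 \) segments of length \( \le 3\Delta \) created by healing and of new big-turn footprints, not just Lemma~\ref{lem:safe-for-turns}.
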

\begin{Proof}
  A space-time point is a \df{distress event} if either a
  fault occurs there or the head steps onto an island.
The extension of annotation is straightforward as long as no distress event is encountered.
If after a distress event the head becomes free (according to Definition~\ref{def:annotation}),
then we will say that \df{relief} occurred.
Let us see what can occur between a distress and relief event.

\begin{step+}{step:heal.disorder-leave}
  Every time that the head enters disorder, it will leave within time
  \(  21(\beta/\gamma)\escno\passno\Tu \).
\end{step+}
\begin{pproof}
  The disorder is covered by 3 intervals of size \( \beta' \),
  so we can set in Lemma~\ref{lem:create-holes}, using~\eqref{eq:stain} and~\eqref{eq:beta-lb}:
\begin{align*}
  n= 3(\beta'/\gamma' + 2) 
  \le 10\beta/\gamma.
\end{align*}
According to the lemma, the head will leave within time
\( (2 n\passno + 1)\escno\Tu \).
\end{pproof} 

\begin{step+}{step:heal.disorder-visits}
  The head can leave disorder in any one island at most \( \beta'+1 \) times.
  Hence in any interval of size \( \Q'\B \), if there are \( \le 3 \) islands
  then the head can leave the disorder at most \( 3(\beta'+1)\le 4\beta' \) times.
\end{step+}
\begin{pproof}
The size of the smallest interval covering disorder in any island is at most 
\begin{align*}
 d=\beta'\B .
\end{align*}
By the Attack Cleaning property, every time the head leaves it on a side of the disorder
interval \( D \) where it had entered before, it shrinks \( D \) by \( \B \).
So every such visit does this, except possibly the one time
when it exited without entering before.
\end{pproof}

\begin{step+}{step:heal.island-size}
  The size of stains will never grow beyond  \( \CStain\B \).
Also the total change of the front and the \( \Age \) variable between distress and
relief events connected with one island is not more than \( 3\Delta \).
\end{step+}
\begin{pproof}
  The disorder created by a burst can always be covered by an island identical to it.
   We can always define islands to end at illegal boundaries.  
  Suppose that the boundary cell-pair is clean.
  If the head entered it in normal mode it would start healing.
  If it entered as part of a healing procedure started in the island,
  then this procedure would not change the boundary in the direction
  of increasing the island.
  Rebuilding would only start in the island by a disorder.
These rebuilding cells are counted as part of the island, so by this bound on their number,
a rebuilding base will not be created, and
rebuilding will not be started by the healing procedure.
  
  So an island can increase only as a consequence of the head leaving disorder.
  It increases by at most \( \B \) in each such stage, and by part~\ref{step:heal.disorder-visits}
  there are at most \( \beta'+1 \) such stages.
  We started from an island size \( \le\beta'\B \), therefore an island never grows larger than
  \( (2\beta'+1)\B \).


Since between distress and relief the \( \Age \) and address of the front would only change
at the boundaries of the substantial domains, the estimate on the size of change on them follows.
\end{pproof}

\begin{step+}{step:heal.stages}
  Between two visits to disorder, if the head does not become free, it can spend
  only \( O(\beta\Z) \) steps.
\end{step+} 
\begin{pproof}
  Indeed, there can only be \( O(\Delta)=O(\beta) \) steps involving a (Survey) followed by a (Stitch).
  Each (Move) steps takes one closer to the D-zone, and at the start of healing, the D-zone
  was within a distance of \( (3/4)\Z \) cells.
  The number of calls to (Adjust) is at most \( 4\Delta \), each is a sweep of \( \le 4\Z \).  
  The same counting holds for (Fail).
\end{pproof}
  
\begin{step+}{step:heal.total-time}
  The total time between distress and relief is \( \le 64(\beta^{2}/\gamma)\passno\escno\Tu \).      
  Outside distress, the head is at the front in normal mode.
 \end{step+}
 \begin{pproof}
   Initially there are no islands.
By the Spill Bound property, the disorder 
can grow to at most \( 3\beta'\B \).

By~\ref{step:heal.disorder-visits}, between a distress and relief,
there are at most \( \beta'+1 \) visits to disorder.

By~\ref{step:heal.disorder-leave}, every time the head enters disorder,
it leaves within time \(  21(\beta/\gamma)\passno\escno\Tu \).

By~\ref{step:heal.stages} 
the time between two such visits to the same disorder is \( O(\beta\Z\Tu) \).
So the total time between visits is \( O(\beta^{2}\Z\Tu) \), which is
dominated by our bound
\(  3\beta'\cdot 21(\beta/\gamma)\passno\escno\Tu \) on the total time
the head needs to leave disorder.
\end{pproof}

\begin{step+}{step:heal.transition.conditional}
  By Lemma~\ref{lem:transition}
  if a rectangle \( R' \) was annotated until time \( t \) when the head is at the beginning
  of a work period of some colony-pair and then the annotation is extended to the
  end of the work period, the resulting configuration
  will be according to the transition function of the simulated machine \( M^{*} \).
  After the decoding and the computation, the earlier stains get eliminated;
  new stains only arise in new islands, so they remain bounded again.
\end{step+}




\begin{step+}{step:heal.num-islands}
  Condition~\eqref{i:annotated.num-islands} on the number of islands is met.
\end{step+}
\begin{pproof}
  A crucial observation is that in the normal course of simulation, if the head encounters an
  island and must pass over it then the island will be eliminated,
  since repeated zigging will notice it again and again.
  
  We started with a clean configuration.
  A burst can leave an island \( I_{1} \) in a colony \( C \), if it happens in the last
  sweep of the work period.
  Much later the head may return, say from the left.
  Then it may not pass over \( I_{1} \) only if \( I_{1} \) was in the right turn region of \( C \)
  and after a work period it the head moves left again.
  In the last sweep it may leave another island \( I_{2} \), and this is the only
  way for two islands to arise.
  But in this case the cell simulated by \( C \) has \( \Pass=1 \).
  So when returning a third time, 
  the simulated head will continue right and will eliminate necessarily both islands.
  Before doing this it may create a new island, this way temporarily increasing
  the number of islands to three; however, after it leaves \( C \) there will be at most
  one island left.
\end{pproof}

\begin{step+}{step:heal.turned}
  The extended annotation satisfies property~\eqref{i:annotated.turned} of annotation.
\end{step+}
\begin{pproof}
  An interval of \( \Pass\ne 0 \) can be created by the healing procedure as it makes several sweeps
  needed for its stitching operations.
  Given the bound \( \Delta \) on the size of ambiguous areas, only \( \Delta \) stitches should be needed,
  so \( 3\Delta \) is a generous upper bound on the size of an interval of cells where new turns were made.
  Once an island is eliminated and its place is passed over, these traces of small turns are also erased,
  leaving only the ones at the bottom of zigs.
  But zigs are made only after every two steps of progress,
  so these new \( \Pass \) signs will (almost) never be consecutive.
  By the restriction on the player Range Extender, each extension starts with a configuration that
  is safe for turns, so has a
  bound of \( 3\Delta \) on the number of
  consecutive cells with \( \Pass\ne 0 \); the additional \( 3\Delta \) may increase this bound temporarily to
  \( 6\Delta \), but any new islands will give rise to at most \( 3\Delta \) consecutive cells with \( \Pass\ne 0 \).

  A similar argument applies to footprints of a big turn.
  The number of big turns in a work period is less than \( \Q^{2} \), so by the argument seen
  in Section~\ref{sec:feathering}, they give rise to no more than \( 2\log\Q \) new
  consecutive footprints of a big turn, say in the right turn region of a colony.
  Safety for turns imposed on the Range Extender allows \( 3\log\Q \),
  but even when counting \( 2\Q^{2} \) new ones in
  two new work periods, the bound \( 6\log\Q \) of~\eqref{i:annotated.turned} is satisfied.
  After two work periods, these footprints will necessarily be erased
  due to feathering in the simulated machine \( M^{*} \). 
\end{pproof} 
\end{Proof}

\section{Cleaning}\label{sec:cleaning}

This section will scale up the Spill Bound, Escape, Attack Cleaning
and Pass Cleaning properties of trajectories, proving them for the history \( (\eta^{*},\Noise^{*}) \)
decoded from a trajectory \( (\eta,\Noise) \).

\subsection{Escape}\label{sec:escape}

We will scale up the Escape property in Lemma~\ref{lem:escape} below;
here is an outline of the argument.
Consider some fault-free path during a time interval \( J \) (later we will allow a single
burst) over some space interval \( G \) of size \( |G|=\gamma\Q\B \).
For the times \( t\in J \), let \( K(t) \) denote the set of those clean points in \( G \) that
the head passed at least once since they were clean.
Then by Lemma~\ref{lem:safe-for-turns},
this set consists of intervals that are safe for small turns.
The goal is to show that the path will not stay too long in \( G \).
This will be since if it stays long then it enlarges \( K(t) \), and then builds up colonies in it.
These simulate the machine \( M^{*} \), which commands its head to swing wide according
to the program (in zigging or healing), and thus leave \( G \).
Initially, the clean intervals of \( K(t) \) can be created using the 
Pass Cleaning property of trajectories.
Every time the head leaves such an interval, the latter grows via the Attack Cleaning property;
so we will mainly be concerned with longer stays.
The notion of ``long'' will be chosen here to make sure that most of it has to be spent in
simulating \( M^{*} \), since both healing and rebuilding finish relatively fast.

\begin{definition}\label{def:directed-colony-pair}
  In the work period of a colony-pair, let us call the phase during which
  the \( \Info \) track and the \( \Drift \) track is updated, the \df{update phase}.
  We will say that the pair is \df{right-directed} if the following holds.

  If the age of all cells is before the update phase, and 
  the colony-pair represents a pair of cells \( a,b \) of the machine \( M^{*} \), then the
  transition function of \( M^{*} \) applied to \( a,b \) will direct the head right.

  If the age is after the update phase then the \( \Drift \) track contains 1.

  Left-directedness is defined similarly.
\end{definition}



Time intervals of length \( \Tu \) we may consider as \df{steps}, since under clean and
noiseless conditions,
the machine \( M \) will perform at least one step of computation during each.
Recall  \( \U_{k} = \U =c_{\U}\Q\passno^{9} \) in Definition~\ref{def:hier-params}.
This is an upper bound on the number of computation steps in one work period, even
allowing some calls for healing.
The lemmas below
follow the development of a maximal interval \( I(t) \) of \( K(t) \).
We will need some new, temporary concepts.
Recall the notion of a \df{restart event} from Section~\ref{sec:rebuilding}, when the leftward
marking process of rebuilding encounters a right-directed frontier and this causes
the rebuilding to restart.
In the present context, we will consider such restart events within \( K(t) \) only if
it was preceded by leftward marking that covered an area of at least \( \Z \) cells.

\begin{lemma}\label{lem:escape.restarts}
  The \( \ge\Z/4 \) cells involved in a restart event (or their descendants if they are shifted)
  will not be involved in any other restart event later.
\end{lemma}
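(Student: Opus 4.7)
The plan is to track the status of the cells involved in the restart and show they cannot later reconstitute a rightward marking frontier of the type that triggers a second restart. Recall that a restart event occurs when a leftward-marking rebuilding process detects, via its zigs, at least \( \Z/4 \) consecutive cells belonging to a rightward marking frontier --- either of another rebuilding process, or of a normal-mode frontier zone --- and then moves back to that position to initiate a new rebuilding centered there. The first step is to observe that as the new rebuilding's marking phase immediately commences, the involved cells are overwritten with rebuild marks (\( \Rebuild.\Half \), \( \Rebuild.\Addr \), \( \Rebuild.\Sweep \)) consistent with their new role as cells at or near the \emph{center} of the fresh rebuilding, no longer as a rightward frontier.

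For the lemma to fail, the same physical cells (or their descendants under the cell-insertion shifts from bridging or replacement) would have to reappear as \( \ge\Z/4 \) consecutive cells of a rightward marking frontier that is detected by a later leftward-marking process. I would rule this out by considering the lifecycle of the newly initiated rebuilding. If it completes, the Mop phase clears the rebuild marks in the area around its center, and any subsequent rightward frontier structure there must arise from a fresh sequence of operations (either a new rebuilding, or a normal-mode work period with its own frontier zone) whose frontier cells are determined by those new operations and not by the positional identity of the original cells. If the new rebuilding is itself later overridden by yet another restart, then the cells involved in \emph{that} restart are by definition the rightward frontier cells of the new rebuilding, which by the construction in Section~\ref{sec:rebuilding} are spatially separated from the restart center --- hence from the original \( \Z/4 \) cells --- by at least an \( \Omega(\Q) \) distance.

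The main obstacle will be making the notion of ``descendants'' precise under the cell-insertion dynamics of bridging and replacement, and showing across all possible interleavings that the original \( \Z/4 \) cells cannot regroup. I would handle this by exploiting two structural features already established: the bounded size (\( 8\Q \)) of each rebuilding area around its center, which localizes each restart and lets one track the original cells' positions up to a bounded shift; and the precedence of rightward over leftward marking stated in Remarks~\ref{rem:rebuild-precedence}, which forbids a fresh leftward marking from erasing the rightward progress of the rebuilding initiated at the restart. Together these imply that by the time any configuration near the original cells could again look like a rightward marking frontier to a leftward-marking process, the cells have either been consumed into a completed colony structure or been shifted into positions away from the frontier of the new rebuilding, so that the \( \Z/4 \) cells triggering the second restart must be a disjoint set.
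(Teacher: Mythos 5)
Your proof has a genuine gap: it never addresses the one mechanism by which the original cells could be re-involved in a restart, namely a rightward frontier being \emph{abandoned} at their location at some later time. A restart event is triggered when a leftward marking discovers, sitting on the tape, \( \ge\Z/4 \) cells of a rightward frontier; so to re-involve the original cells, some later process must deposit and then leave behind a rightward frontier on top of them. Your two cases (the new rebuilding completes and Mop clears its marks; the new rebuilding is overridden by a further restart whose frontier is ``\( \Omega(\Q) \) away'') do not exhaust the possibilities and the second is unjustified --- the new rebuilding's rightward marking \emph{starts at the restart center}, i.e.\ essentially on the original cells, so there is no a priori spatial separation; moreover the head may simply exit the interval or be diverted mid-marking, and you give no reason why the frontier cannot end up stranded back on those cells then or in any subsequent rightward sweep over that spot.

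The paper closes exactly this hole with two observations you are missing. First, a frontier zone travels with the head and can only be left behind through \emph{small turn starvation} (big turns in rebuilding carry the frontier along). Second --- and this is the hypothesis your proof never uses --- a restart event is, in this context, by definition preceded by a leftward marking covering \( \ge\Z \) cells, i.e.\ the head has just passed right-to-left over the area; by Lemma~\ref{lem:safe-for-turns} this makes the area safe for small turns, and in the absence of bursts it stays so. Hence small turn starvation, and with it frontier abandonment, can never occur at those cells again, so no future leftward marking can find a rightward frontier there. Without this turn-safety argument your proof cannot yield the conclusion needed for Corollary~\ref{crl:escape.restarts}.
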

\begin{proof}
  The restart event overwrites the frontier cells by the new frontier of rebuilding.
  Any new frontier created later can be left behind only due to small turn starvation.
  (For big turns, the rebuilding process does not leave behind the frontier zone.)  
  But when the head arrived from the right over an area of at least \( \Z \) cells, this area
  became safe for turns, and in the absence of bursts, it will remain so.
\end{proof}
\begin{corollary}\label{crl:escape.restarts}
  The total number of restart events is at most \( 4\gamma\Q/\Z \).
\end{corollary}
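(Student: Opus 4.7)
The plan is to derive the corollary directly from Lemma~\ref{lem:escape.restarts} by a simple volume argument inside $G$. For each restart event $e_i$ (listed in chronological order, $i=1,\dots,n$), let $S_i$ denote the set of at least $\Z/4$ cells comprising the overwritten leftward frontier at time $t_i$; by construction, every cell of $S_i$ lies in $K(t_i)\subseteq G$. Lemma~\ref{lem:escape.restarts} tells us that if we trace the cells of $S_i$ forward through any cell replacements or bridge-induced shifts, none of their descendants ever appears in any later $S_j$ with $j>i$.

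Next, I would freeze attention at the moment $t_n$ of the last restart event. Applying the lemma to each ordered pair $i<j\le n$ shows that the images of $S_1,\dots,S_{n-1}$ at time $t_n$ together with $S_n$ itself form $n$ pairwise disjoint sets of cells of $M$, each of size at least $\Z/4$. Each cell body occupies $\B$ tape positions, and all these bodies are packed inside (a bounded widening of) $G$, since the head is confined to $G$ throughout the considered time window, which limits by $O(\B)$ how far any cell's descendant can have drifted. Counting cell bodies gives $n\cdot(\Z/4)\cdot\B \le \gamma\Q\B$, that is, $n\le 4\gamma\Q/\Z$, as required.

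The only delicate point I anticipate is making precise the notion of ``descendant'' across replacement operations, and verifying that the drift of such descendants out of $G$ is bounded by $O(\B)$ rather than something larger; but this is immediate from the fact that replacements as in Definition~\ref{def:dictated} produce the replacing cell overlapping the replaced cell's body, and bridge-building shifts cells by at most a bounded multiple of $\B$. These $O(\B)$ boundary effects are absorbed into the constants and do not affect the final bound $4\gamma\Q/\Z$.
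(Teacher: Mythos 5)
Your proposal is correct and is essentially the intended argument: the paper states the corollary without proof precisely because it follows from Lemma~\ref{lem:escape.restarts} by the counting you give, namely that the pairwise-disjoint sets of at least \( \Z/4 \) cells, each cell of body size \( \B \), must all fit inside the interval \( G \) of size \( \gamma\Q\B \), forcing \( n\le 4\gamma\Q/\Z \). Your extra care about tracking descendants through replacements is a reasonable way to make the disjointness precise and does not change the bound.
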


\begin{lemma}\label{lem:escape.inside-hole}
 \begin{alphenum}
 \item\label{i:escape.no-spill} No maximal subinterval of \( K(t) \) ever
   decreases by more than \( \CSpill\B \) on either side. 
\item\label{i:escape.long-stay}
  In any subinterval of  \( K(t) \) of size \( \le n\Q\B \),
  the amount of time the head can spend is at most the time spent on
  rebuildings (possibly interrupted but only if done so by restarts), plus \( 2 n\U\Tu \).
  \end{alphenum}
\end{lemma}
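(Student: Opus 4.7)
The plan is to handle the two parts separately. For part~(\ref{i:escape.no-spill}) I would argue directly from the Spill Bound in Definition~\ref{def:traj}. By the definition of $K(t)$ each maximal subinterval is a contiguous clean interval of $\eta$, and since we stand inside a noise-free rectangle in $\Rg'$, the only way an endpoint of such a subinterval can retreat is through disorder encroaching from the adjacent, non-clean region; the Spill Bound caps that retreat at $\CSpill\B$ on each side, which is exactly the assertion.

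For part~(\ref{i:escape.long-stay}) I would split the head's sojourn in the given subinterval $I$ into rebuilding time (which the statement already charges separately) and non-rebuilding time (normal simulation plus healing). Because $I$ lies in $K(t)$, Lemma~\ref{lem:safe-for-turns} gives safety for small turns throughout, so no turn starvation occurs and any healing call terminates within $O(\beta\Z\Tu)$ steps, absorbed into the per-work-period budget (compare Step~\ref{step:heal.stages} in the proof of Lemma~\ref{lem:healing}). Each normal-mode work period takes at most $\U\Tu$ steps by the bound of Section~\ref{sec:length-work-period} and advances the simulated head of $M^{*}$ by exactly one step. So the task reduces to bounding by $2n$ the number of $M^{*}$-steps taken while the simulated head stays in the corresponding interval of at most $n$ cells of $M^{*}$.

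This last bound is the main obstacle, and I expect it to come from the feathering property inherited by $M^{*}$ (see the discussion at the end of Section~\ref{sec:feathering} and Lemma~\ref{lem:feathering-lb}): consecutive same-direction turns of the simulated head must be strictly further out, so once the simulated head has swept $I$ once in each direction it must commit to leaving rather than continuing to oscillate inside. Combined with $|I|\le n\Q\B = n\B^{*}$, this yields the required count of at most $2n$ simulated steps, and hence at most $2n\U\Tu$ units of time in non-rebuilding mode. A secondary technical point is the interaction with the transfer phase and with rebuild restart events internal to $I$: transfer excursions into the adjacent turn region stay within the work-period budget $\U\Tu$ and do not inflate the $M^{*}$-step count, while restart events are bounded by Lemma~\ref{lem:escape.restarts} and Corollary~\ref{crl:escape.restarts}, and the partial rebuildings they leave behind are absorbed into the rebuilding-time term that the statement allows.
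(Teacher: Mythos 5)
Part~(a) is fine and matches the paper, which also derives it directly from the Spill Bound property.

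Part~(b) has a genuine gap at its central step. You reduce the claim to showing that the simulated head of \( M^{*} \) takes at most \( 2n \) steps while confined to an interval of \( n \) cells of \( M^{*} \), and you justify this by the feathering property inherited by \( M^{*} \), asserting that after one sweep in each direction the simulated head ``must commit to leaving.'' Feathering gives nothing of the sort: it only constrains consecutive same-direction turns at the \emph{same or larger} position, and Example~\ref{xpl:feathering} together with Lemma~\ref{lem:feathering-lb} shows that a feathering machine can make on the order of \( 2^{n} \) turns inside an interval of \( n \) cells before being forced out. So your bound of \( 2n \) simulated steps does not follow, and the whole count collapses. The mechanism the paper actually uses is different: the \emph{program} of \( M^{*} \) (like that of \( M \)) proceeds by sweeps --- its own zigzags, healing surveys, transfer sweeps --- and even the shortest of these has length \( 2\beta\Q\B > \gamma\Q\B \ge n\Q\B \), so a single sweep of \( M^{*} \), costing at most one work period \( \U\Tu \) per cell crossed, already carries the head out of the subinterval within \( n\U\Tu \) steps. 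The exit is forced by the width of \( M^{*} \)'s sweeps relative to the interval, not by feathering.

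A secondary but real accounting error: the statement only allows you to charge separately rebuildings interrupted \emph{by restarts}. Rebuildings that succeed, and rebuildings interrupted by big turn starvation, must be paid for out of the \( 2n\U\Tu \) budget. You fold ``the partial rebuildings they leave behind'' into the rebuilding-time term, which the statement does not permit for starvation-interrupted ones. The paper handles these explicitly: a successful rebuilding produces a colony-pair that immediately launches a sweep of size \( >\beta\Q\B \) (hence exits), and rebuildings aborted by big turn starvation can occur at most \( \gamma n/4 \) times because each pushes the next rebuilding center at least \( 4\Q\B \) away.
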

\begin{proof}
  \begin{enumerate}
  \item 
  \eqref{i:escape.no-spill} follows from the No Spill property of trajectories.
\item\label{i:escape.inside-hole.long-stay} On~\eqref{i:escape.long-stay}:  
  While no rebuilding starts, we can apply Lemma~\ref{lem:combined-heals}.
  Thus, eventually the computation in normal (or, similarly, booting)
  mode leads to the simulation of cells in \( M^{*} \).
  The program of \( M^{*} \), just like that of \( M \), proceeds by sweeps (zigging or healing).
  Even the shortest of these sweeps
  has size at least \( 2\beta\Q\B >\gamma\Q\B \), therefore a full sweep
  will exit \( I(t) \) in \( \le n\U\Tu \) steps.

  If any rebuilding has been started, it will finish in \( O(\Q\Z) \) steps unless
  interrupted by a restart event (see above) or big turn starvation, see Section~\ref{sec:feathering}.
  If it succeeds, it creates a colony-pair that simulates a cell-pair at the start of
  healing, hence starting a sweep of size \( >\beta\Q\B \) to the right.
  It may trigger rebuilding again, but this rebuilding
  results in a new colony-pair of the same kind, in \( O(\Q\Z) \) steps, at the place
  where it started, at least \( \Q\B \) to the right of the last working colony-pair.
  So exit happens again in \( \le n\U\Tu \) steps.

  Consider now rebuildings that don't succeed.
  We are counting the time spent on rebuilding interrupted by a restart separately, so
  consider the ones interrupted by big turn starvation.
  These can occur at most \( \gamma n/4 \) times,
  since they occur only after the head moved to a distance \( \ge 4\Q\B \) from
  the rebuilding center---removing the center of a new rebuilding at least this far
  from the previous one.
    \end{enumerate}
  \end{proof}

 An interval \( I \) of size \( \Q\B \) in \( \Int(K(t), 2\B) \) will be called a \df{manifest colony} if
it is healthy with the possible exception of having some rebuild marks (only for survey, not
for decision), and has undergone a complete simulation work period as part of a colony-pair in a
clean subinterval of \( K(t) \).
In a manifest colony, 
unless it is at a distance \( \le 2\Q\B \) from the head in the same interval of \( K(t) \),
the \( \Drift \) track points towards the head.

\begin{lemma}\label{lem:escape.non-decr}
  The number of manifest colonies does not decrease.
\end{lemma}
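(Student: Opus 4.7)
My plan is to show that any colony counted as manifest at time $t_0$ remains manifest at every later time $t > t_0$ in the noise-free interval under consideration; monotonicity of the count then follows. So I fix such $C$ at time $t_0$ with defining clean subinterval $J_0 \subseteq K(t_0)$ and $C \subseteq \Int(J_0, 2\B)$, and verify in turn that each defining clause of ``manifest'' is preserved.

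For the containment $C \subseteq \Int(K(t), 2\B)$: by Lemma~\ref{lem:escape.inside-hole}(a) a maximal subinterval of $K$ can shrink by at most $\CSpill\B$ on either side, and only where the head interacts with existing disorder. In the absence of noise, no new islands arise (Lemma~\ref{lem:no-new-island}), so the disorder bounding $J_0$ is only eroded via Attack Cleaning and never regenerated; any shrinkage is confined to the boundary of $J_0$, which lies at distance $\ge 2\B$ from $C$.

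For health (up to survey rebuild marks): since no new islands appear, any change inside $C$ must come from the head's own program. Entering $C$ in normal mode, the head finds no illegal boundaries and triggers neither healing nor rebuilding inside $C$; a new work period re-engaging $C$ as a colony-pair member preserves health throughout its phases and refreshes $C$'s manifest status at its completion. Healing or rebuilding launched from outside cannot penetrate the healthy interior of $C$ once its clean neighborhood is established.

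For the Drift condition: the Drift track in $C$ points toward the head at time $t_0$ and is not rewritten while the head stays farther than $2\Q\B$ from $C$. If the head enters the $2\Q\B$-neighborhood of $C$ and later exits on the opposite side, it must have crossed $C$; sweep lengths in the simulated machine $M^{*}$ (at least $\beta \Q \B$, which dominates $2\Q\B$) force any such crossing to coincide with at least one complete work period involving $C$, which updates its Drift to point toward the new position of the head. The main technical obstacle I expect is precisely here: one must rigorously rule out pathological motions in which $C$ is crossed an even number of times within the exempt zone without any completed work period writing its Drift correctly. Resolving this will rely on the feathering and zigging discipline of $M$, together with the minimum sweep length of $M^{*}$, to guarantee that every reversal of the head's global direction past $C$ comes with a committed Drift update inside $C$.
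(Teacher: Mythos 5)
There is a genuine gap, and it sits exactly where you wave it away. The sentence ``Healing or rebuilding launched from outside cannot penetrate the healthy interior of \( C \) once its clean neighborhood is established'' is the whole difficulty, and it is false as stated: the rebuilding procedure marks and surveys an area of roughly \( 8\Q \) cells around its center, and its Survey and Create stage is explicitly destructive---it overwrites ambiguous areas with stem cells and creates fresh colonies. A manifest colony can certainly lie inside a rebuilding area. What protects it is not impenetrability but the \emph{design} of the Survey and Create stage: destruction requires two independent, agreeing marking passes (the tracks \( R_{1},R_{2} \)), and marked whole valid colonies are preserved by construction (goal~\eqref{i:rebuild.keep-healthy}). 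The remaining danger, which the paper's proof singles out as the only real case, is a rebuilding process sitting at an \emph{end} of the clean region \( K(t) \), where the marking sweeps can be fed uncontrollable information from disorder. The argument there is that the head can leave \( K(t) \) only during the marking stage, so by the time the decision (destructive) stage is reached, zigging plus Attack Cleaning force the entire rebuilding interval to have been incorporated into \( K(t) \); hence the decision is made on reliable data and no manifest colony is destroyed. None of this appears in your proposal.

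Two smaller points. First, you place ``the main technical obstacle'' in the Drift clause, but that clause is a stated consequence of the definition, not the content of the lemma; the lemma is purely about the count not decreasing, and the paper does not argue Drift in this proof at all. Second, your opening claim that a manifest colony ``remains manifest'' is slightly too strong: a replacement situation in the transfer phase can \emph{shift} a colony, so the right invariant is that each manifest colony persists possibly in shifted form (the count is what is monotone), which is how the paper phrases it.
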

\begin{proof}
  Simulation or healing does not destroy any part of a colony.
  It may shift a colony, if the simulation work period of a colony-pair encounters a
  replacement situation, see Section~\ref{sec:transfer}.
  Let us see that rebuilding does not destroy them either.

  In the definition of manifest colonies we did allow some (possible leftover)
  rebuild survey marks, but not decision marks.
  In order to destroy a colony, the rebuilding process needs to create two decision tracks.
  One has to consider also the case when the rebuilding process is at one end of \( I(t) \),
  hence is fed some uncontrollable information.

  The head can exit during a rebuilding process only if it is in its starting
  stage, marking its interval of operation.
  Zigging along with attack cleaning
  implies that even if the head enters and exits \( K(t) \) multiply,
  by decision time the whole
  rebuilding interval will have to be incorporated into \( K(t) \), therefore the decision will
  be a correct one, not destroying a manifest colony.
\end{proof}

    Suppose that during a stay in \( K(t) \) a rebuilding process is started and completed.
  Its result is a pair of neighbor manifest colonies, on the left and right of the
  center from which rebuilding started.
  Let us call these the \df{left result} and \df{right result} of rebuilding.

  \begin{lemma}\label{lem:escape.results}
  A manifest colony can only become a left result once, and a right result once.
\end{lemma}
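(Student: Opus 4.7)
The plan is to argue the left-result case by contradiction; the right-result case is symmetric (a manifest colony being a right result of $R$ means it is the right neighbor of $R$'s $C_0$, so the uniqueness there reduces to the uniqueness of the left result to its left). Suppose $C$ were the left result of two distinct completed rebuildings $R$ and $R'$ at times $t<t'$, with centers $z$ and $z'$. The strategy is first to extract a tight positional bound $|z-z'|=O(\Q\B)$ from the definition of ``left result,'' and then to show that, after $R$ completes, nothing in the fault-free regime can trigger a rebuild centered in this narrow window.

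First I would pin down the geometry and the state left by $R$. The left result of $R$ is the first colony $C_0$ going leftward from $z$ with at least half of its body on the left of $z$; this forces the right end of $C$ to lie in $[z-\Q\B/2,\,z+\Q\B/2]$, and the same applies to $z'$, giving $|z-z'|\le\Q\B$. Immediately after $R$ completes, $C$ together with its right neighbor $D$ (the right result of $R$) is made the new current colony-pair, the interval containing them and the mopped rebuild area is healthy, and the head then begins simulating a rightward healing of $M^*$. By Lemma~\ref{lem:escape.non-decr} manifest colonies are preserved by subsequent simulation and rebuilding, so $C$ and $D$ are still in place when $R'$ begins, and in a fault-free stay in $K(t)$ no fresh disorder can arise between them.

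The main obstacle is to argue that $R'$ cannot in fact be triggered with a center within $\Q\B$ of $z$. In normal mode the only ways to start a rebuild are from a rebuild base accumulated by repeated healing failures, or from big turn starvation. Near the healthy pair $(C,D)$ neither occurs: healing there does not fail, because the configuration is healthy and stable under $M$'s simulation of $M^*$'s healing; and any big turns in this region enjoy the full turn regions of the healthy colonies. Hence any completed $R'$ later in the same stay must be triggered from outside the healthy window around $z$, placing $z'$ outside $[z-\Q\B,\,z+\Q\B]$ and contradicting the positional bound. Extra care will be needed when $R'$ is triggered close to an end of $K(t)$, where uncontrolled boundary information might feed the rebuild Survey stage; there I would invoke the ``two matching tracks'' safeguard used in the proof of Lemma~\ref{lem:escape.non-decr} to rule out a rebuild base forming spuriously next to $C$.
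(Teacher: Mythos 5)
Your proposal is correct and follows essentially the same route as the paper: after the first rebuilding completes, the interval spanned by the result pair is healthy (and safe for turns), so the only remaining trigger for a rebuild in that region --- big turn starvation --- cannot place a new rebuild center close enough for the same colony to become a left (or right) result again. Your explicit positional bound \( |z-z'|\le\Q\B \) and the contradiction framing are just a repackaging of the paper's observation that a turn-seeking process entering the turn-safe interval \( D \) finds its turning point near an end of \( D \).
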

\begin{proof}
  The smallest interval \( D \) containing the resulting colony-pair will be healthy and safe
  for turns at the time when rebuilding finishes.
  The following development will never introduce inconsistency into \( D \),
  other than rebuild marks resulting from some rebuilding process started outside it.

  The only way in which a rebuilding process can start even in a healthy area is
  big turn starvation, as defined in Section~\ref{sec:feathering}.
  But in the present case, the rebuild process looking for a big turn must have started
  looking for a turn outside \( D \), and since \( D \) is safe for turns, it would have
  found a turning point close to an end of \( D \), so the left colony of \( D \) could not become
  its left result, nor the right colony of \( D \) its right result.
\end{proof}

  We will say that the stay of the head in some maximal interval of \( K(t) \) is \df{short}
  if the part of it remaining \emph{after subtracting the time
  spent on rebuild procedures interrupted by a restart}
is smaller than for \( 2\U\Tu \) for \( \U \) as in Definition~\ref{def:hier-params},
otherwise it is \df{long}.

\begin{lemma}\label{lem:escape.long-stays}
Each long stay either adds a new manifest colony, or
  joins two subintervals of \( K(t) \) of size \( \ge\Q\B \), or creates 
  a new left result and a new right result as defined in Lemma~\ref{lem:escape.results}.
\end{lemma}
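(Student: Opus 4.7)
The plan is to argue by contradiction: assume a long stay occurs in which none of the three listed outcomes happens, and derive that the productive time (total time in $I(t)$ minus time on rebuildings interrupted by restart) is strictly less than $2\U\Tu$, contradicting the definition of a long stay.

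Let $I(t)$ denote the maximal subinterval of $K(t)$ containing the head throughout the stay. By Lemma~\ref{lem:escape.inside-hole}\eqref{i:escape.no-spill}, $I(t)$ never shrinks by more than $\CSpill\B$ on either end, and since no joining of $K(t)$-subintervals of size $\ge \Q\B$ is allowed (second outcome excluded), $I(t) \subseteq G$ stays within size $n\Q\B$ for some $n \le \gamma$. I partition the productive time into three bins: (i) completed rebuildings, (ii) rebuildings interrupted by big turn starvation, (iii) everything else---normal simulation work periods, healing, and zigging. Outcome 3 is excluded, so no rebuilding completes; hence bin (i) is empty and every rebuilding started during the stay is either restart-interrupted (already subtracted) or big-turn-starvation-interrupted.

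For bin (ii), I reuse the counting from the proof of Lemma~\ref{lem:escape.inside-hole}\eqref{i:escape.long-stay}: big-turn-starvation rebuildings occur at most $\gamma n / 4$ times and each takes $O(\Q\Z)\Tu$ steps, so bin (ii) contributes $O(\gamma^{2} \Q\Z\Tu)$, which is negligible compared to $\U\Tu = c_{\U}\Q\passno^{9}\Tu$. For bin (iii), Lemma~\ref{lem:combined-heals} applies to the clean noise-free rectangle containing $I(t)$, so after any interruption the head returns to healthy simulation within $O(\beta\Z)$ steps. A completed simulation work period on a colony-pair in $I(t)$ then has only two possibilities: either the head is directed into a colony-pair not yet manifest---in which case the next completed work period produces a new manifest colony (outcome 1, contradiction)---or the head is directed across the boundary of $I(t)$, where the Attack Cleaning property of the underlying trajectory extends $K(t)$ outward, eventually either forming a new manifest colony on the boundary (outcome 1) or joining $I(t)$ with an adjacent size-$\ge\Q\B$ subinterval (outcome 2, also excluded).

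The delicate case, which I expect to be the main obstacle, is the scenario in which the head oscillates indefinitely inside $I(t)$ among already-manifest colonies without triggering any of these three events. Ruling this out requires noting that each work period of $M$ on an existing colony-pair takes at most $\U\Tu$ steps and advances the simulated $M^{*}$-head by exactly one step; since the self-simulation gives $M^{*}$ the same zigging/feathering discipline as $M$, within a bounded number of such simulated steps (well below the budget of two work periods) the $M^{*}$-head must either visit a colony-pair that has not yet hosted a completed work period, or reach the boundary of $I(t)$. Combining the bounds on (i)--(iii), the total productive time is then strictly less than $2\U\Tu$, contradicting the assumption that the stay is long.
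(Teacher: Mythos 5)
Your argument is correct in substance and follows the same skeleton as the paper's, just run in the contrapositive: the paper argues directly that a stay with productive time \( \ge 2\U\Tu \) must complete either a work period that manufactures a manifest colony (or a join) or a rebuilding that manufactures a left and a right result, while you assume none of these happen and bound the productive time. Your bins (i) and (ii), and the use of Lemma~\ref{lem:escape.results} to rule out completed rebuildings, match the paper exactly. Where you genuinely diverge is the sub-case in which \( I(t) \) already contains manifest colonies: the paper disposes of it with the observation that the head can reach the leftmost manifest colony \( C_{0} \) only through a transfer from some colony \( C_{-1} \) on the entry side, and that transfer either promotes \( C_{-1} \) to a new manifest colony or joins \( I(t) \) to the neighboring subinterval of \( K(t) \) containing it. You instead rule out prolonged oscillation among already-manifest colonies by appealing to the width of \( M^{*} \)'s sweeps---the mechanism the paper reserves for Lemma~\ref{lem:escape.inside-hole}\eqref{i:escape.long-stay}. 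That route also works, but your phrase ``well below the budget of two work periods'' hides the quantitative step you actually need: a single work period costs \( O(\Q\F\Z^{2}\Tu)=O(\Q\passno^{8+4\rho}\Tu) \), smaller than \( \U\Tu=c_{\U}\Q\passno^{9}\Tu \) by a factor of order \( \passno^{1-4\rho} \), so the \( O(\gamma) \) work periods that can complete before an \( M^{*} \)-sweep of width \( \ge 2\beta\Q\B>\gamma\Q\B \) expels the head from \( G \) still total \( o(\U\Tu) \); without that comparison, ``a bounded number of simulated steps'' at up to \( \U\Tu \) apiece could exceed \( 2\U\Tu \) and the contradiction would not close. One small misstatement: when a transfer directs the head across the boundary of \( I(t) \), the stay simply ends (or \( I(t) \) is extended by the clean cells passed over); Attack Cleaning is not what is needed there, only the observation that the time accumulated up to that exit is below the long-stay threshold.
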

\begin{proof}
  Consider some maximal interval \( I(t) \) of \( K(t) \), and a long stay in it.
  \begin{enumerate}
  \item\label{i:escape.create}
  Suppose first that no rebuilding process 
  is triggered or continued during the stay; then
  only healing and computation steps are possible.
  
  Suppose that there was no manifest colony in \( I(t) \) before entry.
  The stay is long enough that at least one complete work period will be performed on
  a neighbor colony-pair.
  So by the time the head leaves, there will be at least one manifest colony; in fact the
  exit will happen during a transfer process from a manifest colony (possibly slowed down
  by healing).
  
  In general, whenever the exit happens after a long stay
  then it either happens this way or during the marking stage of
  a rebuilding process.

  Suppose there were manifest colonies at entry time, and the head enters, say, on the left.
  Let \( C_{0} \) be the leftmost manifest colony of \( I(t) \).
  The head can pass to \( C_{0} \) only as a consequence of a
  transfer process from some colony \( C_{-1} \).
  So the long stay either adds \( C_{-1} \) as a new
  manifest colony, or joins \( I(t) \) with another subinterval of \( K(t) \) on its left, which
  contains a manifest colony \( C_{-1} \).
  
\item\label{i:escape.result}
  Suppose now that a rebuilding process starts before a manifest colony could have
  been added as described above.
  This process could be restarted repeatedly,
  but we don't count the time spent on the rebuilding processes interrupted by a restart.
  Since the stay is long, one of them has to succeed; on termination,
  it creates a left result and a right result.
\end{enumerate}  
\end{proof}

\begin{corollary}\label{crl:escape.long-stays}
  The number of long stays is at most \( 3\gamma \).
\end{corollary}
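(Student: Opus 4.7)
The plan is to apply Lemma~\ref{lem:escape.long-stays} (just proved), which partitions the long stays into three classes according to what each accomplishes, and then to bound the size of each class by $\gamma$ using simple counting arguments inside the space interval $G$ of size $\gamma\Q\B$.

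First I would bound the long stays that add a new manifest colony. Each manifest colony occupies an interval of size $\Q\B$ inside $G$, and by Lemma~\ref{lem:escape.non-decr} the collection of manifest colonies is monotone non-decreasing in time. Since the manifest colonies live inside $G$ and their bodies are disjoint, their cumulative count is at most $\lfloor |G|/(\Q\B)\rfloor = \gamma$, so there are at most $\gamma$ long stays of this kind.

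Next I would bound the long stays that join two subintervals of $K(t)$ of size $\ge \Q\B$. Each such join strictly decreases the number of connected components of $K(t)\cap G$ of size $\ge \Q\B$ by one (the head, which can only cross the gap between such components, absorbs the gap into $K(t)$ by exposing the intermediate points, and gaps are never re-created since by Lemma~\ref{lem:escape.inside-hole}\eqref{i:escape.no-spill} maximal subintervals of $K(t)$ never shrink beyond the Spill Bound). Since at any moment there are at most $\gamma$ disjoint subintervals of size $\ge\Q\B$ in $G$, the number of such joins is bounded by $\gamma$.

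Finally I would bound the long stays that create a new left result and a new right result of a rebuilding. By Lemma~\ref{lem:escape.results}, any fixed manifest colony can be a left result at most once; and since, as above, the total number of manifest colonies that ever exist in $G$ is at most $\gamma$, there are at most $\gamma$ long stays of this third kind. Summing the three contributions yields at most $3\gamma$ long stays, which is the claim. The counting is routine; the only point needing a moment's care is verifying that the three cases of Lemma~\ref{lem:escape.long-stays} are an exhaustive classification (so no long stay is missed) and that the progress quantities they produce are indeed monotone in $G$—both of which are immediate from Lemmas~\ref{lem:escape.inside-hole}, \ref{lem:escape.non-decr} and \ref{lem:escape.results}.
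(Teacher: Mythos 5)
Your proof is correct and follows essentially the same route as the paper: invoke the three-way classification of Lemma~\ref{lem:escape.long-stays}, then bound each class by \( \gamma \) using the fact that the interval \( G \) of size \( \gamma\Q\B \) can hold at most \( \gamma \) disjoint manifest colonies (respectively, components of \( K(t) \) of size \( \ge\Q\B \)), together with the monotonicity from Lemma~\ref{lem:escape.non-decr} and the once-only property from Lemma~\ref{lem:escape.results}. The paper's count is marginally tighter (it gets \( 2\gamma-1 \) for the first two classes combined), but the argument is the same.
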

  \begin{proof}
  There are at most \( \gamma \) manifest colonies in \( K(t) \), so
  there can be at most \( \gamma \) creation events.
  There are at most \( \gamma-1 \) events of
  joining two disjoint subintervals of \( K(t) \) of size \( \ge\Q\B \).
  Hence the total number of long stays of kind~\ref{i:escape.create} is at most \( 2\gamma-1 \),
  and the total number of long stays of kind~\ref{i:escape.result}
  is also at most \( \gamma \).
  \end{proof}

The following lemma is the scale-up of the Escape condition.

\begin{lemma}[Escape]\label{lem:escape}
  Let  \( \escno \) be as introduced in~\eqref{eq:cns.traj}.
  In the absence of \( \Noise^{*} \), the
  head will leave any interval \( G \) of size \( \gamma\Q\B \),
  within time \( \escno^{*}\Tu^{*} \).
\end{lemma}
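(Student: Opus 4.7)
The plan is to bound the total time the head spends inside $G$ by decomposing it according to the machinery developed above, and then verify it fits under $\escno^{*}\Tu^{*}$. First, note that by Definition~\ref{def:hier-params}, $\escno^{*}\Tu^{*}=\CEsc\Q\passno\cdot\U\Tu = \CEsc c_{\U}\Q^{2}\passno^{10}\Tu$, so it suffices to show that any stay in $G$ is at most $O(\gamma^{2}\Q^{2}\passno^{10}\Tu)$, whereupon choosing $\CEsc$ large enough finishes the proof. Let $[t_{0},t_{1}]$ be a maximal time interval during which the head remains in $G$; I follow the set $K(t)$ of clean points of $G$ already passed by the head.

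I partition the time the head spends in $G$ into four contributions. Contribution (a): time spent on rebuilding processes that were subsequently interrupted by a restart event. By Corollary~\ref{crl:escape.restarts} there are at most $4\gamma\Q/\Z$ restart events, and each interrupted rebuilding can use at most $O(\Q\Z\Tu)$ time (the marking sweep cannot run longer than one full traversal of the rebuild area with its zigs), giving $O(\gamma\Q^{2}\Tu)$. Contribution (b): time spent in long stays, i.e.\ stays in maximal subintervals of $K(t)$ after the restart time is subtracted. By Corollary~\ref{crl:escape.long-stays} there are at most $3\gamma$ long stays; by Lemma~\ref{lem:escape.inside-hole}\eqref{i:escape.long-stay} applied with $n=\gamma$, each of them (after deducting its restart contribution, which is accounted for in (a)) costs at most $2\gamma\U\Tu$, giving $O(\gamma^{2}\U\Tu)=O(\gamma^{2}\Q\passno^{9}\Tu)$.

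Contribution (d): time spent in disorder (i.e.\ inside islands). Since $\Noise^{*}=\emptyset$ over the relevant space-time region, every burst of $\Noise$ in $G\times[t_{0},t_{1}]$ is $\gamma(\B^{*},\S^{*})$-isolated by Definition~\ref{def:sparsity}, and only a bounded number (in fact $O(1)$) of such bursts fit in a rectangle of these dimensions; combined with Lemma~\ref{lem:no-new-island}, the head meets only a bounded number of islands. By Step~\ref{step:heal.disorder-visits} and Step~\ref{step:heal.total-time} of Lemma~\ref{lem:healing}, each island absorbs at most $O(\beta)$ visits and at most $O((\beta^{2}/\gamma)\passno\escno\Tu)$ total time, a negligible additive term. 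Contribution (c): time spent in short stays. Each short stay has duration at most $2\U\Tu$ by definition; a short stay in a maximal subinterval $I(t)$ of $K(t)$ must end either by reaching the end of $I(t)$ and entering disorder (one per island visit), by reaching the boundary of $G$ (once), or by being followed immediately by the start of a long stay in the same $I(t)$ (at most one per long stay, counted above). Hence the number of short stays is $O(\gamma)$, contributing $O(\gamma\U\Tu)=O(\gamma\Q\passno^{9}\Tu)$.

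Summing (a)--(d) gives $O(\gamma^{2}\Q^{2}\passno^{10}\Tu)$, which is at most $\escno^{*}\Tu^{*}$ once $\CEsc$ is chosen appropriately in Definition~\ref{def:hier-params}. The main obstacle will be the bookkeeping of contribution (c): I must verify that every short stay can be charged either to an adjacent island-visit, to the single exit out of $G$, or to a subsequent long stay, which rests on the fact (from Lemma~\ref{lem:safe-for-turns} and Lemma~\ref{lem:escape.inside-hole}\eqref{i:escape.no-spill}) that the head cannot bounce out of a maximal subinterval of $K(t)$ without either entering disorder or being engaged in a simulation-scale movement across the interval. A secondary care point is that the ``noise-free'' lemmas on $K(t)$ (Lemmas~\ref{lem:escape.inside-hole}--\ref{lem:escape.long-stays}) continue to apply across isolated bursts, which is exactly guaranteed by the annotation game and Lemma~\ref{lem:healing}; once those are invoked, the remainder of the argument reduces to the counting sketched above.
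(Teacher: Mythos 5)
There is a genuine gap, and it concerns the central difficulty of this lemma. Your contributions (a) and (b) — interrupted rebuildings via Corollary~\ref{crl:escape.restarts} and long stays via Corollary~\ref{crl:escape.long-stays} and Lemma~\ref{lem:escape.inside-hole} — match the paper's accounting. But your contributions (c) and (d) rest on the premise that all disorder the head meets in \( G \) comes from the (at most one, by isolation) burst occurring during the stay, so that the head encounters only \( O(1) \) islands and makes only \( O(\gamma) \) short stays. That premise is false: \( G \) is an interval of size \( \gamma\B^{*} \) in a trajectory of \( M \), and nothing prevents it from containing large \emph{pre-existing} disordered (\( \Bad \)) areas at time \( t_{0} \) — the trajectory definition explicitly allows such areas, and the Escape property being scaled up must hold even when the head is dropped into the middle of them. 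The whole point of the lemma is to show the head cannot be trapped there.

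The paper's proof handles exactly this with Lemma~\ref{lem:create-holes}: it partitions \( G \) into \( n=2\Q \) blocks of size \( \gamma'\B=(\gamma/2)\B \), uses the \emph{level-\( k \)} Escape property to force the head to cross a block every \( \escno\Tu \) steps (``skips''), and uses the level-\( k \) Pass Cleaning and Attack Cleaning properties to bound the number of skips that still touch disorder by \( n(\passno+4(\CMarg+\CSpill)) = O(\Q\passno) \). This quantity — not \( O(\gamma) \) — is the correct bound on the number of (short or long) stays in \( K(t) \), since each stay ends with a skip touching disorder. Consequently the short-stay contribution is \( O(\Q\passno\cdot\U\Tu)=O(\Q^{2}\passno^{10}\Tu) \), which is the \emph{dominant} term of the final sum and the reason \( \escno^{*} \) must be taken as large as \( \CEsc\Q\passno \); the disorder contribution is \( O(\Q\passno\escno\Tu) \), not negligible for the reason you give but dominated for a different one. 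Your final arithmetic happens to land under the budget only because both of these terms were undercounted. To repair the proof you need the block/skip machinery (or an equivalent use of the lower-level Escape and Pass Cleaning properties) to control progress through pre-existing disorder; the charging scheme for short stays should then charge each stay to a disorder-touching skip rather than to an island visit.
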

\begin{Proof}
  Consider a time interval of length \( \escno^{*}\Tu^{*} \) that the head
  spends in \( G \) in the absence of \( \Noise^{*} \).
  Because of the absence of \( \Noise^{*} \), at most one burst can happen during it.
  If it does then we will consider the larger part \( J \) of the time interval
  before or after the burst (or the whole interval if there is none).

  Let us apply Lemma~\ref{lem:create-holes} as well as its notation
  to the current situation, with \( t_{0} \) our starting time.
  The interval \( G \) is covered by 
\begin{align*}
 n=\gamma\Q/\gamma' =2\Q  
\end{align*}
blocks of size \( \gamma' \).
Assume that our noise-free path \( P \) starts
at some time \( t_{0} \) at a point \( b_{0} \).
Then there is a sequence of times \( t_{0}<t_{1}<\dots \) 
with \( t_{i+1}-t_{i}\le\escno\Tu \), such that during the skips
\( \rint{t_{i}}{t_{i+1}} \) the head passes over some block \( H_{i} \)
either leftwards or rightwards, further
except for \( 2 n\passno \) skips, the interval
\( H'_{i}=\Int(H_{i},(\CMarg+\CSpill)\B) \) is clean.

\begin{step+}{step:escape.mixed}
  The number of skips during which the head touches disorder is at most
  \( n(\passno + 4(\CMarg+\CSpill)) \), and this is also an upper bound on the
  number of (short or long) stays in \( K(t) \).
\end{step+}
\begin{pproof}
  We already estimated the number of skips \( i \) for which \( H'_{i} \)
  is not clean.
  The remaining skips pass over a clean \( H'_{i} \), but may touch disorder before
  or after it.
  If they do this then they will have to either enter a clean \( H'_{i} \) from disorder,
  or leave it.
  By the Attack Cleaning property, each leaving skip increases by \( \ge\B \) the clean interval
  it leaves.
  it follows that disorder in \( H \) will be eliminated after \( 2(\CMarg+\CSpill) \)
  leaving skips over some block \( H \) (ignoring integer parts).
  The entering skips will have to be balanced by leaving skips,
  so the total number of skips touching disorder in \( H \) is \( \le 4(\CMarg+\CSpill) \).

  Since each (long or short) stay ends with a skip that touches disorder, this is also the bound
  on the number of stays.
\end{pproof} 

\begin{step+}{step:escape.sum}
Let us add up all the estimates, using the notation \( \passno'=\passno+4(\CMarg+\CSpill) \).
\end{step+}
\begin{prooof}
  Part~\ref {step:escape.mixed} shows that the number of skips
  that are not clean is at most   \( n\passno' \), with \( n=2\Q \),
  for a total time of \(   2\Q\escno\Tu\passno' \).
  
  Corollary~\ref{crl:escape.long-stays} bounds the number of long stays by \( 3\gamma \),
  and Lemma~\ref{lem:escape.inside-hole} bounds the length
  of each long stay except for the time spent on interrupted rebuildings 
  by \( 2 \gamma\U\Tu \), so this gives a total time at most \(  6\gamma^{2}\U\Tu \).

 Corollary~\ref{crl:escape.restarts} bounds the number of restart events by \( 4\gamma\Q/\Z \).
Each rebuilding interrupted by restart has at most \( 12\gamma\Q\Z \) steps for the first
sweeps over the rebuilding area, so \( 48\gamma^{2}\Q^{2}\Tu \) bounds
the total time spent on rebuilding procedures interrupted by restarts.

Part~\ref{step:escape.mixed} bounds then number of stays, so the total
time spent in short stays is at most \(   2n\U\Tu\passno' = 4\Q\U\Tu\passno' \).

This gives the bound on the total time as \( \Tu \) multiplied with
\begin{align*}
  2\Q\escno\passno' +  6\gamma^{2}\U + 48\gamma^{2}\Q^{2} + 4\Q\U\passno'.
 \end{align*}
For \( \U \) as in Definition~\ref{def:hier-params},
the last term dominates the previous ones, so for large \( \Q \) this will be bounded by
\( \CEsc\Q\U\passno \) for an appropriate constant \( \CEsc \).
As by definition \( \escno^{*}=\CEsc\Q\passno \), this completes the proof of the lemma.
\end{prooof} 
\end{Proof}

\subsection{Weak attack cleaning}

This section will scale up the Attack Cleaning
property of trajectories (Definition~\ref{def:traj})
to machine \( M^{*} \), but first only in 
a weaker version, restricting the number of bursts in the relevant interval.

The Attack Cleaning property says the following for the present case.
Let \( P \) be a path that is free of \( \Noise^{*} \).
For current colony-pair \( \pair{x}{x'} \) (where \( x'< x+2\Q\B \)), suppose that the interval
\( I=\lint{x-(\CSpill+1)\Q\B}{x'+\Q\B} \) is clean for \( M^{*} \).
Suppose further that \( t \) is at the end of a work period in which
the transition function, applied to \( \eta^{*}(x,t) \), directs the head right.
Then by the time the head returns to \( x'-\CSpill\Q\B \),
the right end of the interval clean in \( M^{*} \)
containing \( x \) advances to the right by at least \( \Q\B \).

We will use the constant
\begin{align}\label{eq:E}
   \E =16\Delta\B
\end{align}
which bounds the size of the whole range in which a call to healing operates.
The constant \( \Delta \) was defined in~\eqref{eq:Delta}.

\begin{definition}\label{def:tame}
  Recall \( \passno^{*}=\passno + 5 \) from Definition~\ref{def:hier-params}.
   A path \( P \) is called \df{tame} over the interval \( I \) if
   during every time interval that it spends in \( I \) has at most one burst,
   with at most
\begin{align}\label{eq:s-def}
  \s=  2\passno^{*}(\passno^{*}+2^{\gamma/5+\CMarg + 2})
\end{align}
   bursts altogether (this is less than \( 3\passno^{2} \) for large \( \passno \)).
 \end{definition}

 \begin{lemma}[Weak attack cleaning]\label{lem:weak-attack-clean}
   \begin{alphenum}
     \item
   In addition of the above condition of attack cleaning,
   assume that the trajectory \( (\eta,\Noise) \) is tame over the interval \( I \).
   Then the conclusion holds.
   The analogous statement is also true when switching left and right.
   
 \item\label{i:weak-attack-clean.game}
   Assume that the annotation game has been played to the beginning of the attack.
   Then the Range Extender player can extend the range in \( I \) to the end of the attack,
   satisfying the conditions of the game.     
   \end{alphenum}
\end{lemma}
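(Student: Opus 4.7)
The plan is to run the attack as a play of the annotation game of Definition~\ref{def:annotation-game}, which packages parts~(a) and~(b) together. At the start of the attack one has an annotation of $(\eta,\Noise)$ over the $M$-scale rectangle corresponding to $I$; the current $M^{*}$-colony-pair sits at $\pair{x}{x'}$; and by Lemma~\ref{lem:transition}(b) applied to the previous burst-free work period, the left member of this pair carries $\Drift = 1$. So the transfer phase of $M$ will move its head into the right neighbor colony and then attempt to complete a new work period there. Range Extender's strategy is to keep a $\Z$-neighborhood of $M$'s head inside the range at all times, using an ``add'' move (case~(ii)) when the head advances into new territory of $I$ and a ``delete'' move when the head leaves on the far-left side. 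Cleanness of $\eta^{*}$ on $I$ keeps $\Noise^{*}$ empty on every such extension, and the safety-for-turns precondition of case~(ii) is supplied by Lemma~\ref{lem:safe-for-turns}, because each newly added interval has just been crossed by a pass containing at most one burst.

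Annotator's ability to respond is exactly Lemma~\ref{lem:healing}, whose key hypothesis---one burst per visit---is precisely tameness. The estimates inside the proof of that lemma, on island size, island escape time, number of healing stages, and turn safety, all go through unchanged. The total burst count $s$ is much smaller than $\Q$, so the ``at most three islands per colony'' bound of property~\eqref{i:annotated.num-islands} is maintained throughout the attack.

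To conclude part~(a), the scaled Escape property (Lemma~\ref{lem:escape}), applied across each single-burst sub-interval separately thanks to tameness, prevents the head from loitering indefinitely in any $\gamma\Q\B$ window. The head therefore reaches the interior of the new right neighbor colony, starts a work period there, and by Lemma~\ref{lem:transition}(b) applied to the first subsequent burst-free work period, that colony becomes a clean (annotated) $M^{*}$-cell. The value $\Drift = +1$ written into the new left member then forces any further return visit to cost a fresh burst; since only $s$ bursts are allowed, the head cannot undo the progress before returning to $x' - \CSpill\Q\B$.

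The main obstacle is quantitative: one must rule out a scenario in which repeated bursts keep forcing the head back to the left of $x'$ without ever letting the new right colony complete a clean work period. The bound $s = 2\passno^{*}(\passno^{*} + 2^{\gamma/5 + \CMarg + 2})$ is tailored exactly for this---the $2^{\gamma/5}$ factor is meant to absorb the worst-case feathered big-turn cascade that a single burst can provoke (via Lemma~\ref{lem:feathering-lb}), the outer $\passno^{*}$ factor counts the number of passes that Pass Cleaning may demand, and the $\CMarg$ overhead reflects the margin of Pass Cleaning. Verifying that $s$ really dominates the worst such cascade, and that the total time budget $\escno^{*}\Tu^{*}$ is not exceeded, is where most of the detailed calculation will sit.
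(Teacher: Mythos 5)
There is a genuine gap: your proposal never supplies the mechanism that actually makes weak attack cleaning work, and you concede as much when you write that verifying the key quantitative step ``is where most of the detailed calculation will sit.'' The paper's proof does not rest on a cascade-counting computation at all. Its core observation is structural: the frontier zone (and D-zone) carried by the advancing front has \( \Z/2 \) cells, tameness bounds the total number of bursts by \( \s\ll\Z \), and each burst (together with the healing it triggers) can only disturb an interval of size \( O(\E) \); hence at least \( \Z/2-2\s\E \) cells of the frontier zone survive with their \( \Sweep \) and \( \BigDigression \) fields intact. Since healing never rewrites these fields outside its \( \E \)-sized operating range, the surviving majority of the zone cannot be reversed by bursts, so the front can only turn back by \emph{legitimately} completing a transfer sweep or the first sweep of a rebuilding --- and in either case a fresh colony is created (with \( \rul{ComplianceCheck} \) forcing a compliant codeword, or with content written from scratch). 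That is the entire content of part (a), and it is absent from your argument. Your appeals to Lemma~\ref{lem:healing} and to the scaled-up Escape lemma do not substitute for it: Escape only prevents indefinite loitering, and the Annotator's ability to respond presupposes the head is operating in an annotatable region, whereas the whole point of an attack is that the head is advancing into \( M^{*} \)-disorder where the \( M \)-level tape content is arbitrary. Your guess that the \( 2^{\gamma/5} \) factor in \( \s \) absorbs a feathering cascade inside this lemma is also misplaced; that factor is consumed later, in Lemmas~\ref{lem:repeated-attack} and~\ref{lem:pass-clean}, when the tameness hypothesis is removed.

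Your Range Extender strategy for part (b) is also not viable. Keeping a \( \Z \)-neighborhood of the head inside the range at all times would require annotating configurations while the head is in the middle of \( M^{*} \)-disorder, where the health and island-count conditions of Definition~\ref{def:annotation} cannot be met. The paper instead has the Range Extender \emph{delete} the entire affected area at the start of the attack and add it back only at the end (or just before the last sweep, if a burst falls in that sweep), once repeated passes in normal mode have made the added interval clean and safe for turns.
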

\begin{proof}
When the transfer phase of the simulation on the colony-pair \( \pair{x}{x'} \) 
begins, it may enter disorder to the right of \( x'+\Q\B \).



  
\begin{enumerate}
\item Assume an attack to the right.
  In the transfer process of the simulation, or if a rebuilding process is triggered later,
  the frontier zone is moving right.
  When head enters and later exits the disorder then it may create some new inconsistency.
  Moreover, every time the head exits disorder, since this may happen after a long-time
  absence, a burst may occur.
  Since the path is tame, the total number of bursts is limited to \(  \s \ll \Z \).
The length of the frontier zone at the end of \( J \) is \( \Z/2 \), so, with \( \E \) defined
in~\eqref{eq:E}, there are at least
  \( \Z/2-2\s\E \) cells of the frontier zone that are at a distance at least \( \E \) from
  all bursts.
  No healing will change the sweep or the \( \BigDigression \) field in any of these cells.
  So they can be overwritten only in normal or rebuilding mode.
  If rebuilding is started then it will move right.
The only way that the head can move much left if the frontier zone itself turns back.
This will only happen either at the end of transfer or at the end of a first sweep of rebuilding.
In the first case the simulation creates a new colony.
Its \( \Info \) track may be unusable, being damaged by many bursts, but 
during the rest of the simulation the head does not exit the colony pair,
(the transfer sweep went out to the end of the turn region), so at most one
new burst occurs, and the \( \rul{ComplianceCheck} \) part of the simulation forces
a compliant codeword by the end of the work period.

In the rebuilding case, the content of the new colonies is created from scratch anyway. 

\item Assume that the attack is to the left.
  Then for the same reason as above, new bursts cannot turn back
  the leftward moving front.
  It can be replaced repeatedly by a rebuilding front started on its left,
  which may also be overridden similarly.
  But the only way to arrive to \( x' + \CSpill\Q\B \) is to finish a started rebuilding,
  creating a new colony on the left, and also to erase the rebuilding marks
  in every cell marked for rebuilding in this process, making the whole passed-over area 
  healthy.

\item In order to satisfy~\eqref{i:weak-attack-clean.game}, let the Range Extender remove from \( \Rg \)
  at the beginning the whole area in which rebuilding happened.
  Then at the end, add back the whole area including the newly created colony; however,
  if a burst would occur in the last sweep then do this before the last sweep.
  This way, the condition will be satisfied that the addition to the range is clean.
  As the added area has been passed over several times in normal mode, the other
  conditions of cleanness and turn safety are also satisfied.

\end{enumerate}
\end{proof}

The following lemma draws a consequence of repeated applications of weak attack cleaning.

  \begin{lemma}\label{lem:weak-repeated-attack}
  Let \( I_{0} \) be an interval of size \(  \ge (2\CSpill+1)\Q\B \) and \( J \) an adjacent
  interval of size \( n\Q\B \) on its right.
  Consider a path \( P \) at whose beginning the interval
  \( I_{0} \) is clean for \( M^{*} \), and that is tame over \( I_{0}\cup J \).
  Assume that
  \( P \) passes \( I_{0} \) at least \( 2^{n+1} \) times from left to right and back.
  Then at some time during \( P \), the whole interval \( J \) becomes clean for \( M^{*} \).
  The analogous statement holds if we switch left and right.

  The statement analogous to part~\eqref{i:weak-attack-clean.game}
  of Lemma~\ref{lem:weak-attack-clean} also holds.
\end{lemma}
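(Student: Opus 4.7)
The plan is to iterate Lemma~\ref{lem:weak-attack-clean} (weak attack cleaning), amplifying single-attack extensions via the 1-feathering property inherited by the simulated machine $M^*$. Let $y$ be the right endpoint of $I_0$ and $R(t)$ the right endpoint of the maximal clean $M^*$-interval containing $I_0$; initially $R(0)=y$. The goal is to show $R(t)\ge y+n\Q\B$ at some time during $P$, which is exactly the assertion that $J$ becomes clean for $M^*$.

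For the coupling step, $|I_0|\ge (2\CSpill+1)\Q\B$ far exceeds the $M$-head's wandering range within a single $M^*$-work-period (at most $2\Q\B+2\PadLen\B$, with $\PadLen\ll \Q$ under Definition~\ref{def:hier-params}), so every left-to-right $M$-traversal of $I_0$ forces the $M^*$-current-colony-pair to sweep across most of $I_0$ and, at some moment, to coincide with the rightmost colony-pair of the current clean region. At that moment Lemma~\ref{lem:weak-attack-clean} applies: either the $M^*$-transition directs rightward (extending $R(t)$ by $\Q\B$, a successful attack), or it directs leftward (an $M^*$-right-to-left turn at a position close to $R(t)-\Q\B$, a failed attack). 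In either case the $M^*$-head makes at least one rightward crossing of a fixed pivot just left of $y$ during this round trip.

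By the 1-feathering of $M^*$ (inherited from Section~\ref{sec:feathering} at every hierarchy level), Lemma~\ref{lem:feathering-lb} applied at the $M^*$-level with $\B^*=\Q\B$ gives: $2^{n+1}$ rightward $M^*$-crossings of this pivot force the $M^*$-head to reach a position at least $y+n\Q\B$. Since each of the $2^{n+1}$ round trips supplies such a crossing, the $M^*$-head does reach $y+n\Q\B$ at some time. But occupying this position requires a current $M^*$-colony-pair covering it, which in turn requires clean $M^*$-cells there, i.e., $R(t)\ge y+n\Q\B$ --- the conclusion. The symmetric statement obtained by swapping left and right follows identically. The game analog of part~\eqref{i:weak-attack-clean.game} follows by iterating the range-addition prescription of Lemma~\ref{lem:weak-attack-clean}\eqref{i:weak-attack-clean.game} at each of the (at most $n$) successful attacks, propagating the clean, turn-safe annotation between them.

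The main obstacle is making the coupling quantitative --- verifying that each $M$-round trip actually yields a rightward $M^*$-crossing of the chosen pivot, with the pivot placed correctly (close enough to $y$ that the feathering advance clears $y+n\Q\B$, but far enough left of $y-\Q\B$ that the crossing is forced whenever the $M^*$-current-colony-pair sweeps through $I_0$), while absorbing the up to $\s=O(\passno^2)$ bursts that tameness permits. This rests on the width disparity $|I_0|\gg \Q\B+\PadLen\B$, together with the fact that each burst can locally spoil only a bounded number of counted crossings (by the explicit bounds in Lemma~\ref{lem:weak-attack-clean}), negligible beside $2^n$ once $n$ is past a small constant.
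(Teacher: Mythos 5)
Your proposal takes essentially the same route as the paper: track the right endpoint of the maximal clean $M^{*}$-interval containing $I_{0}$, use Lemma~\ref{lem:weak-attack-clean} to get one colony of extension per successful attack, and invoke the feathering lower bound (Lemma~\ref{lem:feathering-lb}) at the $M^{*}$ level to force enough attacks among the $2^{n+1}$ passes --- the paper merely packages the feathering count as an induction ($b_{j}-b_{0}\le i\Q\B$ implies an attack within the next $2^{i}$ pairs of passes) rather than as one global application at a pivot. One caution on your closing inference: since the head of $M^{*}$ is by definition the head of $M$, its mere presence at $y+n\Q\B$ does not certify cleanliness there; the correct phrasing (which your earlier dichotomy already supplies) is that the simulated head can only pass the current clean boundary via a successful attack, each of which extends the boundary by one colony.
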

\begin{proof}
  Let \( I_{j} = \lint{a_{j}}{b_{j}} \) be
  the largest interval containing \( I_{0} \) after \( j \) pairs of (left-right, right-left) passes.
  We will concentrate on the growth of \( b_{j} \), though a similar analysis can show
  a simultaneous decrease of \( a_{j} \).
  We know from Lemma~\ref{lem:weak-attack-clean} that after every pair of passes
  over \( I_{0} \) (not necessarily on the larger interval \( I_{j} \)),
  the interval \( I_{0}=\lint{a_{0}}{b_{0}} \) will be clean for \( M^{*} \).
  Also \( b_{j}-b_{0}\ge \Q\B \), and  \( b_{j} \) is nondecreasing.

  Suppose now that \( b_{j}-b_{0}\le i\Q\B \);
  we claim that then \( b_{j+2^{i}}-b_{j}\ge\Q\B \).
  Indeed, applying Lemma~\ref{lem:feathering-lb} to the machine \( M^{*} \),
  after some \( j'\le 2^{i} \) left-right passes, the head must make an attack from the rightmost
  colony of \( I_{j+j'} \) allowing to apply Lemma~\ref{lem:weak-attack-clean}, and as a result,
  increasing \( b_{j'}\ge b_{j} \) by at least \( \Q\B \).

  Repeating the argument, we get \( b_{2^{n+1}}-b_{0}\ge n\Q\B \).
\end{proof}

\subsection{Pass cleaning}\label{sec:pass-cleaning}

The scaled-up version of the Pass Cleaning property
considers a path \( P \) with no \( \Noise^{*} \), as it makes
\( \passno^{*} \) pairs of passes over an the interval \( I \),
 and claims that they make \( \Int(I,\CMarg\Q\B) \) clean for \( \eta^{*} \).
 From now on, until further notice, consider a tame path \( P \) over an interval \( I \).
 Then \( I \) will be made up of subintervals of
 size \( \ge 4\Z\B \) that never gets bursts,
 separated from each other and the ends by distances \( \le 4\s\Z\B \).
 We will call these \df{basic holes}.
 
The pass cleaning property of \( (\eta,\Noise) \) cleans the basic
holes (except for margins of size \( \le\CMarg\B \)) in the first \( \passno \) pairs of passes.
The Spill Bound property allows them to erode on the edges by the amount \( \CSpill\B \).
We will call these somewhat smaller intervals still basic holes.
One more pair of passes
will make the basic holes, according to Lemma~\ref{lem:safe-for-turns}, safe for turns.
The following two lemmas will show how some order will be established on them in a constant
number of more passes.
Recall that the maximum number of cells in a healing area, \( \E  = O(\beta) \) in~\eqref{eq:E}
is a constant, much smaller than the zigging distance (in cell widths) defined in~\eqref{eq:FDef}.

\begin{definition}\label{def:directed}
  An interval will be called \df{almost clean} if
  it is clean except for a single island of size \( \le \beta'\B \) where \( \beta' \) was
  defined in~\eqref{eq:stain}.
  Let us call this island the \df{blemish}.

  An interval \( J \) of size \( >3\Z\B \) in the left direction from the head is
  called \df{right-directed} if
  \begin{itemize}
  \item it is almost clean;
  \item it is safe for turns;
  \item outside the blemish, its cells are all right-directed as seen by their sweep values (of
    simulation or rebuilding);
  \item its right end contains a frontier zone (of normal or rebuilding mode).
  \item this is the only frontier zone in \( J \).
  \end{itemize}
  Left-directedness is defined similarly.
\end{definition}

\begin{lemma}\label{lem:make-directed}
  Consider an almost clean interval \( J=\lint{a}{b} \) of size \( \ge 4\Z\B \) that
  is safe for turns.
  If a path with at most one burst passes it
  from left to right then it will leave a right-directed interval \( J'=\lint{a}{b'} \) with 
  \( b'\ge b-\CSpill\B \).
  The same is true when interchanging left and right.
\end{lemma}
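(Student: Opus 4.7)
The plan is to verify the four clauses of right-directedness from Definition~\ref{def:directed} in turn, for the interval $J' = \lint{a}{b'}$ left by the left-to-right pass. The easy clauses come first: the bound $b' \ge b - \CSpill\B$ follows from the Spill Bound property, which controls the erosion at the right end caused by any disorder near $b$; and safety for turns on $J'$ follows from Lemma~\ref{lem:safe-for-turns}, which already handles a left-to-right pass with at most one burst.

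Second, I would prove $J'$ is almost clean. The initial blemish has size at most $\beta'\B$, a constant much smaller than the zigzag extent $\Z = \passno^{2+\rho}$. As the head advances rightwards, the zigzag pattern causes it to repeatedly enter and exit any disorder within $\Z$ cells of its current front, $\Omega(\Z/\B)$ times before the front moves past. The Attack Cleaning property reduces the width of such a disorder by $\B$ on each exit, so any disorder the head encounters is fully consumed during the pass. A single burst, if it occurs, produces one new island of size at most $\beta\B$ that, by Spill Bound, grows by at most $\CSpill\B$ on each side, so it stays within the $\beta'\B = (\beta + 2\CSpill)\B$ bound. Since the head cleans any disorder it passes over, at most one blemish can survive: one created by a burst at a position behind the head's current location at the end of the pass.

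Third, for the right-directed sweep values outside the blemish: as the head crosses a cell during the pass it writes consistent right-marching values into either the simulation-mode tracks ($\Age$, $\Sweep$, $\Drift$) or the rebuilding-mode tracks ($\Rebuild.\Sweep$, $\Rebuild.\Addr$), depending on the current mode. For the frontier-zone clause: the head carries its D-zone of size $\Z/2$ with it, so at the moment of exit past $b$ this zone sits at the right end of $J'$, providing the required frontier zone. It is the only one, because any previous frontier zone in $J$ would have been overwritten as the head passed over it, and the surviving blemish (of size at most $\beta'\B$) is far too small to hide a second frontier zone (of size $\Z/2 \gg \beta'$).

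The main obstacle will be rigorously carrying out the attack-cleaning count in the second step: verifying that the head's zigzag encounters with the disorder actually satisfy the precondition of Attack Cleaning (head at the right end of a clean interval of the minimal required size, directed right, not in a replacement situation), that the amortized effect of consecutive bounces really consumes the blemish by $\B$ each time, and handling the subcase in which the burst merges with the original blemish or with a healing or rebuilding operation already in progress. The size choice $\Z \gg \beta'$ is what makes the counting close, but the bookkeeping about whether a given reduction comes from a left-attack or right-attack cleaning event will require a short case analysis.
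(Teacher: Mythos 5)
Your treatment of the easy clauses matches the paper: the bound \( b'\ge b-\CSpill\B \) comes from Spill Bound, turn safety from Lemma~\ref{lem:safe-for-turns}, and the elimination of the blemish during a noiseless rightward zigging pass (with one new burst possibly depositing one new island of size \( \le\beta'\B \)) is essentially the paper's first case. The paper phrases that case through the healing machinery (every triggered healing succeeds as in Lemma~\ref{lem:combined-heals}, extending the healthy area) rather than through a direct Attack Cleaning count, but the two arguments are doing the same work.

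The genuine gap is the case you defer to the end as "the main obstacle": what happens when the pass triggers \emph{rebuilding}. This is not a bookkeeping detail --- it is half of the paper's proof, and it is where right-directedness could actually fail. A rebuilding process begins with a marking stage that sweeps \emph{leftward} as well as rightward over \( 4\Q \) cells, can be interrupted, and can leave rebuild marks and a leftward-moving frontier behind; naively this destroys both the "all cells right-directed by their sweep values" clause and the "only one frontier zone" clause. The paper closes this case using two specific features of the program: part~\eqref{i:rebuild.restart} of the rebuilding definition (a rebuilding triggered within \( (1/4)\Z\B \) to the right of a normal-mode frontier zone restarts from that frontier zone, so it cannot strand an orphaned leftward process next to the advancing front), and the fact that a \emph{successful} rebuilding terminates by creating a colony-pair that simulates the start of healing in \( M^{*} \), which sends the head rightward again --- with any subsequent rebuilding centered at least \( \Q\B \) further right. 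Without invoking these, you cannot conclude that the head eventually exits on the right leaving a single frontier zone and uniformly right-directed sweep values; so the proof as written does not go through for the rebuilding case.
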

\begin{proof}
  \begin{sloppypar}
  If rebuilding never starts then every possible healing that is triggered succeeds just as
  in Lemma~\ref{lem:combined-heals}, extending the healthy area.
  The disorder in the blemish will be corrected as the head passes it, but
  a new burst may leave a new blemish behind (if it happens at the bottom of a zig).
  At the time of exit, \( J' \) naturally becomes directed.    
  \end{sloppypar}

  Suppose that rebuilding gets triggered.
  If it exits on the right then it leaves \( J' \) directed.
  Here as well as in similar later situations,
  we use part~\eqref{i:rebuild.restart} of the definition of rebuilding,
  achieving that if a rebuilding is triggered within \( (1/4)\Z\B \) on the right of a frontier zone
  of normal mode then it moves back to start immediately from this frontier zone.
  A blemish or burst or a restart (by some frontier on the left)
  may trigger healing, but eventually a rebuilding will either exit or succeed.
  If it succeeds then a normal mode starts with a direction to the right,
  and/or possibly new rebuilding with a center at least
  \( \Q\B \) to the right of the old one.
  Eventually the head will exit leaving \( J' \) directed.
\end{proof}


Recall the definition of the feathering parameter \( \F \) in~\eqref{eq:FDef}.

\begin{lemma}\label{lem:keep-directed}
  Consider a tame path starting on the right of a right-directed interval
  \( J = \lint{a}{b} \), and eventually crossing it to the left.
  Then we end up with left-directed interval \( J' = \lint{a'}{b'} \),
  \( a'\le a+\CSpill\B \) and \( b'\ge b \) having within \( 2\Z \) cells of the right end
  a footprint of a big left turn (as defined in Section~\ref{sec:feathering}).

   If \( J \) already had such a footprint at position \( x \) then \( b'\ge x+\F\B \).
\end{lemma}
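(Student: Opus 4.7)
The plan is to track how the right-directed frontier zone near $b$ is transformed into a left-directed one as the tame path crosses $J$, focusing on the new footprint laid down by the unique big right-to-left turn that this transformation demands.

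First I would observe that since the path starts on the right of $J$ and eventually exits on the left, the simulated head direction must reverse from right to left at some moment while the head is still near, or to the right of, $b$. By the procedure described in Section~\ref{sec:feathering}, this reversal is effected by a call to $\MoveFront(-1)$, which requires the D-zone's $\BigDigression$ to be pumped rightward until it reaches $\F$; at that moment $\BigDigression\gets\omega$ is stamped in the D-zone, creating the new footprint at the rightmost position $p$ reached by the D-zone. If $J$ already contained a footprint of a big left turn at $x$, then $\F$-feathering (Definition~\ref{def:feathering}) forces $p\ge x+\F\B$, which together with the definition of $b'$ as the right edge of the D-zone centered at $p$ yields $b'\ge x+\F\B$. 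In the absence of a previous footprint, $p$ is at most $\F\B$ to the right of the point where $\MoveFront(-1)$ was invoked, and the footprint sits inside the D-zone of width $\Z/2$ centered at $p$, hence within $2\Z$ cells of $b'$.

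Next I would run a right-to-left analogue of Lemma~\ref{lem:make-directed}. Once the footprint is stamped, the head carries $\BigDigression=-1$ back to the front near $b$, the front shifts one cell left, and the D-zone accompanies it with $\BigDigression\gets 0$ as the new leftward frontier zone. From here the head sweeps leftward across $J$, overwriting the $\Sweep$ and $\Drift$ fields of each cell with leftward values; the blemish of $J$ is absorbed by attack cleaning of the underlying trajectory $(\eta,\Noise)$, and the Spill Bound confines the leftmost clean endpoint to $a'\le a+\CSpill\B$. The final frontier zone of $J'$ sits near $a'$; its uniqueness follows because the old right-end frontier was consumed to produce the footprint, while part~\ref{i:rebuild.restart} of Section~\ref{sec:rebuilding} rules out any stray right-directed frontier surviving on the left.

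The main obstacle will be showing that the tame bursts during the crossing cannot destroy either the freshly laid footprint or the emerging leftward frontier zone, and cannot introduce a second frontier of the wrong orientation. By Definition~\ref{def:tame} there are at most $\s$ bursts, each affecting $O(\beta)$ cells and triggering at most one healing or rebuilding invocation of scope $O(\E)$; since $\s\E\ll\Z$, the $\Z/2$-wide D-zone and footprint survive intact and zigging detects and repairs the local damage. A big turn performed during a burst-induced rebuilding still obeys $\F$-feathering, so it can only reinforce, not undermine, the footprint bound. The subtlest sub-case is a burst that sparks a fresh rightward rebuilding to the left of the advancing leftward front; here the precedence clause~\ref{i:rebuild.restart} forces the leftward process to restart, but the tameness bound on the number of bursts bounds the total number of such restarts, so the leftward crossing still completes and produces the required $J'$.
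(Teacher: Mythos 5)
Your proposal follows essentially the same route as the paper: the head can only exit $J$ on the left via a big left turn of the frontier, which stamps a footprint near the right end of the resulting interval; $\F$-feathering relative to a pre-existing footprint gives $b'\ge x+\F\B$; tameness ($\s\E\ll\Z$) keeps the burst damage too small to destroy the D-zone or create a spurious frontier; and the Spill Bound gives $a'\le a+\CSpill\B$. One sub-case is justified differently, and your justification there is the shakier one: for the displacement bound when the decisive left turn happens in \emph{rebuilding} mode, you assert that rebuilding big turns "still obey $\F$-feathering." The construction does not support this---rebuilding-mode big turns are handled by the simpler mechanism tied to $\Rebuild.\Addr$ and the sweep structure, and do not consult normal-mode footprints. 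The paper instead observes that once rebuilding starts, the frontier must move right by more than $\Q$ cells before it can turn left, and since $\Q\B\gg\F\B$ this trivially dominates the required displacement. Your conclusion is therefore correct but for the wrong reason in that sub-case; you should replace the feathering appeal by the $>\Q$-cells argument. A second, smaller point: you frame the argument around a single decisive reversal near $b$, whereas the path may enter and exit $J$ on the right many times before finally crossing left; the paper handles this by noting that each right exit leaves an essentially right-directed interval (reusing the front-tracking of the weak attack cleaning proof), so only the final left exit matters. Your closing paragraph on restarts partially covers this, but it is worth making the repeated-entry structure explicit.
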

\begin{proof}
  The path can enter and exit \( J \) repeatedly, and is allowed at most one new burst every time,
  with a bound on the total number of bursts given in Definition~\ref{def:tame}.
  Now the same reasoning applies to the front as in the proof of Lemma~\ref{lem:weak-attack-clean}.
  If the head leaves \( J \) on the right then we end up with an interval \( J'\supseteq J \)
  that differs from a right-directed one only in possibly an area of size \( \le 2\E\s\B \) due to
  uncorrected islands caused by bursts, at most one in each entrance.
  The head can only leave on the left end
  if the frontier zone, either in normal or rebuilding mode,
  turns left on the right end of \( J' \), leaving the footprint of a big left turn.
  In all inconsistencies trigger healing or rebuilding; but as the head leaves on the left,
  these all must succeed with the exception of one possible blemish caused by a burst
  that occurred during the last right-to-left pass; we end up with left-directed interval.

  If rebuilding started then the frontier can move left only after it moved right
  by \( >\Q \) cells.
  In case of a footprint of a big left turn at the end of \( J \), if rebuilding does not start then
  the feathering property will force the front to move by at least \( \F \) cells to the right before
  passing \( J \) to the left.
\end{proof}

\begin{lemma}[Weak pass cleaning]\label{lem:weak-pass-clean}
  \begin{alphenum}
    \item Suppose that a tame path \( P \)
  makes \( \passno^{*} \) pairs of passes over the interval \( I \) starting from the left.
Then before the end of the \( \passno ^{*} \)th pair of passes the interior
\( \Int(I, \CMarg\Q\B) \) becomes clean for \( \eta^{*} \).
 \item\label{i:weak-pass-clean.game}
   Assume that the annotation game has been played to the beginning of \( P \).
   Then the Range Extender player can extend the range in \( I \) to the end,
   satisfying the conditions of the game.     
  \end{alphenum}
\end{lemma}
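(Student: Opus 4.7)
The plan is to scale up the lower-level Pass Cleaning property from $\eta$ to $\eta^{*}$ by combining it with the interplay between weak attack cleaning (Lemma~\ref{lem:weak-attack-clean}) and $\F$-feathering (Lemma~\ref{lem:keep-directed}). Remark~\ref{rem:big-turns} already sketches the intended mechanism. Tameness of $P$ on $I$ (Definition~\ref{def:tame}) partitions $I$ into \emph{basic holes}, burst-free subintervals of size $\geq 4\Z\B$, separated by gaps of size $\leq 4\s\Z\B = O(\passno^{2})\Z\B$, which is much smaller than $\F\B = \Z\passno^{2+\rho}\B$.

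First I would apply the Pass Cleaning property of the lower-level trajectory $(\eta,\Noise)$ to each basic hole: the first $\passno$ pairs of passes of $P$ render their interiors (minus margins of size $\CMarg\B$) clean for $\eta$. By the Spill Bound, this persists. One further pair of passes, via Lemma~\ref{lem:safe-for-turns}, makes each basic hole safe for turns; a subsequent single-direction pass, via Lemma~\ref{lem:make-directed}, turns each basic hole into a directed interval.

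Now consider two adjacent directed basic holes $H_{i}, H_{i+1}$. In the next right-to-left pass, Lemma~\ref{lem:keep-directed} applies: either the head traverses the gap from $H_{i+1}$ to $H_{i}$ without a big left turn at the right end of $H_{i}$, in which case weak attack cleaning (Lemma~\ref{lem:weak-attack-clean}) merges them into a single left-directed interval, or a footprint of a big left turn is left behind. In the latter case, the $\F$-feathering property (the final clause of Lemma~\ref{lem:keep-directed}) forces the next big left turn to occur at least $\F\B$ further to the right; since $\F\B \gg 4\s\Z\B$, after one additional pass the obstructing turn has been pushed past $H_{i+1}$, and the following pass completes the merge via Lemma~\ref{lem:weak-repeated-attack}. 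Iterating this reasoning across all adjacent pairs in parallel yields a single clean $\eta$-interval covering $\Int(I,\CMarg\Q\B)$, populated with well-formed colonies by the simulation work periods and rebuildings that attack cleaning triggers; this is exactly cleanness for $\eta^{*}$ in the sense of Definition~\ref{def:scale-up}.

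For part~(\ref{i:weak-pass-clean.game}), the Range Extender follows the strategy of Lemma~\ref{lem:weak-attack-clean}(\ref{i:weak-attack-clean.game}): remove the neighborhood of the head from $\Rg$ whenever it enters disorder or an island, and add it back after a successful merge, timing each addition so as to exclude the last sweep in which a burst may occur. The main obstacle I anticipate is the parallel coordination across many adjacent pairs of basic holes: a burst during the merging phase can invalidate directedness on one side of a gap, but tameness caps the total number of bursts by $\s$, and the argument shows each such burst costs only $O(1)$ extra pass-pairs. Since $\passno^{*} - \passno = 5$ and the constants $\CMarg, \CSpill, \rho$ are chosen accordingly, the whole process fits inside the $\passno^{*}$ pairs of passes for sufficiently large $\passno$.
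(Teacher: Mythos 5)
Your skeleton matches the paper's: decompose $I$ into basic holes using tameness, clean them with the lower-level Pass Cleaning property, make them safe for turns (Lemma~\ref{lem:safe-for-turns}), make them directed (Lemma~\ref{lem:make-directed}), and use footprints of big turns plus $\F$-feathering with $\F\B\gg 4\s\Z\B$ to force adjacent holes to overlap within a constant number of further pass-pairs. That is exactly the paper's plan, executed there as passes r1,\,l1 through r5.

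There is, however, a genuine problem in how you justify the merging of adjacent holes. You attribute the merge to Lemma~\ref{lem:weak-attack-clean} and Lemma~\ref{lem:weak-repeated-attack}, but both of those lemmas presuppose an interval that is already \emph{clean for \( M^{*} \)}, with a colony-pair at the end of a work period whose simulated transition directs the head outward. At this stage of the argument the basic holes are only clean for \( \eta \) and directed; no \( M^{*} \)-level structure is yet established in them, so the hypotheses of those lemmas are not available and invoking them is close to circular (weak pass cleaning is itself an ingredient of the chain that eventually yields \( M^{*} \)-cleanness). The device the paper actually uses is the final clause of Lemma~\ref{lem:keep-directed} itself: once a hole carries a footprint, any later leftward crossing must first push the frontier at least \( \F\B \) to the right, and it is this frontier advance (resting on the \emph{lower-level} Attack Cleaning property of \( (\eta,\Noise) \)) that carries the directed hole across the gap of size \( \le 4\s\Z\B \) and joins it to its right neighbor by pass l3. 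Relatedly, your last step --- that the merged clean-for-\( \eta \) interval is ``populated with well-formed colonies'' and hence clean for \( \eta^{*} \) --- is asserted rather than proved; the paper devotes a separate final rightward pass (r5) to showing that every healing succeeds, that any rebuilding triggered must sit on the right edge of a healthy left segment and extend it rightward, and that the only rebuilding left unfinished exits on the right of \( \Int(I,\CRebuild\Q\B) \). Without that argument one cannot exclude that rebuilding debris or an interrupted marking stage leaves part of the interior disordered for \( \eta^{*} \).
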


\begin{Proof}
We will number the right and left passes after the first \( \passno \) pairs of passes
  as r1, l1, r2, l2, \dots.
  \begin{step+}{step:weak-pass-clean.pass-clean}
    r1, l1 make the basic holes safe for turns, except
    for margins of size \( \le(\CMarg+\CSpill)\B \).
  \end{step+}
  \begin{pproof}
 As shown above, the basic holes, of size \( \ge 4\Z\B \) and
 separated from each other and the ends by distances \( \le 4\s\Z\B \),
 become and stay clean for \( \eta \) in the first \( \passno \) passes, except for margins
 of size \( \le(\CMarg+\CSpill)\B \).
 By Lemma~\ref{lem:safe-for-turns},  passes r1, l1 will make the basic holes safe for turns.
\end{pproof} 

Let us call at any time a subinterval of \( I \) 
an \df{essentially right-directed hole} if it can be turned into a right-directed
one by changing it in \( \s \) islands of size \( \le\E\B \),
and an \df{essentially foot-printed hole} if it has a footprint of a big left turn
within \( 2\Z\B \) of its right end, again except for these islands.

\begin{step+}{step:weak-pass-clean.traps}
  Lemma~\ref{lem:make-directed} shows that
  pass r2 turns each basic hole \( J \) into a right-directed hole, with possibly a small
  decrease on the left.

  Lemma~\ref{lem:keep-directed} shows that
  pass l2 turns each of these holes into a foot-printed one, and that possible subsequent intrusions
  from the left will leave it essentially foot-printed.

  The first time the head passes right over any one of these footprints, this will be conserved.
  During all the later parts of the path, feathering requires that the head can pass left over it
  only by first shifting it to the right by at least \( \F\B \).
  This will happen no later than during pass l3, thus by this time the right end of
  each hole will move by at least this much to the right of the right end of the
  basic hole it originates from.
  Since basic holes are separated by distances of \( \le 4\s\Z\B\ll\F \),
  by the end of pass l3 all holes will overlap,
  leaving the whole interval \( I \) left-directed.
  
  The intrusions between pass l3 and r4 may introduce some isolated islands and decrease
  \( I \) by the non-isolated ones, but pass r4 make it right-directed again, and also safe for turns.
  Pass l4 makes it left-directed, and leaves it still safe for turns.
\end{step+}

\begin{step+}{step:weak-pass-clean.finish}
  Pass r5 will clean \( \Int(I, \CRebuild\Q\B) \) for \( \eta^{*} \).
\end{step+}
\begin{pproof}
  The intrusions between pass l4 and r5 may again introduce some isolated islands and decrease
  \( I \).
  But during pass r5 all islands must be healed, except a single blemish left behind due to a new burst.
  As long as healing succeeds then, just as in Lemma~\ref{lem:combined-heals}
  after it the head returns to the front, and simulation continues.
  Suppose new rebuilding begins.
  This being a rightward pass, this does not happen
  so close to the left boundary that the left marking stage of the process would
  leave \( I \).
  Hence it could have been triggered only on the right side of a healthy left segment \( I' \) of \( I \),
  and after completion, it extends this segment to the right.
  It may encounter more islands just ahead of the rightward marking front.
  Then as specified in part~\eqref{i:rebuild.restart} of the description of rebuilding, after
  any triggered healing, whether successful or not, the rebuilding process will just continue.
  Eventually, the only unfinished rebuilding process can be one that exits on the right before
  finishing, so its area is outside \( \Int(I,\CRebuild\Q\B) \).
\end{pproof} 

\begin{step+}{step:weak-pass-clean.game}
  Part~\eqref{i:weak-pass-clean.game} of the lemma can also be satisfied.
\end{step+}
\begin{prooof}
  The proof is similar to the corresponding part of Lemma~\ref{lem:weak-attack-clean}.
  At the beginning, remove the whole interval \( I \) from the range \( \Rg \).
  Then in the last pass \( r5 \), add back \( \lint{a}{b}=\Int(I,\CRebuild\Q\B) \) to the range, but
  do it possibly in two steps.
  It there is no burst in this pass that creates an uncorrected island then add it all back at the end.
  Suppose there is such a burst, then it must be at the left end of a zig that started at a position
  \( x \) some \( \Z \) cells to the right of the burst.
  Say, the pass r5 is between times \( t_{1},t_{3} \), and the burst happens just after time \( t_{2} \)
  where \( t_{1}<t_{2}<t_{3} \).  
  Then at time \( t_{2} \) (when it is still healthy),
  add back interval \( \lint{a}{x} \) to the range \( \Rg \),
  and at time \( t_{3} \) add back the rest, \( \lint{x}{b} \).
\end{prooof} 
\end{Proof}


\begin{lemma}\label{lem:repeated-attack}
  Consider the statement of Lemma~\ref{lem:weak-repeated-attack} with the interval
  \( I_{0} \) having the same length \( n\Q\B \) with \( n=\gamma/5 \) as the interval \( J \).
  The conclusion holds also for non-tame paths.
\end{lemma}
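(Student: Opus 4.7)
The plan is to reduce the statement to Lemma~\ref{lem:weak-repeated-attack} by showing that the absence of $\Noise^*$ on $I_0\cup J$, combined with the choice $n=\gamma/5$, forces the restriction of $P$ to this strip to be tame in the sense of Definition~\ref{def:tame}, after which the weak version applies verbatim.

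The first ingredient is the Escape Lemma (Lemma~\ref{lem:escape}) at level $k+1$. Since $|I_0\cup J|=(2\gamma/5)\Q\B\le\gamma\Q\B=\gamma\B^*$ and there is no $\Noise^*$ in the strip, every maximal time interval that the head spends in $I_0\cup J$---a \emph{visit}---has length at most $\escno^*\Tu^*=\S^*$.

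The second ingredient is sparsity. Because $\Noise^*$ is empty on the strip, every fault there belongs to a level-$k$ burst, and by Definition~\ref{def:sparsity} any two distinct such bursts are $\gamma(\B^*,\S^*)$-isolated. The strip has spatial width $(2\gamma/5)\B^*<\gamma\B^*$, so the isolation is forced to be temporal: two bursts touching the strip must differ in time by at least $\gamma\S^*>\S^*$. Combined with the visit bound, this delivers the ``one burst per visit'' half of tameness for free.

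For the total burst count, I would argue that within the time needed to complete $2^{n+1}$ pairs of passes, the sparsity supplies at most $O(2^{n+1}/\gamma)$ bursts: bursts are $\gamma\S^*$-separated in time, and every pair of passes must lie inside a bounded number of visits, each of length at most $\S^*$. With $n=\gamma/5$ this count is well below the threshold $\s=2\passno^*(\passno^*+2^{\gamma/5+\CMarg+2})$ from~\eqref{eq:s-def}, whose dominant term is exponential in $\gamma/5$ with ample cushion. Hence $P$ is tame, Lemma~\ref{lem:weak-repeated-attack} applies directly, and $J$ becomes clean for $M^*$ at some time along $P$; the game-theoretic addendum follows from part~\eqref{i:weak-attack-clean.game} of Lemma~\ref{lem:weak-attack-clean} inherited through the weak version.

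The main obstacle---and the reason for the specific value $n=\gamma/5$---is precisely this numerical matching. One must choose the strip narrow enough ($<\gamma\B^*$) that the $\gamma(\B^*,\S^*)$-isolation translates into a purely temporal gap, yet the bound on total bursts must still fit inside $\s$. Both constraints are met simultaneously exactly when $n=\gamma/5$, since the exponential $2^{\gamma/5+\CMarg+2}$ in $\s$ absorbs $2^{n+1}$ together with all polynomial $\passno^*$-factors and visit-counting slack.
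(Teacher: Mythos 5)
Your reduction to tameness breaks down at the total burst count. The ``one burst per visit'' half is essentially right (and the paper uses it implicitly): each maximal stay in \( I_{0}\cup J \) lasts at most \( \escno^{*}\Tu^{*} \) by the scaled-up Escape property, while the isolation forced by the emptiness of \( \Noise^{*} \) separates any two bursts meeting the narrow strip by roughly \( \gamma\S^{*} \) in time. But the bound ``\( O(2^{n+1}/\gamma) \) bursts in total'' does not follow. Bursts over the strip occur only while the head is there, and there is no a priori bound on the \emph{number of visits}: between two consecutive passes over \( I_{0} \) the head may leave the strip and re-enter it arbitrarily many times, and each re-entry can carry a fresh burst --- the \( \gamma\S^{*} \) temporal gap is easily accommodated because the head can spend arbitrarily long outside the strip in between. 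Sparsity bounds the temporal \emph{density} of bursts, not their total count over the unboundedly long time the path may need to complete its \( 2^{n+1} \) pairs of passes. So the path need not be tame, and Lemma~\ref{lem:weak-repeated-attack} cannot be invoked by your route; removing exactly this obstacle is the whole content of the lemma.

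The paper's proof instead accepts that more than \( \s \) bursts may occur and turns this into a contradiction via a leftward cascade. If the conclusion fails, non-tameness gives more than \( \s \) bursts over \( J_{i+1}\cup J_{i} \); by pigeonhole a large number of them fall between one consecutive pair of rightward passes. The Escape property embeds each such burst in a fault-free segment of the path covering an interval of size \( >3n\Q\B \); since that segment does not cross \( J_{i+1} \), it must make a fault-free pass over an interval extending one more block to the \emph{left}. Weak pass cleaning (Lemma~\ref{lem:weak-pass-clean}) then either cleans that block --- in which case Lemma~\ref{lem:weak-repeated-attack} applies there and cleanness propagates back to the right, a contradiction --- or certifies more than \( \s \) bursts over it as well, restarting the argument one block further left. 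The regress would force the finite path to sweep infinitely many blocks, which is impossible. This cascade is the missing idea; the choice \( n=\gamma/5 \) is what makes the Escape-based fault-free segments long enough to reach the next block (since \( 3n<\gamma \)) and keeps all blocks \( J_{i} \) the same length for the induction --- it is not what makes the path tame.
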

\begin{proof}
  Let \( J_{1}= I_{0} \), \( J_{0}=J \).
    We will show that if the conclusion does not hold then the path passes over all the
  infinite sequence of consecutive adjacent intervals  \( J_{2},J_{3},\dots \) 
  on the left of \( J_{1} \), of size \( |J_{1}| \).
  Since the path is finite, this leads to a contradiction.
  Let \( i=1 \).
 \begin{enumerate}
 \item\label{i:repeated-attack.2}
   Suppose the conclusion of  Lemma~\ref{lem:weak-repeated-attack} does not hold.
   Then there are more than \( \s  \) bursts over 
\begin{align*}
 \lint{a}{b}=J_{i+1}\cup J_{i}  
\end{align*}
during this time,
   consequently at least \( \passno^{*}+2^{n+2}<\s/2^{n+2} \) bursts happened during
   some consecutive pair of the  \( 2^{k+2} \) rightward passes over \( J_{i+1} \).

\item\label{i:repeated-attack.choice}
  By the Escape property, the path cannot stay long in an interval of size
  \( \gamma\Q\B > 3n\Q\B \),
  so each burst is contained in a segment of the path covering an interval \( >3n\Q\B \)
  with no bursts in it.
  Since these segments don't pass over \( J_{i+1} \),  each contains a fault-free pass over
  \begin{align*}
    I = \lint{a - \gamma/5-\CMarg}{b}.
  \end{align*}  
  Suppose that \( \Int(I,\CMarg\Q\B)\supseteq J_{i+2} \) becomes clean
  at some time during the first \( \passno^{*} \) of these passes.
  There are still \( 2^{n+2} \) fault-free passes over \( J_{i+2} \), so we are
  back at the situation of part~\ref{i:repeated-attack.2} with \( i\gets i+1 \).

\item\label{i:repeated-attack.1}
  Suppose that \( J_{i+2} \) does not become clean during the first \( \passno^{*} \) passes.
  Then by Lemma~\ref{lem:weak-pass-clean}, the number of bursts 
  in \( J_{i+2} \) exceeds  \( \s \).
  Then at least \( 2(\passno^{*}+2^{n+2}) \) bursts happen over \( J_{i+2} \)
  between some consecutive pair of the left-right passes over \( J_{i+2} \).
  Using the Escape property similarly to the above,
  each burst belongs to a segment containing a fault-free pass over \( J_{i+3} \).
  This brings us back to the situation of part~\ref{i:repeated-attack.choice} with \( i\gets i+1 \).
\end{enumerate}
\end{proof}

Let us remove the bound on the number of bursts
in the Pass Cleaning property of \( \eta^{*} \)

\begin{lemma}[Pass cleaning]\label{lem:pass-clean}
  Let \( P \) be a space-time path without \( \Noise^{*} \) that makes
  at least \( \passno^{*} \) passes over an interval \( I \).
  Then there is a time during \( P \) when \( \Int(I,\CMarg\Q\B) \) becomes clean.
\end{lemma}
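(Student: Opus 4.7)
The plan is to argue by contradiction, imitating the template of Lemma~\ref{lem:repeated-attack}. Suppose that $P$ makes $\passno^{*}$ pass-pairs over $I$ and yet $\Int(I,\CMarg\Q\B)$ never becomes clean in $\eta^{*}$ during $P$.

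First, by the contrapositive of Lemma~\ref{lem:weak-pass-clean} applied to $I$, the path $P$ cannot be tame over $I$, so during $P$ there are more than $\s$ bursts affecting $I$. Pigeonholing over the $\passno^{*}$ pass-pairs, some consecutive pair of passes contains at least $\s/\passno^{*}\ge 2(\passno^{*}+2^{\gamma/5+\CMarg+2})$ bursts, by the choice of $\s$ in~\eqref{eq:s-def}. Now invoke the Escape property (Lemma~\ref{lem:escape}): the head cannot dwell in any spatial window of size $\gamma\Q\B>3(\gamma/5)\Q\B$ for more than $\escno^{*}\Tu^{*}$ steps, so each burst must lie inside a segment of $P$ that sweeps a surrounding interval of width greater than $3(\gamma/5)\Q\B$ with no further faults. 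Hence at least one side of $I$, say the left, receives at least $\passno^{*}+2^{\gamma/5+2}$ fault-free passes over an adjacent interval $I_2$ of size $(\gamma/5)\Q\B$.

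Next, I would iterate the dichotomy used in the proof of Lemma~\ref{lem:repeated-attack}: set $I_1=I$ and build $I_2,I_3,\dots$ of size $(\gamma/5)\Q\B$ adjacent to one another going left. At each stage $k$, in Case~A, $\Int(I_{k+1},\CMarg\Q\B)$ becomes clean during the first $\passno^{*}$ of its fault-free passes; the remaining $2^{\gamma/5+2}$ passes then fulfil the hypothesis of Lemma~\ref{lem:repeated-attack} (applied with $I_0=I_{k+1}$ and $J=I_k$), which forces $\Int(I_k,\CMarg\Q\B)$ to become clean, and cascading this attack back through $I_{k-1},\dots,I_1=I$ contradicts the initial assumption. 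In Case~B, $\Int(I_{k+1},\CMarg\Q\B)$ fails to become clean, so Lemma~\ref{lem:weak-pass-clean} applied to $I_{k+1}$ again produces more than $\s$ bursts on $I_{k+1}$, and Escape propagates the path to a further adjacent $I_{k+2}$. If Case~B were to recur indefinitely, $P$ would have to sweep an unbounded sequence of intervals to the left of $I$, contradicting the finiteness of $P$.

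The main obstacle is the constant bookkeeping: the parameter $\s$ in~\eqref{eq:s-def} is calibrated precisely so that after pigeonholing the more than $\s$ bursts across $\passno^{*}$ pass-pairs, at least $2^{\gamma/5+2}$ remain to drive both the Escape-based propagation and the subsequent invocation of Lemma~\ref{lem:repeated-attack} at Case~A. One must also handle the $\CMarg+\CSpill$ safety margin when enlarging each $I_k$ to the window $\lint{a-((\gamma/5)+\CMarg)\Q\B}{b}$ over which the fault-free sweeps actually live, mirroring part~\ref{i:repeated-attack.choice} of the earlier proof. A secondary point is that $|I|$ need not equal $(\gamma/5)\Q\B$ as in Lemma~\ref{lem:repeated-attack}; this is handled by tiling $\Int(I,\CMarg\Q\B)$ by subintervals of size $(\gamma/5)\Q\B$ and running the cascade for each, using that attack cleaning from an adjacent clean interval is fast enough to propagate across the whole of $I$ within the remaining pass budget.
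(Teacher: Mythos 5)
Your proposal follows essentially the same route as the paper's proof: reduce to $|I|=(\gamma/5)\Q\B$, assume the conclusion fails, use weak pass cleaning to extract more than $\s$ bursts, pigeonhole them between a consecutive pair of passes, apply the Escape property to obtain $\passno^{*}+2^{n+2}$ fault-free passes over an adjacent interval, and iterate the dichotomy (clean via Lemma~\ref{lem:repeated-attack}, or recurse leftward) until the finiteness of $P$ yields the contradiction. The only cosmetic difference is in handling general $|I|$ (the paper covers $I$ by overlapping $(\gamma/5)\Q\B$-intervals and applies the lemma to each, whereas you propagate by attack cleaning), which does not change the substance.
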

\begin{proof}
  We will prove the statement for \( |I|=(\gamma/5)\Q\B \).
  If \( |I| \) is larger we can cover it by intervals of size \( (\gamma/5)\Q\B \)
  overlapping by \( \CMarg\Q\B \): applying the statement simultaneously to each,
  it follows for \( I \).
  
  So assume \( |I|=(\gamma/5)\Q\B \) and let \( J_{1}= I \).
  We will show that if the conclusion does not hold then the path passes over all the
  infinite sequence of consecutive adjacent intervals  \( J_{2},J_{3},\dots \) 
  on the left of \( J_{1} \), of size \( |J_{1}| \).
  Since the path is finite, this leads to a contradiction.
Let \( i=1 \), \( n=\gamma/5+\CMarg \).

   \begin{enumerate}
  \item\label{i:pass-clean.1}
  By weak pass cleaning (Lemma~\ref{lem:weak-pass-clean}),
  if \( \Int(J_{i},\CMarg\Q\B) \) did not become clean for \( \eta^{*} \),
  the number of bursts
  in \( J_{i} \) is more than \( \s \) as in~\eqref{eq:s-def}.
  Then there is a time interval between two consecutive left-right passes over \( J_{i} \)
  with at least \( 2(\passno^{*}+2^{n+2}) \) bursts over \( J_{i} \).

  Consider one of the bursts and an interval of size \( \gamma\Q\B \) containing it in the middle.
  Using the Escape property similarly to the proof of Lemma~\ref{lem:repeated-attack},
  we conclude that the head will escape it without other bursts.
  So the path contains a burst-free segment of size 
\begin{align*}
 (\gamma/2-\gamma/5)\Q\B > (\gamma/5+\CMarg)\Q\B
\end{align*}
  either on the left or on the right of \( J_{i}=\lint{a}{b} \).
  Without loss of generality we can assume that at least half of them are on the left, giving
  \( \passno^{*}+2^{k+2} \)  noise-free passes over
  \( \lint{a-(\gamma/5-\CMarg)\Q\B}{b} \) during this time.
  Let \( J_{i+1}=\lint{a-(\gamma/5)\Q\B}{a} \).
  \item\label{i:first-choice} If  \( J_{i+1} \)
    does not become clean during the first \( \passno^{*} \) of these passes
  then restart the reasoning, going back to part~\ref{i:pass-clean.1}, setting \( i\gets i+1 \).
  Otherwise by Lemma~\ref{lem:repeated-attack},
  interval \( J_{i} \) becomes clean during the next \( 2^{n+2} \)
  noise-free passes over \( J_{i+1} \), contrary to the assumption.
\end{enumerate}
\end{proof}

\subsection{Attack cleaning and spill bound}

Let us remove the bound on the number of bursts in the scale-up of the Attack Cleaning property.

\begin{lemma}[Attack cleaning]\label{lem:attack-clean}
  Consider the situation of Lemma~\ref{lem:weak-attack-clean}.
  The conclusion holds also if the path is not tame.
\end{lemma}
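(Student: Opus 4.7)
The plan is to lift the tameness hypothesis from Lemma~\ref{lem:weak-attack-clean} by the same cascading device that Lemma~\ref{lem:pass-clean} used to lift tameness from Lemma~\ref{lem:weak-pass-clean}: if tameness fails on the attack interval, the burst count there must be large, and the scaled-up Escape property (Lemma~\ref{lem:escape}) then forces many burst-free passes over a suitable neighbor interval, on which the already-established (tameness-free) pass cleaning Lemma~\ref{lem:pass-clean} can be invoked to enlarge the clean region. Once the clean region is enlarged the residual attack becomes tame, and Lemma~\ref{lem:weak-attack-clean} finishes the job.

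Concretely, suppose the conclusion of attack cleaning fails on the clean interval $I=\lint{x-(\CSpill+1)\Q\B}{x'+\Q\B}$ during the time interval $J$ running from the end of the work period at $t$ until the head is supposed to return to $x'-\CSpill\Q\B$. By Lemma~\ref{lem:weak-attack-clean} the path cannot be tame over $I$, so more than $\s$ bursts of $\Noise$ occur there. By the scaled-up Escape property (Lemma~\ref{lem:escape}) each burst sits in a path-segment of duration $<\escno^{*}\Tu^{*}$ that covers an interval of length $>\gamma\Q\B$ and is otherwise burst-free; in particular each such segment contains a burst-free traversal of a fixed-size neighborhood of one side of $I$. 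Since $\s$ is much larger than the $\passno^{*}+2^{\gamma/5+\CMarg+2}$ of~\eqref{eq:s-def}, at least half of these burst-free traversals happen on the same side of $I$, say the left, over a neighbor interval $I_{1}=\lint{a-(\gamma/5+\CMarg)\Q\B}{a}$ with $a=x-(\CSpill+1)\Q\B$.

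By Lemma~\ref{lem:pass-clean} the first $\passno^{*}$ of these burst-free passes force $\Int(I_{1},\CMarg\Q\B)$ to become clean for $\eta^{*}$. The attack on the colony pair $\pair{x}{x'}$ is therefore now being carried out over an enlarged clean area $I\cup\Int(I_{1},\CMarg\Q\B)$, and the residual path on this enlarged area has at most $\s$ further bursts (the ones we have not yet accounted for), so it is tame in the sense of Definition~\ref{def:tame}. Applying Lemma~\ref{lem:weak-attack-clean} to this enlarged attack produces the promised rightward extension of the clean region and contradicts the failure assumption. The symmetric argument handles an attack to the left.

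The main obstacle will be keeping the bookkeeping honest and circularity-free. Lemma~\ref{lem:pass-clean} itself was proved by iterated appeal to Lemma~\ref{lem:weak-attack-clean}, so one must check that the hypotheses of Lemma~\ref{lem:pass-clean} invoked above depend only on the existence of $\passno^{*}$ $\Noise^{*}$-free passes over $I_{1}$ and not on any attack-cleaning statement about $I$ itself; this is indeed the form of Lemma~\ref{lem:pass-clean} as stated. A secondary point is the Annotator's ability to extend the range in the annotation game: this follows by the same two-step extension used in parts~\eqref{i:weak-attack-clean.game} and~\eqref{i:weak-pass-clean.game}, removing the relevant intervals at the start of the attack and re-adding them (in one or two substeps depending on whether a burst occurs in the final pass) by the end, so no additional structural argument is required.
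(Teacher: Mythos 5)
Your high-level instinct (use Escape to extract many burst-free passes over a neighboring interval, clean it with Pass Cleaning, then finish) matches the paper's strategy, but two steps in your execution fail. First, you place the newly cleaned interval $I_{1}$ on the \emph{left} end of $I$, i.e.\ on the far side of the already-clean region, away from the disorder the attack is supposed to erase. Enlarging the clean area there does nothing toward the conclusion, which is that cleanliness extends to the \emph{right}, past $x'+\Q\B$. The paper instead uses Escape to show that the path makes many burst-free passes over an interval $J$ of length $(\gamma/2)\Q\B$ \emph{on the right of} $I$ --- beyond the disorder --- so that after Pass Cleaning the disorder is sandwiched between the old clean interval ending at $x'+\Q\B$ and a new one beginning at $x'+(\CMarg+1)\Q\B$. (Note also that for attack cleaning you cannot say ``at least half of the traversals are on one side, say the left'': the conclusion is directional, so a WLOG on the side is not available the way it was in Lemma~\ref{lem:pass-clean}.)

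Second, your closing claim that ``the residual path on this enlarged area has at most $\s$ further bursts, so it is tame'' is unjustified: a non-tame path can contain arbitrarily many bursts, and nothing in your argument caps the remainder after the first $\s$; tameness moreover requires at most one burst per visit to $I$, which you do not address. This is exactly the circularity trap you flag in your last paragraph but do not actually escape. The paper avoids it by never re-invoking the weak (tame) attack-cleaning lemma at the end: it finishes with Lemma~\ref{lem:repeated-attack} applied in the left direction from the newly cleaned interval $J$, and that lemma is already established for non-tame paths. The repeated leftward attacks from $J$ erase the remaining gap of disorder of length $\le(1+\CMarg)\Q\B$ within the next $2^{\CMarg+4}$ passes. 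Replacing your final ``now it is tame'' step with this appeal to Lemma~\ref{lem:repeated-attack}, and relocating the cleaned interval to the far side of the disorder, would repair the proof.
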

\begin{proof}
  Consider the path \( P'\subseteq P \) containing the first \( \passno^{*}+2^{\CMarg+4} \) bursts.
  The Escape property, used similarly to the proof of Lemma~\ref{lem:repeated-attack}
  implies that \( P' \) passes the
  interval \( J \) of length \( (\gamma/2)\Q\B \) on the right of \( I \) this many times.
  The Pass Cleaning property then implies that \( \Int(J,\CMarg\Q\B) \) becomes
  clean for \( \eta^{*} \) during the first \( \passno^{*} \) passes of \( P' \).
  Then Lemma~\ref{lem:repeated-attack} (applied in the left direction)
  implies that within the next \( 2^{\CMarg+4} \) right-left passes,
  the disorder of \( \eta^{*} \) of length \( \le (1+\CMarg)\Q\B \) between the old clean interval
  ending at \( x'+\Q\B \) and the new one beginning at \( x'+(\CMarg+1)\Q\B \) will be erased.
    \end{proof}

Here is the scaled-up version of the spill bound property.

\begin{lemma}[Spill bound]\label{lem:spill-bound}
Suppose that an interval \( I \) of size \( > 2\CSpill\Q\B \) is clean for \( \eta^{*} \), and
let \( P \) be a path with no faults of \( \eta^{*} \).
Then \( \Int(I,\CSpill\Q\B) \) stays clean for \( \eta^{*} \).
\end{lemma}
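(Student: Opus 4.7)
The plan is to trace how $\eta^{*}$-disorder can emerge from $\eta$-level activity under the assumption that no $\Noise^{*}$-burst occurs, and to argue that the margin $\CSpill\Q\B$ is wide enough to absorb the two sources of erosion: spillage of $\eta$-disorder inward from the boundary of $I$, and the isolated $\eta$-bursts that occur inside $I$ during $P$.

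First I would unfold the definitions. By Definition~\ref{def:scale-up}, a point $x$ at time $t$ is $\eta^{*}$-clean iff $\eta$ admits a full annotation on the space-time rectangle $\lint{x-\Q\B}{x+2\Q\B}\times\rint{t-\Tus}{t}$. Since at the start $t_{0}$ of $P$ every point of $I$ is $\eta^{*}$-clean, we already have such an annotation covering a neighborhood of $I\times\{t_{0}\}$. I would extend this annotation forward in time by playing both sides of the Annotation Game (Definition~\ref{def:annotation-game}): Annotator can always respond by Lemma~\ref{lem:healing}, and Range Extender can keep the range as a shrinking-in-time cylinder around $I$, using the permission in Lemma~\ref{lem:weak-attack-clean}\eqref{i:weak-attack-clean.game} and Lemma~\ref{lem:weak-pass-clean}\eqref{i:weak-pass-clean.game} to re-add newly-cleaned areas after a burst.

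Second I would bound the two erosion mechanisms separately. The $\Noise^{*}$-freeness means that every burst of $\eta$ is isolated in the sense of Definition~\ref{def:isolation}, so by part~\eqref{i:annotated.num-islands} of annotation any colony inside $I$ accumulates at most one island and at most two stains at any time; this is within the $(\CStain\beta,2)$-error-correcting capacity of the code, so Lemma~\ref{lem:transition}\eqref{i:transition.no-burst} says decoding still succeeds and no interior colony becomes $\eta^{*}$-disordered. For erosion from outside $I$: disorder in $\eta$ that sits initially in the complement of $I$ can spread during $P$ only through the $\eta$-level Spill Bound, which permits growth by at most $\CSpill\B\ll \CSpill\Q\B$ per clean interval edge, plus at most a single extra colony's width that might fail to be decoded if the spilt disorder happens to eat into the first colony's interior. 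With a safety margin of $\CSpill\Q\B$, no point of $\Int(I,\CSpill\Q\B)$ can lose its surrounding $(3\Q\B)$-window of annotatability.

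The main obstacle will be the careful bookkeeping in step two: showing that a colony of $\eta$ sitting inside $I$ but within one colony-width of the boundary cannot be destroyed by a combination of an outside disorder spilling inward and an isolated burst landing on the same colony during $P$. I would handle this by noting that in the absence of $\Noise^{*}$ the Attack Cleaning and Pass Cleaning for $\eta$ continually erode any $\eta$-disorder the head revisits, so after the head's first passage near a boundary colony either it has been cleaned (by Lemma~\ref{lem:safe-for-turns} and Lemma~\ref{lem:combined-heals}) or the head is no longer in position to add new damage; hence the accumulated failures on any single colony adjacent to the boundary stay within the correction budget, and the $\eta^{*}$-cleanness at distance $\geq\CSpill\Q\B$ from $\partial I$ is preserved throughout $P$.
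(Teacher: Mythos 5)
Your proposal has two genuine gaps, and the first one concerns the very mechanism the lemma is designed to control.

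Your erosion accounting in step two omits \emph{rebuilding}. You estimate the inward damage from the boundary as the $\eta$-level spill of $\CSpill\B$ plus possibly one undecodable colony, i.e.\ roughly one colony-width. But a single burst (or incoming disorder) near the left end of $I$ can trigger the rebuilding procedure, whose marking stage rewrites $4\Q$ cells on each side of its center --- an area of size $\CRebuild\Q\B=8\Q\B$ --- temporarily destroying the colony structure there. This is exactly why the paper sets $\CSpill=\CRebuild$ and why the margin in the lemma is measured in colony-widths rather than cell-widths; your bookkeeping would suggest a margin of about one colony suffices, which is false. Worse, successive interrupted rebuildings could in principle creep arbitrarily far into $I$ (this is the point of Example~\ref{xpl:heal-in-rebuild}), so the heart of the proof is to show that all rebuilding triggers are confined to a short interval $J$ at the very end of $I$ (in the paper, the largest left-end interval in which every subinterval of size $\E\B$ contains a burst, of length at most $\s\E\B\ll\Z\B$), and that bursts landing deeper inside $I$ only leave islands that are later corrected by healing and feathering without ever escalating to rebuilding. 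Your ``main obstacle'' paragraph, which worries about a colony within one colony-width of the boundary, is aimed at the wrong target.

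Second, you never remove the tameness assumption. The absence of $\Noise^{*}$ bounds the bursts per individual stay in $I$, but over the whole (arbitrarily long) path $P$ the total number of $\eta$-bursts in $I$ can exceed the tameness bound $\s$ of Definition~\ref{def:tame}. The paper's proof treats the tame case first and then disposes of the non-tame case by the same device as in Lemma~\ref{lem:attack-clean}: many bursts force, via the Escape property, many burst-free passes, and Pass Cleaning then restores cleanness. Without this step your argument only proves the statement for tame paths. Your use of the annotation game and of the error-correcting capacity for interior colonies is consistent with the paper's framework, but it does not substitute for these two missing pieces.
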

\begin{proof}
  Without loss of generality, consider exits and entries of the path on the left of \( I \).
  Let \( C_{0},C_{1}\) be the two leftmost colonies in \( I \), where
  by definition \( C_{0} \) is at the very end of \( I \).
  The Spill Bound property of \( (\eta,\Noise) \) allows a spill of size \( \CSpill\B \) into \( I \).

  \begin{enumerate}
  \item 
    Assume first that the path is tame according to Definition~\ref{def:tame}.
  Let \( J \) be the largest interval on the left end of \( I \) such that every subinterval
  of size \( \E\B \) contains a burst; then \( |J|\le \s\E\B\ll\Z\B \), where \( \s \) is defined
  in~\eqref{eq:s-def}.
  Let \( I'=I\setminus J \).

  As long as no rebuilding is triggered 
  the islands created by bursts in \( I' \) do not affect admissibility.
  Indeed, without rebuilding the path just continues the simulation in \( I' \).
  During every entrance of the path in \( I \) at most one burst can happen within
  \( \gamma\Q\B \) of the end.
  If the island left by a burst is not corrected during the present intrusion then
  it will be corrected at a next one if the path passes over it.
  If the next intrusion just deposits an island next to it without correcting then this
  has to be at a big leftward turn.
  Then due to feathering, if the head ever gets next time,
  it will pass over these islands by at least \( \F\B \), and thus correct them.

\item Assume now that rebuilding is triggered: this can only happen in \( J \).
  It may be interrupted by exit on the left or a burst.
  The interruption by a burst would be only temporary, since the heal rebuilding procedure
  of Section~\ref{sec:rebuild-heal} deals with it.
  
  Still, a rebuilding may leave an island at its right end, due to a burst, and exit on the
  left end without finishing.
  Other rebuildings may leave other islands due to new bursts,
  so later rebuildings may encounter more, but total number of bursts is bounded by \( \s \),
  not enough to prevent the frontier zone of some rebuilding from eventually continuing and 
  returning.
  As all rebuilding processes to be considered here must be triggered in \( J \) or to the left of it,
  they can affect the health of an area of size at most \( \CRebuild\Q\B \) on the left.

\item If the path is not tame,
  then we can finish just as in the proof of Lemma~\ref{lem:attack-clean}.
  \end{enumerate}
\end{proof}




Part~\eqref{i:weak-attack-clean.game} of Lemma~\ref{lem:weak-attack-clean}
and~\eqref{i:weak-pass-clean.game} of Lemma~\ref{lem:weak-pass-clean} hold, of course,
also for the corresponding lemmas~\ref{lem:attack-clean} and~\ref{lem:pass-clean}.
One can conclude from them and the other lemmas of this section the following.

\begin{lemma}\label{lem:game}
  Given a trajectory \( (\eta,\Noise) \) of machine \( M \),
  the scaled-up history \( (\eta^{*},\Noise^{*})\)
  is a trajectory of machine \( M^{*} \).
  Moreover, as the annotation game of \( (\eta,\Noise) \) is played, whenever the
  Attack Cleaning or Pass Cleaning property is applied to the
  scaled-up trajectory \( (\eta^{*},\Noise^{*}) \), the annotation can be extended to the
  range cleaned up by these properties.
  
\end{lemma}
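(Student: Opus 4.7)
The plan is to verify each of the five trajectory properties of Definition~\ref{def:traj} for $(\eta^{*}, \Noise^{*})$ by invoking the scaled-up lemmas already established in this section, and then to bundle in the annotation-extension claim. The work has been organized so that essentially each clause of Definition~\ref{def:traj} has a dedicated scaled-up lemma, so the proof is mostly assembly rather than new argument.

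First, I would handle the Transition Function property. Given a noise-free space-time rectangle for $M^{*}$, i.e.\ one disjoint from $\Noise^{*}$, the construction of $\Noise^{*}$ via the residue operation (Definition~\ref{def:sparsity}) guarantees that in the corresponding region of the underlying history $(\eta,\Noise)$ the faults of $\Noise$ form only isolated bursts. Lemma~\ref{lem:healing} then shows that Annotator can always respond in the annotation game, producing an annotation on the relevant range; Lemma~\ref{lem:transition} shows that the decoded history $(\eta^{*},\Noise^{*})$ on this range obeys $\tau^{*}$ at the end of each work period of a simulated colony-pair, including creation of new cells via the replacement mechanism of Definition~\ref{def:dictated}. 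This is exactly the Transition Function clause of Definition~\ref{def:traj} for $M^{*}$.

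For the remaining four clauses, I would simply cite: Escape from Lemma~\ref{lem:escape}, Spill Bound from Lemma~\ref{lem:spill-bound}, Attack Cleaning from Lemma~\ref{lem:attack-clean}, and Pass Cleaning from Lemma~\ref{lem:pass-clean}. In each case the hypothesis on absence of $\Noise^{*}$ is exactly the condition under which the cited lemma was proved, with the relevant constants ($\CSpill$, $\CMarg$, $\escno^{*}$, $\passno^{*}$) being those supplied by Definition~\ref{def:hier-params} and~\eqref{eq:cns.traj}.

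The second sentence of the lemma, concerning extension of the annotation, is handled by the game-versions of the weak cleaning lemmas: part~\eqref{i:weak-attack-clean.game} of Lemma~\ref{lem:weak-attack-clean} and part~\eqref{i:weak-pass-clean.game} of Lemma~\ref{lem:weak-pass-clean} already specify how the Range Extender acts on a tame segment. For the non-tame versions (Lemmas~\ref{lem:attack-clean} and~\ref{lem:pass-clean}), the proofs proceed by chaining finitely many tame attack/pass-cleaning steps via the Escape property; in each such atomic step the game-extension is directly supplied by the weak-version game clause, so the same chaining yields the annotation extension for the full cleaned range. The only bookkeeping point I expect to need care is verifying that, between consecutive atomic extensions, the safety-for-turns invariant~\eqref{i:annotated.turned} of annotation is preserved, as required by the Range Extender; but this was already built into the weak-cleaning game clauses, which leave the added range clean and safe for turns, so the invariant carries through the iteration. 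The main (and mild) obstacle is just making sure the order of extensions in the game mirrors the order of the reductions in the proofs of Lemmas~\ref{lem:attack-clean} and~\ref{lem:pass-clean}, so that at no point is the Range Extender asked to add a region that still contains $\Noise^{*}$ or uncorrected disorder outside the established islands.
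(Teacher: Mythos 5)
Your proposal is correct and follows essentially the same route as the paper, which states this lemma as a direct consequence of the preceding scale-up lemmas (Escape, Spill Bound, Attack Cleaning, Pass Cleaning, together with Lemmas~\ref{lem:healing} and~\ref{lem:transition} for the Transition Function property) and of the game clauses~\eqref{i:weak-attack-clean.game} and~\eqref{i:weak-pass-clean.game} carried over to the non-tame versions. Your write-up merely makes the assembly explicit, which is consistent with the paper's one-sentence derivation.
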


\section{Proof of the theorem}\label{sec:computation}

Above, we constructed a sequence of generalized Turing machines \( M_{1},M_{2}\dots \)
with cell sizes \( \B_{1},\B_{2},\dots \) where \( M_{k} \) simulates \( M_{k+1} \).
The sequences and dwell periods were also specified in Definition~\ref{def:hier-params}.
Here, we will use this construction to prove Theorem~\ref{thm:main}.

\subsection{Fault estimation}\label{sec:fault-estimation}

The theorem says that there is a Turing machine \( M_{1} \) that can reliably (in the defined sense)
simulate any other Turing machine \( \G \).
Before the simulation starts, the input \( x \) of \( \G \) must be encoded by a code depending on
its length \( |x| \).
We will choose a code that represents the input \( x \) as the information content of a
pair of cells of \( M_{r} \) for an appropriate \( r=r(x) \), and set their kind to \( \Booting \).
The code does not depend on the length of the computation to be performed, only on the input length.
At any stage of the computation there will be a highest level \( K \) such that a generalized Turing
machine \( M_{K} \) will be simulated, with its cells of the Booting kind.
We will denote the history of the computation by \( (\eta^{1},\Noise^{1})=(\eta,\Noise) \),
and its decodings by the recursion, as defined in Section~\ref{sec:annotation} by
\( (\eta^{k},\Noise^{k}) \) where
\( (\eta^{k+1},\Noise^{k+1})=((\eta^{k})^{*},(\Noise^{k})^{*}) \).

Recall the values of \( \Q_{k},\U_{k},\B_{k},\Tu_{k},\S_{k}\) in
Definition~\ref{def:hier-params}.
Let \(  \cH_{k}  \) be the event that no burst of level \( k \) occurs in the space-time region
\begin{align*}
 W_{k} = \bB(\pair{0}{0},\gamma\pair{\B_{k+1}}{\S_{k+1}}) .
\end{align*}
Lemma~\ref{lem:sparsity} bounds the probability of burst of level \( k \)
in any rectangle of type \( \bB(\x,\pair{\B_{k}}{\S_{k}}) \) by
\( p_{k} =\eps \cdot 2^{-1.5^{k-1}} \), giving
\begin{align*}
   \Prob(\neg\cH_{k}) = O(\U_{k}\Q_{k}p_{k}),
\end{align*}
with \( \Q_{k},\U_{k} \) in Definition~\ref{def:hier-params}.
This shows \( \sum_{k}\Prob(\neg\cH_{k}) = O(\eps) \).
From now on we assume that the event \( \bigcap_{k}\cH_{k} \) holds, since it holds
with probability \( 1-O(\eps) \).

As the computation continues (and the probability of some fault occurring
over the longer time increases),
the encoding level will be raised again and again, by the lifting
mechanism of Section~\ref{sec:booting}.
The configurations \( \eta^{k}(\cdot,0) \) are clean by definition for all levels \( k \).
Let \( \sigma_{k} \) be the (random) time when lifting to level \( k \) succeeded.
By definition \( \sigma_{r}=0 \).
All trajectory properties are lifted by the lemmas in the preceding sections.
Since \( \cH_{r+1} \) holds, the Transition Function property of trajectories
applies to \( \eta^{r} \) over the rectangle \( W_{r+1} \).
The booting and lifting steps of \( \eta^{r} \) will leave the head within the
space-time rectangle \( W_{r+1} \), so \( \sigma_{r+1}<\S_{r+1} \).
Also the lifted configuration \( \eta^{r+1}(\cdot,\sigma_{r+1}) \) is
clean and healthy as no \( r \)-level noise disturbed its creation.
By the same argument we get that the booting and lifting steps of \( \eta^{r+1} \)
will leave the head within \( W_{r+2} \), with \( \sigma_{r+2}<\S_{r+2} \),
the lifted configuration \( \eta^{r+2}(\cdot,\sigma_{r+2}) \) is clean and healthy.
And so on, this holds for all \( k \).

Suppose that the original simulated Turing machine \( \G \)
produces output \( y \) at its step \( t \) (there is no halting,
but the output in cell 0 will not change further).
There will be a smallest level \( s = s(t) \), depending only on the structure of the simulation,
such that in our history \( \eta \), for all \( k>s \) there is a time \( u \) between
\( \sigma_{k} \) and \( \sigma_{k+1} \) with \( \eta^{k}(0,u).\Output = y \),
that is the \( k \)th level simulation also outputs \( y \).
The times \( \sigma_{k} \) are random, but we will compute below an upper
bound \( f(t) \) on \( \sigma_{s(t)} \) that follows from the earlier assumptions.
Take an arbitrary \( t'>f(t) \).
For each \( k \), let
\( \cH'_{k}(t') \) be the event that no burst of level \( k \) appears in
\begin{align*}
 W'_{k} = \bB(\pair{0}{t'},\gamma\pair{\B_{k+1}}{\S_{k+1}}) .
\end{align*}
Just as above for \( \cH_{k} \), we can
assume that the event \( \bigcap_{k}\cH'_{k} \) holds, since it holds
with probability \( 1-O(\eps) \).
For each \( k \) let \( \sigma'_{k} \) be the (random) last time before \( t' \)
when the head of the simulated machine \( M_{k} \) reaches position 0.
Let \( s' \) be the largest \( k\ge s \) with \( \sigma_{k}<\sigma'_{k} \).

Then \( \cH'_{s'} \) implies   \( \eta^{s'}(0,\sigma'_{s'}).\Output=y \), that is
the output of the simulated computation at time \( \sigma'_{s'} \) on level \( s' \) is \( y \).
Now we will use Lemma~\ref{lem:game}, saying that the areas known to be clean
can also be annotated.
Then by part~\eqref{i:transition.no-burst} of  Lemma~\ref{lem:transition} and
by part~\ref{i:trickle-down} (the trickle-down) of the simulation procedure as described
in Section~\ref{sec:simulation-phase}, the absence of faults of level \( s'-1 \) while this
procedure operates, as implied by condition \( \cH_{s'-1} \),
implies \( \eta^{s'-1}(0,\sigma'_{s'-1}).\Output=y \).
Repeating the argument for all \( k<s' \)
we find \( \eta^{k}(0,\sigma'_{k}).\Output=y \),
so finally \( \eta(0,t').\Output=y \), with probability \( 1-O(\eps) \).

 \subsection{Space- and time-redundancy}\label{sec:redundancy}

 Even with the simple tripling error-correcting code, there is a constant \( \lambda>1 \) such that
 a colony of level \( k \) uses at most
 \( \lambda \) times more space than the amount of information contained in the
 cell of level \( k+1 \) that it simulates.
 Therefore if \( k \) is the level that needs to be simulated before an output of \( G \) can be reached
 then the space used at that time is at most \( \lambda^{k} \)
 times the space needed to just store the information.
 If \( G \) produces output at time \( t \) then its space need is bounded by \( t \),
 so the space need of the reliable simulation is at most \( \lambda^{k}t \).
 Suppose this is within a pair of \( k \)-level cells just created by booting.
 The size of cells of level \( k \) is, according to Definition~\ref{def:hier-params},
 \begin{align*}
   \Q_{1}\Q_{2}\dotsm\Q_{k-1}=c_{\Q}^{k}2^{1 + 1.2 + \dots + 1.2^{k-1}}=c_{\Q}^{k}2^{5\cdot 1.2^{k}},
 \end{align*}
 so they can simulate \( t \) steps of \( G \) if
 \( \lambda^{k}t = c_{\Q}^{k}2^{5\cdot 1.2^{k}} \).
 So \( k \) is about \( d\log\log t \) with \( d\approx 1/\log 1.2 \).
 This gives a bound 
 \begin{align*}
 \lambda^{k}\approx \lambda^{d\log\log t}=(\log t)^{\alpha}  
\end{align*}
on the space redundancy factor, for some \( \alpha>0 \).

The time redundancy can be estimated using the conclusions of Section~\ref{sec:length-work-period}.
It shows that the simulation on a given level of the Turing machine \( G \) incurs a redundancy that
is a multiplier
\begin{align*}
 O(\F\Z^{2}) = O(\passno^{8 + 4\rho})=O(\passno^{9})
\end{align*}
if \( \rho \) is small.
Recall \( \passno=5 k + O(1) \),
Multiplying these on all levels we get, for some \( \mu \), that the time redundancy on level \( k \) is
\begin{align*}
   \mu^{k}(k!)^{9}=2^{9k\log k+O(k)}.
\end{align*}
Again, if \( k=d\log\log t \) then this is less than
\begin{align*}
 (\log t)^{10\log\log\log t}.
 \end{align*}

\begin{remark}
  There is a mechanism more economical on storage, used in~\cite{GacsSorg01},
  with narrow \( \Work \) and \( \Hold[j] \) tracks but with some added time complexity.
  This allows a space redundancy factor \( 1+\delta_{k} \) with \( \prod_{k}(1+\delta_{k})<\infty \),
  yielding a constant space redundancy factor for the whole hierarchy.
\end{remark}

 \section{Discussion}

\paragraph{A weaker but much simpler solution}
If our Turing machine could just simulate a 1-dimensional
fault-tolerant cellular automaton, it would become
fault-tolerant, though compared to a fault-free Turing machine computation of length \( t \),
the slow-down could be quadratic.
(Such a solution would be only \emph{relatively} simpler, being a reduction to a complex, existing one.)
We did not find an easy reduction by just having the simulating
Turing machine sweep larger and larger
areas of the tape, due to the possibility of the head being trapped too long in some large disorder created by
the group of faults.
Trapping can be avoided, however, 
\emph{provided that the length \( t \) of the computation is known in advance}.
The cellular automaton \( C \) can have length \( t \) , and we could define
a ``kind of'' Turing machine \( T \) with a \emph{circular tape} of size \( t \) simulating \( C \).
The transition function of \( T \) would move the head to the right in every step
(with any backward movement just due to faults).

 \paragraph{Decreasing the space redundancy}
 We don't know how to reduce the time redundancy significantly, but
 the space redundancy can be apparently reduced to a multiplicative constant.
 Following Example~\ref{xpl:Reed-Solomon}, it is possible to
 choose an error-correcting code with redundancy that is only a factor \( \delta_{k} \)
 with \( 
   \prod_{k=1}^{\infty}(1-\delta_{k})>1/2
 \).
 This also requires a more elaborate organization of the computation phase described in
 Section~\ref{sec:simulation-phase} since
 the total width of all other tracks must be only some \( \delta_{k} \) times the width
 of the \( \Info \) track.
 For cellular automata, such a mechanism was described in~\cite{GacsSorg01}.

 \paragraph{Other models}
 There is probably a number of models worth exploring with more parallelism than Turing machines, but less
 than cellular automata: for example having some kind of restriction on the number of active units.
 On the other hand, a one-tape Turing machine seems to be the simplest computation model for which a reasonable
 reliability question can be posed, in the framework of transient, non-conspiring faults of constant-bounded
 probability.

 A simpler, universal computation model is the so-called \df{counter machine}.
 This has some constant number of nonnegative integer counters (at least two for universality), and an internal state.
 Each transition can change each counter by \( \pm 1 \), depends on both the internal state
 and on the set of those counters with zero value.
 A fault can change the state and can change the value of any counter by \( \pm 1 \).
 It does not seem possible to perform reliable computation on such a machine in any reasonable sense.
 The statement of such a result
 cannot be too simple-minded, since there is \emph{some} nontrivial task that such a machine can
 do: with \( 2n \) counters, it can remember almost \( 2n \) bits of information with large probability forever.
 Indeed, let us start the machine with \( n \) counters having the value 0, and the other \( n \) having some
 large value (depending on the fault probability \( \eps \)).
 The machine will remember forever (with large probability) which set of counters was 0.
 It works as follows (in the absence of a fault):
 at any one time, if exactly \( n \) values have value 0, then increase each nonzero counter by 1.
 Otherwise decrease each nonzero counter by 1.
 
 This sort of computation seems close to the limit of what counter machines can do reliably, but
 how to express and prove this?
 
\bibliographystyle{plain}
\bibliography{reli,gacs-publ}

\section{Appendix}\label{sec:appendix}

The examples below serve to motivate some complexities of the construction.


\begin{example}[Need for feathering]\label{xpl:need-feather}
  Some big noise can create a number of intervals \( I_{1},I_{2},\dots,I_{n} \)
  consisting of colonies of machine \( M_{1} \), each interval with its own simulated head,
  where the neighboring intervals are in no relation to each other.
  When the head is about to return from the end of \( I_{k} \)
  (never even to zig beyond it),
  a burst can carry it over to \( I_{k+1} \) where
  the situation may be symmetric: it will continue the simulation that \( I_{k+1} \) is performing.
  (The rightmost colony of \( I_{k} \) and the leftmost colony of \( I_{k+1} \) need not be complete:
  what matters is only that the simulation in \( I_{k} \) would not bring the head beyond its right end,
  and the simulation in \( I_{k+1} \) would not bring the head beyond its left end.)

  The head can be similarly captured to \( I_{k+2} \), then much later back from \( I_{k+1} \) to \( I_{k} \),
  and so on.
  This way the restoration of structure in \( M_{2} \) may be delayed too long.
\end{example}

\begin{example}[Two slides over disorder]\label{xpl:two-slides}
  This example shows the possibility for the head to slide twice over disorder without cleaning it.
  
Consider two levels of simulation as outlined in Section~\ref{sec:hier}: 
machine \( M_{1} \) simulates \( M_{2} \) which simulates \( M_{3} \).
The tape of \( M_{1} \) is subdivided into colonies of size \( \Q_{1} \).
A burst on level 1 has size \( O(1) \), while a burst on level 2 has size \( O(\Q_{1}) \).

Suppose that \( M_{1} \) is performing a simulation in colony \( C_{0} \).
An earlier higher-level burst may have created a large interval \( D \) of disorder
on the right of \( C_{0} \), even reaching into \( C_{0} \).
For the moment, let \( C_{0} \) be called a \df{victim} colony.
Assume that the left edge of \( D \) represents the last stage of a transfer operation to the right neighbor 
colony \( C_{0}+Q_{1} \).
When the head, while performing its work in \( C_{0} \), moves close to its right end, a
lower-level burst may carry it over into \( D \).
There it will be ``captured'', and continue the (unintended) right transfer operation.
This can carry the head, over several successful colony simulations in \( D \), to some
victim colony \( C_{1} \) on the right from which it will be captured to the right similarly.
This can continue over new and new victim colonies \( C_{i} \) (with enough space between them to
allow for new faults to occur), all the way inside the disorder \( D \).
So the \( M_{2} \) cells in \( D \) will fail to simulate \( M_{3} \).

After a while the head may return to the left in \( D \)
(performing the simulations in its colonies).
When it gets at the right end of a victim colony \( C_{i} \), a burst might move it back there.
There is a case when \( C_{i} \) now can just continue its simulation and then send the head
further left: when before the head was captured on its right,
it was in the last stage of simulating a left turn of the head of machine \( M_{2} \).

In summary, a high-level burst
can create a disordered area \( D \) which can capture the head and on which the head can slide
forward and back without recreating any level of organization beyond the second one.
\end{example}

The following example extends the above, showing the possibility of many
levels of malicious (dis-)organization.

\begin{example}[Many slides over disorder]\label{xpl:unbounded}
  Let us describe a certain ``organization'' of a disordered area in which an unbounded number of passes
  may be required to restore order.
For some \( n<0 \), let the cells of \( M_{1} \) at positions
\( x_{-\Q_{1}},\dots,x_{n} \), where \( x_{i+1}=x_{i}+\B_{1} \),
represent part of a healthy colony \( C(x_{-\Q_{1}}) \) starting at \( x_{-\Q_{1}} \), where \( x_{n} \)
is the rightmost cell of \( C(x_{-\Q_{1}}) \)
to which the head would come in the last sweep before
the simulation will move to the \emph{left} neighbor colony \( C(x_{-2\Q_{1}}) \).
Let them be followed by cells \( x_{n+1},\dots, x_{\Q_{1}-1},\dots\)
which represent the last sweep of a transfer operation to the \emph{right} neighbor colony \( C(x_{0}) \).
If the head is in cell \( x_{n} \), a burst can transfer it to \( x_{n+1} \).
The cell state of \( M_{2} \) simulated by \( C(x_{-\Q_{1}}) \) need to be in \emph{no relation} to 
the cell state of \( M_{2} \) simulated by \( C(x_{0}) \).
This was a capture of the head by a burst of \( M_{1} \) across the point 0, to the right.

We can repeat the capture scenario, say around points \( i \Q_{1}\Q_{2} \) for \( i=1,2,\dots \),
and this way cells of \( M_{3} \) simulated by \( M_{2} \) (simulated by \( M_{1} \))
can be defined arbitrarily, with no consistency needed between any two neighbors.
(We did not write \( i \Q_{1} \) just in case bursts are not allowed in neighboring colonies.)
In particular, we can define them to implement a \emph{leftward} capture scenario
via level 3 bursts at points \( i \Q_{1}\Q_{2}\Q_{3}\Q_{4} \), allowing to simulate arbitrary cells of \( M_{5} \)
with no consistency requirement between neighbors.
So \( M_{5} \) could again implement a rightward capture scenario, and so on.
In summary, a malicious arrangement of disorder and noise allows \( k \) passes
after which the level of organization is still limited to level \( 2 k + 1 \).
\end{example}

\begin{example}[Three islands]\label{xpl:3-islands}
  Suppose that the head has arrived at some colony-pair \( C_{0},C_{1} \) from the left,
goes through a work period and then passes to the right.
In this case, if no new noise occurs then we expect that
all islands found in \( C_{0},C_{1} \) will be eliminated by the healing procedure.
A new island \( I_{1} \) can be deposited in the last sweep.

Consider the next time (possibly much later), when the head arrives (from the right).
If it later continues to the left, then the situation is similar to the above.
Island \( I_{1} \) will be eliminated, but a new one may be deposited.
But what if the head arrived to the colony-pair \( C_{1},C_{2} \) and
turns back right at the end of the work period?
If \( I_{1} \) is not near the right end of \( C_{0} \), then
the head may never reach it to eliminate it; moreover, by the
feathering way of making turns,
it may add a new island \( I_{2} \) near on the right end of \( C_{0} \).

When the head returns a third time (possibly much later), 
from the right, feathering on the level of the simulated machine will
cause it to leave on the left.
Islands \( I_{1},I_{2} \) will be eliminated but a new island
\( I_{3} \) may be created by a new burst before, after or during the elimination.
So the healing procedure must count with possibly three islands
possibly in close vicinity to each other.
But at least one of these, namely \( I_{2} \), is near the end of \( C_{0} \), not in
the extended interior \( \Int(C_{0},\PadLen-\F\B) \).
\end{example}

\begin{example}[No healing in rebuilding]\label{xpl:heal-in-rebuild}
  This example shows the need for some healing of the rebuilding process itself.
  In it, restarting a rebuilding process on the occasion of every alarm
  prevents the scale-up of the Spill Bound property from \( \eta \) to \( \eta^{*} \).
  This property supposes that 
  an interval \( I=\lint{a}{b} \) of size \( > 2\CSpill\Q\B \) is clean for \( \eta^{*} \) and considers
  a path \( P \) be a path that has no faults of \( \eta^{*} \).
  It concludes that \( \Int(I,\CSpill\Q\B) \) stays clean for \( \eta^{*} \).
  We can exploit the fact that \( P \) has no faults of \( \eta^{*} \)
  only by the implication that during every time interval
  that the path spends in \( I \), it can have at most one burst.
  Let \( l = \CRebuild\Q\B \).

  Suppose that the path enters \( I \) on the right during a rebuilding process that
  (seems) started just on the outside of \( I \).  
  The process marks the interval \( \lint{a_{1}}{b} \) were \( b-a_{1}\approx l \).
  (We don't see what it does outside \( I \), on the right of \( b \)).
  Somewhere near \( a_{1} \), a burst causes alarm, restarting a rebuilding process which
  marks the interval \( \lint{a_{2}}{b} \) where \( a_{1}-a_{2}\approx l \),
  but the head leaves on the right of \( I \) before rebuilding finishes.

  Later, the head returns to continue the rebuild process, but a burst at position
  \( a'_{1}=a_{1}+l/2 \) causes alarm and triggers a new rebuilding process.
  This finishes, making the interval \( \lint{a'_{2}}{b} \) healthy,
  where \( a'_{2}\approx a_{2}+l/2 \).
  The interval \( \lint{a_{2}}{a'_{2}} \), of size \( \approx l/2 \), is still marked for rebuilding.

  Now an iterative process starts, creating marked intervals \( \lint{a_{i}}{a'_{i}} \),
  of size \( \approx l/2 \), where \( a'_{i}\approx a_{i-1} \).
  This way the disorder of \( \eta^{*} \) in \( I \) will not be confined to a subinterval
  at the right end of \( I \) as the Spill Bound requires.
  
  Suppose that we have a marked interval \( \lint{a_{i}}{a'_{i}} \) of size \( \approx l/2 \)
  such that \( \lint{a'_{i}}{b} \) is healthy.
  The head enters in normal mode, continuing a simulation until it reaches \( a'_{i} \).
  Then the marked cells it encounters trigger new rebuilding, which marks an interval
  \( \lint{a_{i+1}}{b'} \) where \( a_{i+1}\approx a_{i}-l/2 \), \( b'\approx a_{i+1}+2 l \).
  The new rebuilding process is interrupted by a burst at \( a'_{i}+l/2 \), starting
  a new rebuilding.
  This rebuilding finishes, leaving the interval \( \lint{a_{i+1}}{a'_{i+1}} \) marked
  where \( a'_{i+1}-a_{i}\approx l/2 \), and the the interval \( \lint{a_{i+1}}{b} \) healthy.
\end{example}

\end{document}